\keywords{Relational calculus, relative safety, safe range, query translation}
\DeclareSymbolFont{stmry}{U}{stmry}{m}{n}
\DeclareMathDelimiter\llbracket{\mathopen}{stmry}{"4A}
                                          {stmry}{"71}
\DeclareMathDelimiter\rrbracket{\mathclose}{stmry}{"4B}
                                           {stmry}{"79}
\DeclareSymbolFont{symbolsC}{U}{pxsyc}{m}{n}
\DeclareMathSymbol{\coloneqq}{\mathrel}{symbolsC}{"42}
\renewcommand{\SetKwInOut}[2]{%
    \sbox\algocf@inoutbox{\KwSty{#2}\algocf@typo:}%
    \expandafter\ifx\csname InOutSizeDefined\endcsname\relax
    \newcommand\InOutSizeDefined{}\setlength{\inoutsize}{\wd\algocf@inoutbox}%
    \sbox\algocf@inoutbox{\parbox[t]{\inoutsize}{\KwSty{#2\algocf@typo:}\hfill}~}\setlength{\inoutindent}{\wd\algocf@inoutbox}%
    \else
    \ifdim\wd\algocf@inoutbox>\inoutsize%
    \setlength{\inoutsize}{\wd\algocf@inoutbox}%
    \sbox\algocf@inoutbox{\parbox[t]{\inoutsize}{\KwSty{#2\algocf@typo:}\hfill}~}\setlength{\inoutindent}{\wd\algocf@inoutbox}%
    \fi%
    \fi
    \algocf@newcommand{#1}[1]{%
        \ifthenelse{\boolean{algocf@inoutnumbered}}{\relax}{\everypar={\relax}}%
        {\let\\\algocf@newinout\hangindent=\inoutindent\hangafter=1\parbox[t]{\inoutsize}{\KwSty{#2\algocf@typo:}\hfill}~##1\par}%
        \algocf@linesnumbered
}}%
\newcommand{\aggaux}[3]{[\mathsf{CNT}\,#1.\,#3](#2)}
\newcommand{\aggbound}{\varlist}
\newcommand{\aggresult}{c}
\newcommand{\aggresulta}{c_1}
\newcommand{\aggresultb}{c_2}
\newcommand{\aggresultc}{c'}
\newcommand{\agggroup}{M}
\newcommand{\aggqry}{\aggaux{\aggbound}{\aggresult}{\exqryvarlist}}
\newcommand{\aggqryrestr}{\aggaux{\aggbound}{\aggresultc}%
{\bigvee_{\qryrestr\in\restr} (\exqryvarlist\land\qryrestr)}}
\newcommand{\aggrestrdist}{\bigwedge_{\qryrestr\in\restr}%
\neg(\exists\varlist.\,\exqryvarlist\land\qryrestr)}
\newcommand{\aggrestr}{\bigwedge_{\qryrestr\in\restr} \neg\qryrestr}
\newcommand{\sigsymb}{\sigma}
\newcommand{\consts}{\mathcal{C}}
\newcommand{\varset}{\mathcal{V}}
\newcommand{\predsymbs}{\mathcal{R}}
\newcommand{\auxpredsymb}{\mathsf{A}}
\newcommand{\auxpredtuple}{\mathsf{t}}
\newcommand{\auxvar}{t}
\newcommand{\predsymb}{r}
\newcommand{\termsymb}{t}
\newcommand{\cstsymb}{\mathsf{c}}
\newcommand{\cstsymba}{\mathsf{c}'}
\newcommand{\valsymb}{\alpha}
\newcommand{\valsymba}{\alpha'}
\newcommand{\eqs}{\mathcal{E}}
\newcommand{\apreds}{\mathcal{A}}
\newcommand{\apredsa}{\mathcal{A}_1}
\newcommand{\apredsb}{\mathcal{A}_2}
\newcommand{\qpreds}{\mathcal{G}}
\newcommand{\qpredsa}{\mathcal{G}_1}
\newcommand{\qpredsb}{\mathcal{G}_2}
\newcommand{\qpredsy}{\mathcal{G}_{\vary}}
\newcommand{\cpreds}{\mathcal{G}}
\newcommand{\cpredsa}{\mathcal{G}_1}
\newcommand{\cpredsb}{\mathcal{G}_2}
\newcommand{\domval}{d}
\newcommand{\domvalc}{d'}
\newcommand{\domvallist}{\vec{\domval}}
\newcommand{\dom}{\mathcal{D}}
\newcommand{\doma}{\mathcal{D}_1}
\newcommand{\domb}{\mathcal{D}_2}
\newcommand{\str}{\mathcal{S}}
\newcommand{\stra}{\mathcal{S}_1}
\newcommand{\strb}{\mathcal{S}_2}
\newcommand{\strtrain}{\mathcal{T}}
\newcommand{\listlength}{k}
\newcommand{\len}[1]{\left|#1\right|}
\newcommand{\card}[1]{\left|#1\right|}
\newcommand{\iter}{i}
\newcommand{\idx}{j}
\newcommand{\loopbounddef}[2]{\{\fixqry\in #1\mid #2\in\gens{\fixqry}\}}
\newcommand{\loopbound}[2]{\mathsf{fixbound}(#1, #2)}
\newcommand{\freeloopadef}[1]{\{(\fixqry, \eqconjqry)\in #1\mid\gens{\fixqry}\neq\emptyset\}}
\newcommand{\freeloopa}[1]{\mathsf{fixfree}(#1)}
\newcommand{\freeloopbdef}[2]{\{(\qryinqfin, \eqconjqry)\in #1\mid\hanging{\qryinqfin}{\eqconjqry}\neq\emptyset\lor\allowbreak\fv{\qryinqfin}\cup\fv{\eqconjqry}\neq\fv{#2}\}}
\newcommand{\freeloopb}[2]{\mathsf{inf}(#1, #2)}
\newcommand{\guard}{guard}
\newcommand{\guardqry}[1]{\mathsf{guard}(#1)}
\newcommand{\varle}[1]{\lesssim_{#1}}
\newcommand{\infequivsign}{\stackrel{\infty}{\equiv}}
\newcommand{\infequiv}[2]{{#1}\infequivsign {#2}}
\newcommand{\nats}{\mathbb{N}}
\newcommand{\dgstrat}{\gamma}
\newcommand{\diffop}{-}
\newcommand{\diff}[2]{#1\diffop#2}
\newcommand{\gdiffop}{\triangleright}
\newcommand{\gdiff}[2]{#1\gdiffop#2}
\newcommand{\tupover}[2]{#1(#2)}
\newcommand{\evaltup}[2]{#1[#2]}
\newcommand{\citeauthorVanGT}{Van Gelder and Topor}
\newcommand{\citeauthorAlice}{Abiteboul et al.}
\newcommand{\citeauthorAgg}{Clau{\ss}en et al.}
\newcommand{\citeauthorIeee}{Chomicki and Toman}
\newcommand{\citeauthorMWJ}{Ngo et al.}
\newcommand{\citeauthorAil}{Ailamazyan et al.}
\newcommand{\citeauthorEMHJ}{Escobar-Molano et al.}
\newcommand{\isneg}[1]{\mathsf{neg}(#1)}
\newcommand{\iseq}[1]{\mathsf{eq}(#1)}
\newcommand{\proj}[2]{\tilde{\exists}{#1}.\,{#2}}
\newcommand{\adom}[1]{\mathsf{adom}({#1})}
\newcommand{\adomd}[2]{\mathsf{adom}^{#2}({#1})}
\newcommand{\adomqry}[2]{\mathsf{AD}({#1},\,{#2})}
\newcommand{\arity}{\iota}
\newcommand{\sz}[1]{\mathsf{m}(#1)}
\newcommand{\subs}[1]{\mathsf{sub}(#1)}
\newcommand{\cntsub}[1]{\left|\mathsf{sub}(#1)\right|}
\newcommand{\termphi}[1]{\mathsf{eqneg}({#1})}
\newcommand{\cost}[2]{\mathsf{cost}^{#2}(#1)}
\newcommand{\fv}[1]{\mathsf{fv}(#1)}
\newcommand{\fvseq}[1]{\vec{\mathsf{fv}}(#1)}
\newcommand{\free}[2]{{#1}\in\fv{#2}}
\newcommand{\absent}[2]{{#1}\notin\fv{#2}}
\newcommand{\av}[1]{\mathsf{av}({#1})}
\newcommand{\sattup}[1]{\left\llbracket{#1}\right\rrbracket}
\newcommand{\sattupd}[2]{\sattup{#1}^{#2}}
\newcommand{\flconj}[1]{\mathsf{flat}^{\land}(#1)}
\newcommand{\fldisj}[1]{\mathsf{flat}^{\lor}(#1)}
\newcommand{\flop}[1]{\mathsf{flat}^{\oplus}(#1)}
\newcommand{\edb}[1]{\mathsf{ap}(#1)}
\newcommand{\colatoms}[1]{\mathsf{aps}({#1})}
\newcommand{\transeqs}[1]{\mathsf{eqs}^*({#1})}
\newcommand{\closeatoms}[1]{\mathsf{qps}({#1})}
\newcommand{\closeatomseq}[1]{\overline{\mathsf{qps}}({#1})}
\newcommand{\colpreds}[1]{\mathsf{qps}({#1})}
\newcommand{\colpredsqry}[1]{\mathsf{qps}^{\lor}({#1})}
\newcommand{\coleqs}[2]{\mathsf{eqs}({#1},{#2})}
\newcommand{\cp}[1]{\mathsf{cp}(#1)}
\newcommand{\sconj}[1]{\mathsf{sort}^{\land}(#1)}
\newcommand{\cntname}{\mathsf{optcnt}}
\newcommand{\cnt}[1]{\cntname(#1)}
\newcommand{\ranfra}[1]{\mathsf{ranf2ra}(#1)}
\newcommand{\rasql}[1]{\mathsf{ra2sql}(#1)}
\newcommand{\ranfsqlname}{\mathsf{ranf2sql}}
\newcommand{\ranfsql}[1]{\ranfsqlname(#1)}
\newcommand{\allowedname}{\mathsf{rb}}
\newcommand{\allowed}[1]{\allowedname(#1)}
\newcommand{\splitq}[1]{\mathsf{split}(#1)}
\newcommand{\rw}[1]{\mathsf{rw}({#1})}
\newcommand{\pushnot}[1]{\mathsf{srnf}(#1)}
\newcommand{\enf}[1]{\mathsf{enf}(#1)}
\newcommand{\SRTORANFName}{\mathsf{sr2ranf}}
\newcommand{\allowtosafe}[2]{\SRTORANFName({#1},{#2})}
\newcommand{\allowtosafeqry}[1]{\SRTORANFName({#1})}
\newcommand{\safe}[1]{\mathsf{ranf}(#1)}
\newcommand{\dgname}{\mathsf{dg}}
\newcommand{\dg}[5]{\dgname(#1, #2, #3, #4, #5)}
\newcommand{\dgeqsname}{\mathsf{dg}^\approx}
\newcommand{\dgeqs}[2]{\dgeqsname(#1, #2)}
\newcommand{\poseqs}{\varset^+}
\newcommand{\poseqsa}{\varset_1^+}
\newcommand{\poseqsb}{\varset_2^+}
\newcommand{\negeqs}{\varset^-}
\newcommand{\negeqsa}{\varset_1^-}
\newcommand{\negeqsb}{\varset_2^-}
\newcommand{\vareqsa}{\varset^1}
\newcommand{\vareqsb}{\varset^2}
\newcommand{\x}{1.1}
\newcommand{\z}{2.8}
\newcommand{\w}{0.1}
\newcommand{\h}{0.1}
\newcommand{\cspace}{\;\;}
\newcommand{\highlight}[1]{\setlength{\fboxsep}{0pt}\colorbox{black!15}{$#1$}}
\newcommand{\flleftaux}[3]{\multicolumn{2}{@{}l@{\;}}{#1}&\text{if}&#2#3}
\newcommand{\flleft}[2]{\flleftaux{#1}{#2}{;}}
\newcommand{\flright}{\flleft}
\newcommand{\rulenocond}[1]{\multicolumn{2}{@{}l@{}}{#1;}}
\newcommand{\ruleif}[2]{#1&\text{if}&#2;}
\newcommand{\ruleifdot}[2]{#1&\text{if}&#2.}
\newcommand{\ruleiff}[2]{#1&\text{iff}&#2;}
\newcommand{\rulenocondgen}[1]{#1;}
\newcommand{\ruleifgen}[2]{\multicolumn{3}{@{}l@{}}{#1}\\\multicolumn{3}{@{}l@{}}{\quad\text{if }#2;}}
\newcommand{\ruleifgendot}[2]{\multicolumn{3}{@{}l@{}}{#1}\\\multicolumn{3}{@{}l@{}}{\quad\text{if }#2.}}
\newcommand{\bigor}[1]{{\textstyle\bigvee_{#1}}}
\newcommand{\tildeo}[1]{\tilde{\mathcal{O}}\left(#1\right)}
\newcommand{\emptyseq}{[]}
\newcommand{\eq}{\coloneqq}
\newcommand{\loc}{1000}
\newcommand{\evalsz}{14}
\newcommand{\evalnfv}{2}
\newcommand{\evalmaxn}{4}
\newcommand{\traindg}{2}
\newcommand{\togolf}{300s}
\newcommand{\tosusp}{300s}
\newcommand{\prodparam}{n}
\newcommand{\usrparam}{m}
\newcommand{\eparam}{n}
\newcommand{\smallexp}{\textsc{Small}}
\newcommand{\mediumexp}{\textsc{Medium}}
\newcommand{\largeexp}{\textsc{Large}}
\newcommand{\infexp}{\textsc{Infinite}}
\newcommand{\realexp}{\textsc{Real}}
\newcommand{\psql}{\ensuremath{\mathsf{PostgreSQL}}}
\newcommand{\msql}{\ensuremath{\mathsf{MySQL}}}
\newcommand{\psqlsub}{\ensuremath{^{\mathsf{P}}}}
\newcommand{\msqlsub}{\ensuremath{^{\mathsf{M}}}}
\newcommand{\tool}{\ensuremath{\mathsf{RC2SQL}}}
\newcommand{\toolnonopt}{\ensuremath{\mathsf{RC2SQL}^{-}}}
\newcommand{\vgtool}{\ensuremath{\mathsf{VGT}}}
\newcommand{\vgtoolnonopt}{\ensuremath{\mathsf{VGT}^{-}}}
\newcommand{\vgtna}{\ensuremath{-}}
\newcommand{\ddd}{\ensuremath{\mathsf{DDD}}}
\newcommand{\ldd}{\ensuremath{\mathsf{LDD}}}
\newcommand{\mpreg}{\ensuremath{\mathsf{MonPoly}^\mathsf{REG}}}
\newcommand{\radb}{\ensuremath{\mathsf{radb}}}
\newcommand{\trtime}{Translation time}
\newcommand{\strategya}{\text{Infinite results ($\dgstrat=0$)}}
\newcommand{\strategyb}{\text{Finite results ($\dgstrat=1$)}}
\newcommand{\predbrand}{\mathsf{B}}
\newcommand{\predprod}{\mathsf{P}}
\newcommand{\predscore}{\mathsf{S}}
\newcommand{\predtext}{\mathsf{T}}
\newcommand{\predexa}{\mathsf{P}_3} 
\newcommand{\predexb}{\mathsf{P}_2} 
\newcommand{\predexc}{\mathsf{P}_1} 
\newcommand{\brandname}{\mathit{brand}}
\newcommand{\username}{\mathit{user}}
\newcommand{\scorename}{\mathit{score}}
\newcommand{\varbrand}{b}
\newcommand{\varuser}{u}
\newcommand{\varscore}{s}
\newcommand{\varprod}{p}
\newcommand{\vartext}{t}
\newcommand{\varx}{x}
\newcommand{\varxs}{\vec{\varx}}
\newcommand{\vary}{y}
\newcommand{\varz}{z}
\newcommand{\varw}{u}
\newcommand{\vart}{v}
\newcommand{\varlistname}{v}
\newcommand{\varlist}{\vec{\varlistname}}
\newcommand{\varlistbname}{w}
\newcommand{\varlistb}{\vec{\varlistbname}}
\newcommand{\qry}{Q}
\newcommand{\subqry}{Q'}
\newcommand{\subqrya}{Q_1'}
\newcommand{\subqryb}{Q_2'}
\newcommand{\subqryc}{Q''}
\newcommand{\qrya}{Q_1}
\newcommand{\qryb}{Q_2}
\newcommand{\qryx}{Q_{\varx}}
\newcommand{\qryy}{Q_{\vary}}
\newcommand{\qryxy}{Q_{\varx\vary}}
\newcommand{\qryvarlista}{Q_{\varlist}}
\newcommand{\qryvarlistb}{Q_{\varlist}'}
\newcommand{\qryconj}{Q^\land}
\newcommand{\exqryx}{Q_{\varx}}
\newcommand{\exqryxc}{Q_{\varx}'}
\newcommand{\exqryy}{Q_{\vary}}
\newcommand{\exqryz}{Q_{\varz}}
\newcommand{\exqryvarlist}{Q_{\varlist}}
\newcommand{\exqryvarlistb}{Q_{\varlistb}}
\newcommand{\exsafeqryvarlistb}{\hat{Q}_{\varlistb}}
\newcommand{\qryrb}{\tilde{Q}}
\newcommand{\qryrbx}{\tilde{Q}'}
\newcommand{\qrysrnf}{Q_{\mathit{srnf}}}
\newcommand{\qryenf}{Q_{\mathit{enf}}}
\newcommand{\eqqry}{Q^\approx}
\newcommand{\eqconjqry}{E}
\newcommand{\eqconjqryc}{E'}
\newcommand{\fixqry}{Q_{\mathit{fix}}}
\newcommand{\fixqryc}{\fixqry'}
\newcommand{\qryinqfin}{Q_{\mathit{f}}}
\newcommand{\qryinqinf}{Q_{\mathit{i}}}
\newcommand{\qryfin}{Q_{\mathit{fin}}}
\newcommand{\qryfina}{Q_{\mathit{fin}}'}
\newcommand{\qryinf}{Q_{\mathit{inf}}}
\newcommand{\qryinfa}{Q_{\mathit{inf}}'}
\newcommand{\qsusp}{Q^{\mathit{susp}}}
\newcommand{\qsuspa}{{Q^{\mathit{susp}}_{\mathit{text}}}}
\newcommand{\qsuspusr}{Q^{\mathit{susp}}_{\mathit{user}}}
\newcommand{\safefin}{\hat{Q}_{\mathit{fin}}}
\newcommand{\safeinf}{\hat{Q}_{\mathit{inf}}}
\newcommand{\safeqry}{\hat{Q}}
\newcommand{\safesubqry}{\hat{Q}'}
\newcommand{\oneatom}{Q_{\mathit{ap}}}
\newcommand{\projoneatom}{Q_{\mathit{qp}}}
\newcommand{\cpreda}{Q_1}
\newcommand{\cpredb}{Q_2}
\newcommand{\cpredc}{Q_3}
\newcommand{\qex}{Q_{\mathit{ex}}}
\newcommand{\qryvgt}{Q_{\mathit{vgt}}}
\newcommand{\gens}[1]{\mathsf{nongens}({#1})}
\newcommand{\eclass}[1]{\mathsf{classes}(#1)}
\newcommand{\hanging}[2]{\mathsf{disjointvars}(#1, #2)}
\newcommand{\varseta}{V}
\newcommand{\varsetb}{V'}
\newcommand{\varseq}{\varlist}
\newcommand{\disjs}{\mathcal{Q}}
\newcommand{\poss}{\mathcal{Q}^+}
\newcommand{\negs}{\mathcal{Q}^-}
\newcommand{\eqset}{\mathcal{Q}^\approx}
\newcommand{\neqset}{\mathcal{Q}^{\not\approx}}
\newcommand{\projatomset}{\mathcal{Q}_{\mathit{qps}}}
\newcommand{\bset}[2]{\mathcal{B}_{#2}(#1)}
\newcommand{\copyvars}[2]{\{\varx\approx\vary\mid\varx\in #1\land
\vary\in#2\}}
\newcommand{\setfin}{\mathcal{Q}_{\mathit{fin}}}
\newcommand{\setinf}{\mathcal{Q}_{\mathit{inf}}}
\newcommand{\restr}{\disjs}
\newcommand{\qryrestr}{\overline{\qry}}
\newcommand{\conjrestr}{\bigwedge_{\qryrestr\in\restr} \qryrestr}
\newcommand{\sumszrestr}{\sum_{\qryrestr\in\restr} \sz{\qryrestr}}
\newcommand{\restra}{\overline{\disjs}}
\newcommand{\conjrestra}{\bigwedge_{\qryrestr\in\restra} \qryrestr}
\newcommand{\restrf}[1]{\mathcal{Q}_{#1}}
\newcommand{\posset}[1]{{\mathcal{T}^+_{#1}}}
\newcommand{\possetc}[1]{{\overline{\mathcal{T}}^+_{#1}}}
\newcommand{\negset}[1]{{\mathcal{T}^-_{#1}}}
\newcommand{\negsetc}[1]{{\overline{\mathcal{T}}^-_{#1}}}
\newcommand{\auxseta}[1]{{\mathcal{T}^1_{#1}}}
\newcommand{\auxsetb}[1]{{\mathcal{T}^2_{#1}}}
\newcommand{\nondetobj}{X}
\newcommand{\nondetset}{\mathcal{X}}
\newcommand{\nondetfun}{f}
\newcommand{\nondetchoice}{%
We denote the non-deterministic choice of an object $X$ from a non-empty set
$\mathcal{X}$ as $X\leftarrow\mathcal{X}$.}
\newcommand{\setminimal}[2]{\forall{#1}\in {#2}.\,
\fv{{#2}\setminus\{{#1}\}}\neq\fv{#2}}
\newcommand{\projminimaldef}{\setminimal{\projoneatom}{\projatomset}}
\newcommand{\projminimal}{\mathsf{minimal}(\projatomset)}
\newcommand{\RC}{RC\xspace}
\newcommand{\safeprop}{RANF}
\newcommand{\fbvs}{(free and bound) variables}
\newcommand{\pwdist}{pairwise distinct \fbvs}
\newcommand{\infequivprop}{inf-equivalent}
\newcommand{\di}{domain-independent}
\newcommand{\ethz}{(\textsc{vgt}_1)}
\newcommand{\vgtrw}{(\textsc{vgt}_2)}
\newcommand{\transnew}{\faStar}
\newcommand{\transex}{\faStar$\exists$}
\newcommand{\transfree}{\faStarO}
\newcommand{\lagg}{\#}
\newcommand{\laggg}{\#\#}
\newcommand{\rulea}{(R1)}
\newcommand{\ruleb}{(R2)}
\newcommand{\rulec}{(R3)}
\newcommand{\propfv}{\textsc{fv}}
\newcommand{\propfin}{\textsc{eval}}
\newcommand{\propfvlab}{\label{lab:fv}}
\newcommand{\propfinlab}{\label{lab:eval}}
\newcommand{\propfvref}{\hyperref[lab:fv]{\propfv}}
\newcommand{\propfinref}{\hyperref[lab:eval]{\propfin}}
\newcommand{\propcon}{\textsc{con}}
\newcommand{\propcst}{\textsc{cst}}
\newcommand{\proppred}{\textsc{var}}
\newcommand{\proprep}{\textsc{rep}}
\newcommand{\gen}[2]{\mathsf{gen}({#1},{#2})}
\newcommand{\tgen}[3]{\mathsf{gen}({#1},{#2},{#3})}
\newcommand{\tcov}[3]{\mathsf{cov}({#1},{#2},{#3})}
\newcommand{\xgen}[2]{\mathsf{gen}_{\mathsf{vgt}}({#1},{#2})}
\newcommand{\vgen}[3]{\mathsf{gen}_{\mathsf{vgt}}({#1},{#2},{#3})}
\newcommand{\xcon}[2]{\mathsf{con}_{\mathsf{vgt}}({#1},{#2})}
\newcommand{\con}[3]{\mathsf{con}_{\mathsf{vgt}}({#1},{#2},{#3})}
\renewcommand{\emptyset}{\varnothing}
\newcommand{\sconjtwo}[2]{\bigwedge\nolimits^\approx({#1},{#2})}
\theoremstyle{defC}\newtheorem{exampleC}[exa]{Example}
\theoremstyle{plain}\newtheorem{lemma}[lem]{Lemma} 
\theoremstyle{plain}\newtheorem{theorem}[thm]{Theorem}
\theoremstyle{definition}\newtheorem{definition}[defi]{Definition}
\theoremstyle{definition}\newtheorem{example}[exa]{Example}
\newsavebox\isabelleqedbox
\savebox\isabelleqedbox{%
    \begin{tikzpicture}[x=0.8mm, y=0.8mm, 
        baseline=-0.3mm, line join=round]
        \begin{scope}[yslant=-0.5]
            \draw (0,0) rectangle +(1,1);
            \draw (2,1) rectangle +(1,1);
            \draw (1,2) rectangle +(1,1);
        \end{scope}
        \begin{scope}[yslant=0.5]
            \filldraw (1,-1) rectangle +(1,1);
            \filldraw (3,-2) rectangle +(1,1);
            \filldraw (2,0) rectangle +(1,1);
        \end{scope}
        \begin{scope}[yslant=0.5,xslant=-1]
            \draw (1,0) rectangle +(1,1);
            \draw (2,-1) rectangle +(1,1);
            \draw (3,1) rectangle +(1,1);
        \end{scope}
    \end{tikzpicture}%
}
\newcommand{\isabelleqed}{%
  \usebox{\isabelleqedbox}%
}
\begin{document}

\title[Efficient Evaluation of Arbitrary Relational Calculus Queries]{Efficient Evaluation of\texorpdfstring{\\}{} Arbitrary Relational Calculus Queries}

\author[M.~Raszyk]{Martin Raszyk\lmcsorcid{0000-0003-3018-2557}}[a]
\author[D.~Basin]{David Basin\lmcsorcid{0000-0003-2952-939X}}[b]
\author[S.~Krsti\'c]{Sr\dj{}an Krsti\'c\lmcsorcid{0000-0001-8314-2589}}[b]
\author[D.~Traytel]{Dmitriy Traytel\lmcsorcid{0000-0001-7982-2768}}[c]

\renewcommand{\S}{Section~}

\address{DFINITY, Zurich, Switzerland}	\email{martin.raszyk@dfinity.org}  

\address{Department of Computer Science, ETH Zürich, Zurich, Switzerland}	\email{\{basin, srdan.krstic\}@inf.ethz.ch}

\address{Department of Computer Science, University of Copenhagen, Copenhagen, Denmark}
\email{traytel@di.ku.dk}

\begin{abstract}

The relational calculus (RC) is a concise, declarative query language.
However,
existing RC query evaluation approaches are inefficient and
often deviate from established algorithms based on finite tables used in
database management systems. We devise a new translation of an arbitrary
RC query into two safe-range queries, for which the finiteness
of the query's evaluation result is guaranteed. Assuming an infinite domain, the
two queries have the following meaning: The first is closed and characterizes
the original query's relative safety, i.e., whether given a fixed database, the
original query evaluates to a finite relation. The second safe-range query is
equivalent to the original query, if the latter is relatively safe. We compose
our translation with other, more standard ones to ultimately obtain two SQL
queries. This allows us to use standard database management systems to evaluate
arbitrary RC queries. We show that our translation improves the
time complexity over existing approaches, which we also empirically confirm in
both realistic and synthetic experiments.
\vspace*{-2ex}
\end{abstract}

\maketitle

\section{Introduction}\label{sec:intro}

Codd's theorem states that all domain-independent queries of the relational
calculus (RC) can be expressed in relational algebra (RA)~\cite{DBLP:persons/Codd72}. A popular interpretation 
of this result is that RA suffices to express all interesting
queries. This interpretation justifies why SQL evolved as the
practical database query language with the RA as its mathematical foundation.
SQL is declarative and abstracts over the actual RA expression used to evaluate a query. Yet, SQL's syntax 
inherits RA's deliberate syntactic limitations, such as union-compatibility, which ensure domain independence.
RC does not have such syntactic limitations, which arguably makes it a more attractive declarative query language than both RA and SQL. The main problem of RC is that it is not immediately clear how to evaluate even domain-independent queries, much less how to handle the domain-dependent (i.e., not domain-independent) ones.


\looseness=-1
As a running example, consider a shop in which brands (unary
finite relation $\predbrand$ of brands) sell products (binary finite relation $\predprod$ relating brands and
products) and products are reviewed by users with a score (ternary
finite relation $\predscore$ relating products, users, and scores). We consider a brand \emph{suspicious} if there is a user and a score such that all the
brand's products were reviewed by that user with that score. An RC query
computing suspicious brands is %
\[\qsusp \eq \predbrand(\varbrand) \land \exists \varuser, \varscore.\;\forall \varprod.\;\predprod(\varbrand,\varprod) \longrightarrow \predscore(\varprod, \varuser, \varscore).\]
This query is domain independent and follows closely our informal description.
It is not, however, clear how to evaluate it because its second conjunct is 
domain dependent as it is satisfied for every brand that does not occur in $\predprod$.
 Finding suspicious brands using RA or SQL is a challenge, which only
the best students from an undergraduate database course will accomplish. We give away an RA answer next (where $\diffop$ is the set difference operator
and $\gdiffop$ is the anti-join, also known as the \emph{generalized} difference operator~\cite{DBLP:books/aw/AbiteboulHV95}):
\[\begin{array}{@{}l@{}}
\pi_{\brandname}(\diff{(\highlight{\pi_{\username,\scorename}(\predscore)} \times \predbrand)}{\pi_{\brandname, \username, \scorename}(\gdiff{(\highlight{\pi_{\username, \scorename}(\predscore)} \times \predprod)}{\predscore})}) \cup
(\diff{\predbrand}{\pi_{\brandname}(\predprod)}).
\end{array}\]

The highlighted expressions $\pi_{\username, \scorename}(\predscore)$ are called \emph{generators}. They ensure that the left operands
of the anti-join and set difference operators include or have the same columns (i.e., are union-compatible) as the
corresponding right operands. 
(Following Codd~\cite{DBLP:persons/Codd72}, one could also use the active domain to obtain canonical, but far less efficient, generators.)

\looseness=-1
Van Gelder and Topor~\cite{DBLP:conf/pods/GelderT87,DBLP:journals/tods/GelderT91} present a translation from a
decidable class of domain-independent \RC queries, called \emph{evaluable}, to
RA expressions. Their translation of the evaluable $\qsusp$ query would yield different
generators, replacing both highlighted parts by $\pi_{\username}(\predscore) \times \pi_{\scorename}(\predscore)$. That one can avoid this Cartesian product as shown above is subtle: Replacing only the first highlighted generator with the product
results in an inequivalent RA expression.

\looseness=-1
Once we have identified suspicious brands, we may want to obtain the users whose scoring made the brands suspicious. In RC, omitting $\varuser$'s quantifier from $\qsusp$ achieves just that:
\[\qsuspusr \eq \predbrand(\varbrand) \land \exists \varscore.\;\forall \varprod.\;\predprod(\varbrand,\varprod) \longrightarrow \predscore(\varprod, \varuser, \varscore).\]
In contrast, RA cannot express the same property as it is domain dependent (hence also not evaluable and thus out of scope for Van Gelder and Topor's translation): $\qsuspusr$ is satisfied for every user if a brand has no products, i.e., it does not occur in $\predprod$. Yet, $\qsuspusr$  is satisfied for finitely many users on every database instance where $\predprod$ contains at least one row for every brand from the relation $\predbrand$,
in other words $\qsuspusr$ is \emph{relatively safe}
on such database instances.

How does one evaluate queries that are not evaluable or even
domain dependent? The main approaches from the 
literature~(\S\ref{sec:rel}) are either to use
variants of the active domain
semantics~\cite{DBLP:journals/jacm/BenediktL00,DBLP:journals/acta/HullS94,ail} or to abandon finite
relations entirely and evaluate queries using finite
representations of infinite (but well-behaved) relations such as systems of
constraints~\cite{DBLP:books/sp/Revesz02} or automatic
structures~\cite{DBLP:journals/mst/BlumensathG04}. These approaches favor
expressiveness over efficiency. But unlike query translations, they cannot benefit
from decades of practical database research and engineering.

In this work, we translate arbitrary \RC{} queries
to RA expressions under the assumption of an infinite domain.
To deal with queries that are domain dependent,
our translation produces two RA expressions, instead of a single equivalent one. The
first RA expression characterizes the original RC query's relative safety, the decidable question of whether the query evaluates to a 
finite relation for a given database, which can be the case even for
a domain-dependent query, e.g., $\qsuspusr$. If the original query is relatively safe on a given
database, i.e., produces some finite result, then the second RA expression evaluates to
the same finite result. Taken together, the two RA expressions
solve the \emph{query capturability} problem~\cite{DBLP:conf/lics/AvronH91}: they allow us to enumerate the
original RC query's finite evaluation result, or to learn that it would be infinite using RA
operations on the unmodified database.

  
  
\begin{figure}[t]
  \small
  \begin{tikzpicture}
  \tikzset{pt/.style={draw,rounded corners=3pt,inner sep=5pt}, link/.style={-latex,line
  width=0.1mm,draw=black}, path/.style=={-latex,line width=0.1mm,draw=black}}
  \node[pt] (rc) at (-3*\z,0) {\makecell[c]{\RC{}\\(\S\ref{sec:rc})}};
  \node[pt] (sr) at (-1.8*\z,0) {\makecell[c]{Safe-range \RC{}\\(\S\ref{sec:eval_allowed})}};
  \node[pt] (srnf) at (-0.7*\z,0) {\makecell[c]{SRNF\\(\S\ref{sec:srnf})}};
  \node[pt] (ranf) at (0.35*\z,0) {\makecell[c]{\safeprop\\(\S\ref{sec:ranf})}};
  \node[pt] (ra) at (1.22*\z,0) {RA};
  \node[pt] (sql) at (1.85*\z,0) {SQL};
  
  \path (rc) -- node[anchor=south,yshift=8, label=above:{\makecell[l]{\S\ref{sec:translation}}}]{}(sr);
  \draw[link] (rc) -- (sr.170);
  \draw[link] (rc) -- (sr.190);
  \draw[link] (sr) -- node[anchor=south,yshift=8,label=above:{\makecell[l]{\S\ref{sec:srnf-detail}}}]{}(srnf);
  \draw[link] (srnf) -- node[anchor=south, yshift=8,align=left,label=above:{\makecell[l]{
  \S\ref{sec:ranf-detail}}}]{}(ranf);
  \path (ranf) edge [loop below] node[yshift=4,anchor=north,label=below:{\S\ref{sec:aggs}}]{} (ranf);
  \draw[link] (ranf) -- node[anchor=south,yshift=8,label=above:{\makecell[l]{\S\ref{sec:ranf2ra}}}\;]{}(ra);
  \draw[link] (ra) -- node[anchor=south,yshift=8,label=above:{\makecell[l]{\;\S\ref{sec:ra2sql}}}]{}
    (sql);
  
  \end{tikzpicture}
\vspace{-4ex}
\caption{Overview of our translation.}\label{fig:overview}
\end{figure}

Figure~\ref{fig:overview} summarizes our translation's steps and the sections where they are presented. Starting from an RC query, 
it produces two SQL queries via transformations to safe-range queries, the safe-range 
normal form (SRNF), the relational algebra normal form (RANF), and RA, respectively~(\S\ref{sec:prelim}). 
This article's main contribution is the first step: translating an RC query into two
safe-range RC queries~(\S\ref{sec:translation}), which fundamentally differs
from Van Gelder and Topor's approach and produces better generators, like
$\pi_{\username, \scorename}(\predscore)$ above. Our generators strictly improve the time complexity of query evaluation~(\S\ref{sec:complex}).

\looseness=-1
After the standard transformations from safe-range to RANF queries and from there to RA expressions, we translate the RA expressions into SQL using the \radb{} tool~\cite{radb} (\S\ref{sec:detail}).
We leverage various ideas from the literature to optimize the overall result. For example, we
generalize Clau\ss{}en et al.~\cite{DBLP:conf/vldb/ClaussenKMP97}'s approach to avoid evaluating
Cartesian products like $\pi_{\username, \scorename}(\predscore) \times \predprod$ in RANF queries
by using count aggregations (\S\ref{sec:aggs}).

\looseness=-1
The translation to SQL enables any standard database management system (DBMS) to
evaluate RC queries. We implement our translation and then use either \psql{} or 
\msql{} for query evaluation.
Using a real Amazon review dataset~\cite{DBLP:conf/emnlp/NiLM19} and our synthetic 
benchmark that generates hard database instances for random RC queries (\S\ref{sec:golf}),
we evaluate our translation's performance~(\S\ref{sec:eeval}). The evaluation shows that 
our approach 
outperforms Van Gelder and Topor's translation (which also uses a standard DBMS for evaluation)
and other RC evaluation approaches based on constraint databases and structure reduction. 

In summary, our three main contributions are as follows:
\begin{itemize}
\item We devise a translation of an arbitrary \RC{} query
into a pair of RA expressions as described above.
The time complexity of evaluating our translation's results
improves upon Van Gelder and Topor's approach~\cite{DBLP:journals/tods/GelderT91}.

\item  We implement our translation and extend it to produce SQL
queries. The resulting tool \tool{} makes RC a viable input language for any standard
DBMS.  We evaluate our tool on synthetic and real
data and confirm that our translation's improved asymptotic
time complexity carries over into practice.

\item To challenge \tool{} (and its competitors) in our evaluation, we devise
the \emph{Data Golf} benchmark that generates hard database instances for
randomly generated RC queries.
\end{itemize}

\looseness=-1
This article extends our ICDT 2022 conference paper~\cite{DBLP:conf/icdt/RaszykBKT22} with a 
more complete description of the translation. In particular, it describes the 
steps that follow our main contribution -- the translation of RC queries into two 
safe-range queries. 
In addition, we formally verify the functional correctness (but not the complexity analysis) of the
main contribution using the Isabelle/HOL proof assistant~\cite{DBLP:journals/afp/RaszykT22}. The
theorems and examples that have been verified in Isabelle are marked with a special symbol
(\isabelleqed). The formalization helped us identify and correct a technical oversight in 
the algorithm from the conference paper (even though the problem was compensated for by the 
subsequent steps of the translation in our implementation). 

\section{Related Work}\label{sec:rel}

We recall Trakhtenbrot's theorem and the fundamental notions of \emph{capturability} and \emph{data complexity.}
Given an RC query over a \emph{finite} domain,
Trakhtenbrot~\cite{trakhtenbrot1950impossibility}
showed that it is undecidable whether there exists
a (finite) structure and a variable assignment satisfying the query.
In contrast, the question of whether a \emph{fixed} structure and a \emph{fixed} variable assignment satisfies the given RC query is decidable~\cite{ail}.

Kifer~\cite{DBLP:conf/jcdkb/Kifer88} calls a query class
capturable if there is an algorithm that,
given a query in the class and a database instance,
enumerates the query's evaluation result, i.e., all tuples satisfying the query.
Avron and Hirshfeld~\cite{DBLP:conf/lics/AvronH91} observe that Kifer's notion is restricted because it requires every query
in a capturable class to be domain independent.
Hence, they propose an alternative definition that we also use: A query class is capturable if there is an algorithm that,
given a query in the class, a (finite or infinite) domain, and a database instance,
determines whether the query's evaluation result on the database instance over the domain is finite
and enumerates the result in this case.
Our work solves Avron and Hirshfeld's capturability problem
additionally assuming an infinite domain.

\looseness=-1
Data complexity~\cite{DBLP:conf/stoc/Vardi82} is the complexity of recognizing
if a tuple satisfies a fixed query
over a database, as a function of the database size.
Our capturability algorithm provides an upper bound on the data complexity
for RC queries over an infinite domain that have a finite evaluation result
(but it cannot decide if a tuple
belongs to a query's result if the result is infinite).

Next, we group related approaches to evaluating RC queries into three categories.

\textbf{Structure reduction.}  The classical approach
to handling arbitrary RC queries is to evaluate them under a finite
structure~\cite{DBLP:books/sp/Libkin04}. The core question here is whether the evaluation
produces the same result as defined by the natural semantics, which typically considers
infinite domains. Codd's theorem~\cite{DBLP:persons/Codd72}
affirmatively answers this question for domain-independent queries, restricting the structure to the
\emph{active domain}. Ailamazyan et al.~\cite{ail} show
that RC is a capturable query class by extending the active domain with a few additional elements, whose
number depends only on the query,
and evaluating the query over this finite domain.
\emph{Natural--active collapse} results~\cite{DBLP:journals/jacm/BenediktL00} generalize
Ailamazyan et al.'s~\cite{ail} result to extensions of RC
(e.g., with order relations)
by combining the structure reduction with a translation-based approach.
Hull and Su~\cite{DBLP:journals/acta/HullS94} study several semantics
of RC that guarantee the finiteness of the query's evaluation result.
In particular, the ``output-restricted unlimited interpretation''
only restricts the query's evaluation result to tuples
that only contain elements in the active domain,
but the quantified variables still range over the (finite or infinite)
underlying domain.
Our work is inspired by all these theoretical landmarks,
in particular~Hull and Su's work~(\S\ref{sec:var_restr}).
Yet we avoid using (extended) active domains, which make query evaluation impractical.

\textbf{Query translation.} Another strategy is to translate a given query into one that can 
be evaluated efficiently, for example as a sequence of RA operations on finite tables. 
Van Gelder and Topor
pioneered this approach~\cite{DBLP:conf/pods/GelderT87,DBLP:journals/tods/GelderT91} for \RC{}. A core
component of their translation is the choice of generators, which replace the
active domain restrictions from structure reduction approaches and
thereby improve the time complexity.
Extensions to scalar and complex function symbols have also been
studied~\cite{DBLP:conf/pods/Escobar-MolanoHJ93, DBLP:journals/isci/LiuYL08}. All these approaches focus on syntactic classes
of RC, for which domain independence is given, e.g., the \emph{evaluable} queries of Van Gelder and Topor
(Appendix~\ref{sec:eval}). Our approach is inspired
by Van Gelder and Topor's work but generalizes it to handle arbitrary RC
queries at the cost of assuming an infinite domain. Also,
we further improve the time complexity of \citeauthorVanGT's approach
by choosing better generators.

\textbf{Evaluation with infinite relations.}  Constraint
databases~\cite{DBLP:books/sp/Revesz02} obviate the need for using RA operations on
finite tables. This yields significant expressiveness gains as domain independence need 
not be assumed. 
Yet the efficiency of the quantifier elimination procedures employed
cannot compare with the simple evaluation of the RA's projection operation. Similarly, 
automatic structures~\cite{DBLP:journals/mst/BlumensathG04} can represent the results 
of arbitrary RC queries finitely, but struggle with large quantities of data. 
We demonstrate this in our evaluation where we compare our translation to several modern incarnations of the 
above approaches, all based on binary decision diagrams~\cite{DBLP:conf/csl/MollerLAH99,DBLP:conf/cade/Moller02,DBLP:conf/fmcad/ChakiGS09,mona-tool,DBLP:journals/jacm/BasinKMZ15}.

\section{Preliminaries}\label{sec:prelim}

We introduce the \RC syntax and semantics and
define relevant classes of \RC queries.

\vspace{-1ex}
\subsection{Relational Calculus}\label{sec:rc}

A signature $\sigsymb$ is a triple $(\consts,\predsymbs,\arity)$,
where $\consts$ and $\predsymbs$ are disjoint finite sets of constant and predicate symbols,
and the function
$\arity:\predsymbs\rightarrow\nats$ maps each predicate symbol
$\predsymb\in \predsymbs$
to its arity $\arity(\predsymb)$. Let $\sigsymb=(\consts,\predsymbs,\arity)$
be a signature and $\varset$ a countably infinite set of variables
disjoint from $\consts\cup \predsymbs$. The following grammar defines the syntax of \RC queries:
\[
\qry ::= \bot\mid\top\mid \varx\approx \termsymb\mid \predsymb(\termsymb_1, \ldots, \termsymb_{\arity(\predsymb)})\mid
\neg \qry\mid \qry\lor \qry\mid \qry\land \qry\mid \exists \varx.\, \qry.
\]
Here, $\predsymb\in \predsymbs$ is a predicate symbol,
$\termsymb, \termsymb_1, \ldots, \termsymb_{\arity(\predsymb)}\in \varset\cup \consts$ are terms, and
$\varx\in \varset$ is a variable.
We write $\exists \varlist.\,\qry$ for $\exists \varlistname_1.\ldots\exists \varlistname_\listlength.\,\qry$ and $\forall \varlist.\,\qry$ for $\neg\exists \varlist.\,\neg \qry$,
where $\varlist$ is a variable sequence $\varlistname_1,\ldots, \varlistname_\listlength$.
If $\listlength = 0$, then both $\exists \varlist.\,\qry$ and $\forall \varlist.\,\qry$ denote just $\qry$.
Quantifiers have lower precedence than
conjunctions and disjunctions, e.g.,
$\exists\varx.\,\qrya\land\qryb$ means
$\exists\varx.\,(\qrya\land\qryb)$.
We use $\approx$ to denote the equality of terms in \RC{}
to distinguish it from $=$, which denotes syntactic object identity.
We also write $\qrya\longrightarrow\nobreak\qryb$
for $\neg\qrya\lor\qryb$.
However, writing $\qrya\lor \qryb$ for
$\neg (\neg \qrya\land\neg \qryb)$ would complicate later definitions, e.g., the safe-range queries (\S\ref{sec:eval_allowed}).

\looseness=-1
We define the subquery partial order $\sqsubseteq$ on queries
as the (reflexive and transitive) subterm relation
on the datatype of \RC{} queries. For example,
$\qrya$ is a subquery of the query $\qrya\land\neg\exists \vary.\,\qryb$.
We denote by $\subs{\qry}$ the set of subqueries of a query $\qry$,
by $\fv{\qry}$ the set of \emph{free variables} in $\qry$,
and by $\av{\qry}$ be the set of all \fbvs{} in a query $\qry$.
Furthermore, we denote by $\fvseq{\qry}$
the sequence of free variables in $\qry$
based on some fixed ordering of variables.
We lift this notation to sets of queries in the standard way.
A query $\qry$ with no free variables, i.e., $\fv{\qry}=\emptyset$,
is called \emph{closed}.
Queries of the form $\predsymb(\termsymb_1, \ldots, \termsymb_{\arity(\predsymb)})$ and $\varx\approx \cstsymb$
are called \emph{atomic predicates}.
We define the predicate $\edb{\cdot}$ characterizing atomic predicates,
i.e., $\edb{\qry}$ is true iff $\qry$ is an atomic predicate.
Queries of the form
$\exists\varlist.\,\predsymb(\termsymb_1, \ldots, \termsymb_{\arity(\predsymb)})$
and $\exists\varlist.\,\varx\approx \cstsymb$
are called \emph{quantified predicates}.
We denote by $\proj{\varx}{\qry}$ the query obtained by
existentially quantifying a variable $\varx$ from a query $\qry$
if $\varx$ is free in $\qry$, i.e.,
$\proj{\varx}{\qry} \eq \exists \varx.\,\qry$ if $\varx\in\fv{\qry}$
and $\proj{\varx}{\qry}\eq \qry$ otherwise.
We lift this notation to sets of queries in the standard way.
We use $\proj{\varx}{\qry}$ (instead of $\exists \varx.\,\qry$)
when constructing a query
to avoid introducing bound variables that never occur in $\qry$.

\looseness=-1
A structure $\str$ over a signature $(\consts, \predsymbs, \arity)$ consists of
a non-empty domain $\dom$
and interpretations $\cstsymb^\str\in \dom$
and $\predsymb^\str\subseteq \dom^{\arity(\predsymb)}$,
for each $\cstsymb\in \consts$ and $\predsymb\in \predsymbs$.
We assume that all the relations $\predsymb^\str$ are \emph{finite}.
Note that this assumption does \emph{not} yield a finite structure
(as defined in finite model theory~\cite{DBLP:books/sp/Libkin04}) since
the domain $\dom$ can still be infinite.
A (\emph{variable}) \emph{assignment} is a mapping
$\valsymb: \varset\rightarrow \dom$.
We extend~$\valsymb$ 
to constant symbols $\cstsymb\in \consts$ with $\valsymb(\cstsymb)=\cstsymb^\str$.
We write $\valsymb[\varx\mapsto \domval]$ for the assignment
that maps $\varx$ to $\domval \in \dom$ and is otherwise identical to $\valsymb$.
We lift this notation to sequences $\varxs$ and $\domvallist$
of pairwise distinct variables and arbitrary domain elements
of the same length.
The semantics of \RC queries for a structure $\str$ and an assignment $\valsymb$ is defined in Figure~\ref{fig:rc}.
\begin{figure}
\small
$\begin{array}{@{}l@{\;\;}c@{\;\;}l@{\;}|@{\;}l@{\;\;}c@{\;\;}l@{}}
\rulenocond{(\str, \valsymb) \not\models \bot;\, (\str, \valsymb) \models \top}&&
\ruleiff{(\str, \valsymb) \models (\varx \approx \termsymb)}{\valsymb(\varx)=\valsymb(\termsymb)}\\
\ruleiff{(\str, \valsymb) \models \predsymb(\termsymb_1, \ldots, \termsymb_{\arity(\predsymb)})}%
{(\valsymb(\termsymb_1), \ldots, \valsymb(\termsymb_{\arity(\predsymb)}))\in \predsymb^\str}&
\ruleiff{(\str, \valsymb) \models (\neg \qry)}%
{(\str, \valsymb) \not\models \qry}\\
\ruleiff{(\str, \valsymb) \models (\qrya\lor \qryb)}%
{(\str, \valsymb) \models \qrya\text{ or }(\str, \valsymb) \models \qryb}&
(\str, \valsymb) \models (\exists \varx.\,\qry)&\text{iff}&%
(\str, \valsymb[\varx\mapsto \domval]) \models \qry,\\
\ruleiff{(\str, \valsymb) \models (\qrya\land \qryb)}%
{(\str, \valsymb) \models \qrya\text{ and }(\str, \valsymb) \models \qryb}&
&&\text{for some }\domval\in \dom.\\
\end{array}$
\vspace{-1ex}
\caption{The semantics of \RC{}.}
\label{fig:rc}
\end{figure}
We write $\valsymb\models \qry$ for $(\str, \valsymb)\models \qry$
if the structure $\str$ is fixed in the given context.
For a fixed $\str$, only
the assignments to $\qry$'s free variables influence $\valsymb\models \qry$, i.e.,
$\valsymb\models \qry$ is equivalent to $\valsymba\models \qry$,
for every variable assignment $\valsymba$
that agrees with $\valsymb$ on $\fv{\qry}$.
For closed queries $\qry$,
we write $\models \qry$ and say that $\qry$ holds, since closed
queries either hold for all variable assignments
or for none of them.
We call a finite sequence $\domvallist$ of domain elements
$\domval_1, \ldots\, \domval_\listlength\in\dom$
a \emph{tuple}.
Given a query $\qry$ and a structure $\str$,
we denote the set of satisfying tuples for $\qry$ by
\[
\sattup{\qry}^{\str}=\{\domvallist\in\dom^{\len{\fvseq{\qry}}}\mid
\text{there exists an assignment}\,\valsymb\,\text{such that}\,
(\str,\valsymb[\fvseq{\qry}\mapsto\domvallist])\models \qry\}.
\]
\looseness=-1
We omit $\str$ from $\sattup{\qry}^{\str}$
if $\str$ is fixed. We call
the values from $\sattup{\qry}$ assigned to $\varx\!\in\!\fv{\qry}$ \emph{column} $\varx$.

The \emph{active domain} $\adomd{\qry}{\str}$ of a query $\qry$ and a structure $\str$
is a subset of the domain $\dom$ containing the interpretations $\cstsymb^\str$
of all constant symbols that occur in $\qry$
and the values in the relations $\predsymb^\str$
interpreting all predicate symbols that occur in $\qry$. Since $\consts$ and $\predsymbs$ are
finite and all $\predsymb^\str$ are finite relations of a finite arity $\arity(\predsymb)$,
the active domain $\adomd{\qry}{\str}$ is also a finite set.
We omit $\str$ from $\adomd{\qry}{\str}$ if $\str$ is fixed in the given context.

Queries $\qrya$ and $\qryb$ over the same signature
are \emph{equivalent}, written $\qrya\equiv \qryb$, if
$(\str, \valsymb)\models \qrya\Longleftrightarrow
(\str, \valsymb)\models \qryb$, for every $\str$ and~$\valsymb$.
Queries $\qrya$ and $\qryb$ over the same signature
are \emph{\infequivprop}, written $\infequiv{\qrya}{\qryb}$, if
$(\str, \valsymb)\models \qrya\Longleftrightarrow
(\str, \valsymb)\models \qryb$, for every $\str$
with an \emph{infinite} domain $\dom$ and every $\valsymb$.
Clearly, equivalent queries are also \infequivprop{}.

A query $\qry$ is \emph{\di} if
$\sattup{\qry}^{\stra}=\sattup{\qry}^{\strb}$ holds for every two
structures $\stra$ and $\strb$ that agree on the interpretations of
constants ($\cstsymb^{\stra}=\cstsymb^{\strb}$) and predicates
($\predsymb^{\stra}=\predsymb^{\strb}$), while their domains $\doma$
and $\domb$ may differ. Agreement on the interpretations implies
$\adomd{\qry}{\stra}=\adomd{\qry}{\strb}\subseteq\doma \cap \domb$. It is undecidable whether an \RC
query is \di{}~\cite{DBLP:journals/jacm/Paola69a,DBLP:journals/ipl/Vardi81}.

\begin{figure}[t]
  \small
  $\begin{array}{lcllcllcllcllcl}
    \varx\approx \varx&\equiv&\top,&\neg\bot&\equiv&\top,&\neg\top&\equiv&\bot,&
    \exists \varx.\,\bot&\equiv&\bot,&\exists \varx.\,\top&\equiv&\top,\\
    \qry\land\bot&\equiv&\bot,&\bot\land \qry&\equiv&\bot,&\qry\land\top&\equiv&\qry,&\top\land \qry&\equiv&\qry,\\
    \qry\lor\bot&\equiv&\qry,&\bot\lor \qry&\equiv&\qry,&\qry\lor\top&\equiv&\top,&\top\lor \qry&\equiv&\top.
  \end{array}$
  \vspace{-1ex}
  \caption{Constant propagation rules.}\label{fig:cp}
  \vspace{-1ex}
  \end{figure}
  
  \looseness=-1
  We denote by $\cp{\qry}$ the query obtained from a query $\qry$
  by exhaustively applying the rules in Figure~\ref{fig:cp}.
  Note that $\cp{\qry}$ is either of the form $\bot$ or $\top$
  or contains no $\bot$ or $\top$ subqueries.
  
  \begin{definition}
  The substitution of the form $\qry[\varx\mapsto \vary]$ is
  the query $\cp{\subqry}$, where $\subqry$ is obtained from a query~$\qry$
  by replacing all occurrences of the free variable $\varx$
  by the variable $\vary$, potentially also renaming bound
  variables to avoid capture.
  \end{definition}
  
  \begin{definition}
  The substitution of the form $\qry[\varx/\bot]$ is
  the query $\cp{\subqry}$, where $\subqry$ is obtained from a query $\qry$
  by replacing with $\bot$ every atomic predicate or equality
  containing the free variable $\varx$,
  except for $(\varx\approx \varx)$ which is replaced by $\top$.
  \end{definition}

We lift the substitution notation to sets of queries in the standard way.

The function $\flop{\qry}$,
where $\oplus\in\{\lor, \land\}$, computes a set of queries
by ``flattening'' the operator $\oplus$:
$\flop{\qry}\eq\flop{\qrya}\cup\flop{\qryb}$
if $\qry=\qrya\oplus \qryb$ and $\flop{\qry}\eq\{\qry\}$ otherwise.

\vspace{-1ex}
\subsection{Safe-Range Queries}\label{sec:eval_allowed}

\begin{figure}
  \begin{minipage}[t]{0.40\textwidth}
  \begin{figure}[H]
  \small
  $
    \begin{array}{@{}l@{\;}c@{\;}l@{}}
    \rulenocondgen{\tgen{\varx}{\bot}{\emptyset}}\\
    \ruleif{\tgen{\varx}{\qry}{\{\qry\}}}{\edb{\qry}\,\text{and}\,\free{\varx}{\qry}}\\
    \ruleif{\tgen{\varx}{\neg\neg \qry}{\qpreds}}{\tgen{\varx}{\qry}{\qpreds}}\\
    \ruleifgen{\tgen{\varx}{\neg (\qrya\lor \qryb)}{\qpreds}}{\tgen{\varx}{(\neg \qrya)\land(\neg \qryb)}{\qpreds}}\\
    \ruleifgen{\tgen{\varx}{\neg (\qrya\land \qryb)}{\qpreds}}{\tgen{\varx}{(\neg \qrya)\lor(\neg \qryb)}{\qpreds}}\\
    \ruleifgen{\tgen{\varx}{\qrya\lor \qryb}{\qpredsa\cup \qpredsb}}{\tgen{\varx}{\qrya}{\qpredsa}\,\text{and}\,\tgen{\varx}{\qryb}{\qpredsb}}\\
    \ruleifgen{\tgen{\varx}{\qrya\land \qryb}{\qpreds}}{\tgen{\varx}{\qrya}{\qpreds}\,\text{or}\,\tgen{\varx}{\qryb}{\qpreds}}\\
    \ruleifgen{\tgen{\varx}{\qry\land \varx\approx \vary}{\qpreds[\vary\mapsto \varx]}}{\tgen{\vary}{\qry}{\qpreds}}\\
    \ruleifgen{\tgen{\varx}{\qry\land \vary\approx \varx}{\qpreds[\vary\mapsto \varx]}}{\tgen{\vary}{\qry}{\qpreds}}\\
    \ruleifgendot{\tgen{\varx}{\exists \vary.\,\exqryy}{\proj{\vary}{\qpreds}}}{\varx\neq \vary\,\text{and}\,\tgen{\varx}{\exqryy}{\qpreds}}
    \end{array}$
    \caption{The \emph{generated} relation.}
    \label{fig:gen_con}
    \end{figure}
  \end{minipage}%
  \begin{minipage}[t]{0.6\textwidth}
  \begin{figure}[H]
  \small
  $
  \begin{array}{@{}l@{}c@{}c@{\;}l@{}}
  \tcov{\varx}{\varx \approx  \varx}{\emptyset};\\
  \flright{\tcov{\varx}{\qry}{\emptyset}}{\absent{\varx}{\qry}}\\
  \flleft{\tcov{\varx}{\varx \approx  \vary}{\{\varx \approx  \vary\}}}{\varx\neq \vary}\\
  \flleft{\tcov{\varx}{\vary \approx  \varx}{\{\varx \approx  \vary\}}}{\varx\neq \vary}\\
  \flleft{\tcov{\varx}{\qry}{\{\qry\}}}{\edb{\qry}\,\text{and}\,\free{\varx}{\qry}}\\
  \flleft{\tcov{\varx}{\neg \qry}{\cpreds}}{\tcov{\varx}{\qry}{\cpreds}}\\
  \flleft{\tcov{\varx}{\qrya\lor \qryb}{\cpredsa\cup \cpredsb}}{\tcov{\varx}{\qrya}{\cpredsa}\,\text{and}\,\tcov{\varx}{\qryb}{\cpredsb}}\\
  \flleft{\tcov{\varx}{\qrya\lor \qryb}{\cpreds}}{\tcov{\varx}{\qrya}{\cpreds}\,\text{and}\, \qrya[\varx/\bot]=\top}\\
  \flleft{\tcov{\varx}{\qrya\lor \qryb}{\cpreds}}{\tcov{\varx}{\qryb}{\cpreds}\,\text{and}\, \qryb[\varx/\bot]=\top}\\
  \flleft{\tcov{\varx}{\qrya\land \qryb}{\cpredsa\cup \cpredsb}}{\tcov{\varx}{\qrya}{\cpredsa}\,\text{and}\,\tcov{\varx}{\qryb}{\cpredsb}}\\
  \flleft{\tcov{\varx}{\qrya\land \qryb}{\cpreds}}{\tcov{\varx}{\qrya}{\cpreds}\,\text{and}\, \qrya[\varx/\bot]=\bot}\\
  \flleft{\tcov{\varx}{\qrya\land \qryb}{\cpreds}}{\tcov{\varx}{\qryb}{\cpreds}\,\text{and}\, \qryb[\varx/\bot]=\bot}\\
  \multicolumn{4}{@{}l@{}}{\tcov{\varx}{\exists \vary.\,\exqryy}{\proj{\vary}{\cpreds}}}\\
  \multicolumn{4}{@{}l@{}}{\qquad\text{if }\varx\neq \vary\,\text{and}\,\tcov{\varx}{\exqryy}{\cpreds}\,\text{and}\, (\varx\approx \vary)\notin \cpreds;}\\
  \multicolumn{4}{@{}l@{}}{\tcov{\varx}{\exists \vary.\,\exqryy}{\proj{\vary}{(\cpreds\setminus \{\varx\approx \vary\})}\cup \qpredsy[\vary\mapsto \varx]}}\\
  \multicolumn{4}{@{}l@{}}{\qquad\text{if }\varx\neq \vary\,\text{and}\, \tcov{\varx}{\exqryy}{\cpreds}\,\text{and}\,\tgen{\vary}{\exqryy}{\qpredsy}.}\\~
  \end{array}
  $
  \caption{The \emph{covered} relation.}
  \label{fig:cov_extra}
  \end{figure}
  \end{minipage}
  \end{figure}

The class of \emph{safe-range} queries~\cite{DBLP:books/aw/AbiteboulHV95} is
a decidable subset of \di{} \RC queries.
Its definition is based on the notion
of the range-restricted variables of a query.
A variable is called \emph{range restricted}
if ``its possible values all lie within the active domain of
the query''~\cite{DBLP:books/aw/AbiteboulHV95}. Intuitively, atomic predicates restrict the possible
values of a variable that occurs in them as a term.
An equality $\varx\approx \vary$ can extend the set of 
range-restricted variables in a conjunction $\qry\land \varx\approx \vary$:
If $\varx$ or $\vary$ is range restricted in $\qry$,
then both $\varx$ and $\vary$ are range restricted in $\qry\land \varx\approx \vary$.

\looseness=-1
We formalize range-restricted variables
using the \emph{generated} relation $\tgen{\varx}{\qry}{\qpreds}$,
defined in Figure~\ref{fig:gen_con}.
Specifically, $\tgen{\varx}{\qry}{\qpreds}$ holds
if $\varx$ is a range-restricted variable in $\qry$
and every satisfying assignment for $\qry$
satisfies some quantified predicate, referred to as \emph{generator}, from $\qpreds$.
A similar definition by Van Gelder and Topor~\cite[Figure~5]{DBLP:journals/tods/GelderT91}
uses a set of atomic (not quantified) predicates $\apreds$ as generators and defines the rule
$\vgen{\varx}{\exists \vary.\,\exqryy}{\apreds}$
if $\varx\neq\vary$ and
$\vgen{\varx}{\exqryy}{\apreds}$
(Appendix~\ref{sec:eval}, Figure~\ref{fig:gen_con_vgt}).
In contrast, we modify the rule's conclusion
to existentially quantify the variable $\vary$
in all queries in $\qpreds$ where $\vary$ is free:
$\tgen{\varx}{\exists \vary.\,\exqryy}{\proj{\vary}{\qpreds}}$.
Hence, $\tgen{\varx}{\qry}{\qpreds}$ implies $\fv{\qpreds}\subseteq\fv{\qry}$.
We now formalize these relationships.

\begin{lemC}[{\isabelleqed}]\label{lem:gen_wit}
Let $\qry$ be a query, $\varx \in \fv{\qry}$, and $\qpreds$ be a set of quantified predicates
such that $\tgen{\varx}{\qry}{\qpreds}$.
Then (i) for every $\projoneatom\in \qpreds$,
we have $\varx\in\fv{\projoneatom}$ and $\fv{\projoneatom}\subseteq\fv{\qry}$, (ii)~for every $\valsymb$ such that $\valsymb\models \qry$,
there exists a $\projoneatom\in \qpreds$ such that $\valsymb\models \projoneatom$,
and (iii)~$\qry[\varx/\bot]=\bot$.
\end{lemC}

\begin{definition}\label{def:st}
\looseness=-1
Let $\gen{\varx}{\qry}$ hold iff 
$\tgen{\varx}{\qry}{\qpreds}$ holds for some $\qpreds$.
Let $\gens{\qry}\eq\{\varx\in\fv{\qry} \mid \gen{\varx}{\qry}\,\text{does not hold}\}$
be the set of free variables in a query $\qry$
that are not range restricted.
A query $\qry$ has \emph{range-restricted free variables} if every free variable
of $\qry$ is range restricted, i.e., $\gens{\qry}=\emptyset$.
A query $\qry$ has \emph{range-restricted bound variables}
if the bound variable $\vary$ in every subquery $\exists \vary.\,\exqryy$ of $\qry$
is range restricted, i.e., $\gen{\vary}{\exqryy}$ holds.
A query is \emph{safe range} if it has range-restricted free
and range-restricted bound variables.
\end{definition}

\vspace{-1ex}
\subsection{Safe-Range Normal Form}\label{sec:srnf}

\looseness=-1
A query $\qry$ is in safe-range normal form (SRNF) if
the query $\subqry$ in every subquery $\neg\subqry$ of $\qry$
is an atomic predicate, equality,
or an existentially quantified query~\cite{DBLP:books/aw/AbiteboulHV95}.
In \S\ref{sec:srnf-detail} we define function $\pushnot{\qry}$
that returns a SRNF query equivalent to a query $\qry$.
Intuitively, the function $\pushnot{\qry}$ proceeds by
pushing negations downwards~\cite[\S 5.4]{DBLP:books/aw/AbiteboulHV95},
distributing existential quantifiers over disjunction~\cite[Rule~(T9)]{DBLP:journals/tods/GelderT91},
and dropping bound variables that never occur~\cite[Definition~9.2]{DBLP:journals/tods/GelderT91}.
We include the last two rules to optimize the time complexity of evaluating
the resulting query.

If a query $\qry$ is safe range, then $\pushnot{\qry}$ is also safe range.

\vspace{-1ex}
\subsection{Relational Algebra Normal Form}\label{sec:ranf}
\begin{figure}[t]
  \[
  \begin{array}{@{}l@{\;}c@{\;}l@{}}
  \rulenocond{\safe{\bot};\;\safe{\top}}\\
  \ruleif{\safe{\qry}}{\edb{\qry}}\\
  \ruleif{\safe{\neg \qry}}{\safe{\qry}\,\text{and}\,\fv{\qry}=\emptyset}\\
  \ruleif{\safe{\qrya\lor \qryb}}{\safe{\qrya}\,\text{and}\,\safe{\qryb}\,\text{and}\,\fv{\qrya}=\fv{\qryb}}\\
  \ruleif{\safe{\qrya\land \qryb}}{\safe{\qrya}\,\text{and}\,\safe{\qryb}}\\
  \ruleif{\safe{\qrya\land \neg \qryb}}{\safe{\qrya}\,\text{and}\,\safe{\qryb}\,\text{and}\,\fv{\qryb}\subseteq\fv{\qrya}}\\
  \ruleif{\safe{\qry\land (\varx\approx \vary)}}{\safe{\qry}\,\text{and}\, \{\varx, \vary\}\cap\fv{\qry}\neq\emptyset}\\
  \ruleif{\safe{\qry\land \neg (\varx \approx  \vary)}}{\safe{\qry}\,\text{and}\, \{\varx, \vary\}\subseteq\fv{\qry}}\\
  \ruleifdot{\safe{\exists \varx.\,\exqryx}}{\safe{\exqryx}\,\text{and}\, \varx\in\fv{\exqryx}}\\
  \end{array}
  \]
  \vspace{-1.2em}
  \caption{Characterization of \safeprop{} queries.}
  \label{fig:safe}
\end{figure}

Relation algebra normal form (RANF)
is a class of safe-range queries that can be easily mapped to RA%
 \cite{DBLP:books/aw/AbiteboulHV95}
and evaluated using the RA operations
for projection, column duplication,
selection, set union, binary join, and anti-join.

Figure~\ref{fig:safe} defines the predicate $\safe{\cdot}$ 
characterizing RANF queries.
The translation of safe-range queries (\S\ref{sec:eval_allowed})
to equivalent \safeprop{} queries proceeds via SRNF (\S\ref{sec:srnf}).
A safe-range query in SRNF can be translated to an equivalent RANF
query by subquery rewriting using the following 
rules~\cite[Algorithm~5.4.7]{DBLP:books/aw/AbiteboulHV95}:
\[
\begin{array}{lclr}
\qry\land(\qrya\lor\qryb)&\equiv&(\qry\land\qrya)\lor(\qry\land\qryb),&
\rulea\\
\qry\land(\exists \varx.\,\exqryx)&\equiv&(\exists \varx.\,\qry\land\exqryx),&\ruleb\\
\qry\land\neg\subqry&\equiv&\qry\land\neg(\qry\land\subqry).&\rulec
\end{array}
\]
Subquery rewriting is a nondeterministic process
(as the rewrite rules can be applied in an arbitrary order)
that impacts the performance of evaluating the resulting RANF query.
We translate a safe-range query in SRNF to an equivalent RANF query
by a recursive function $\allowtosafeqry{\cdot}$  
inspired by the rules $\rulea\text{--}\rulec$ and
fully specified in Figure~\ref{alg:allowtosafe} in \S\ref{sec:ranf-detail}.

\vspace{-1ex}
\subsection{Query Cost}\label{sec:cost}

To assess the time complexity of evaluating a \safeprop{} query $\qry$,
we define the \emph{cost} of $\qry$ over a structure $\str$,
denoted $\cost{\qry}{\str}$,
to be the sum of intermediate result sizes over all \safeprop{}
subqueries of $\qry$.
Formally,
$\cost{\qry}{\str}\eq\sum_{\subqry\sqsubseteq \qry,\ \safe{\subqry}}\card{\sattupd{\subqry}{\str}}\cdot\card{\fv{\subqry}}$.
This corresponds to evaluating $\qry$ following its \safeprop{} structure
(\S\ref{sec:ranf}, Figure~\ref{fig:safe})
using the RA operations.
The complexity of these operations
is linear in the combined input and output size (ignoring
logarithmic factors due to set operations).
The output size (the number of tuples times the number of variables)
is counted in
$\card{\sattupd{\subqry}{\str}}\cdot \card{\fv{\subqry}}$
and the input size is counted as the output size
for the input subqueries.
Repeated subqueries are only considered once, 
which does not affect the asymptotics of query cost. In practice,
the evaluation results for common subqueries can be reused.

\section{Query Translation}\label{sec:translation}

\looseness=-1
Our approach to evaluating
an arbitrary \RC query $\qry$ over a fixed structure $\str$ with an infinite domain $\dom$ proceeds by translating $\qry$ into
a pair of safe-range queries $(\qryfin, \qryinf)$ such that
\begin{enumerate}[(\propfin)]
\item[(\propfv)] \propfvlab $\fv{\qryfin}=\fv{\qry}$ unless $\qryfin$ is syntactically equal to $\bot$; $\fv{\qryinf}=\emptyset$;
\item[(\propfin)] \propfinlab
$\sattup{\qry}$ is an infinite set if $\qryinf$ holds;
otherwise $\sattup{\qry}=\sattup{\qryfin}$
is a finite set.
\end{enumerate}
Since the queries $\qryfin$ and $\qryinf$ are safe range,
they are \di{} and thus
$\sattup{\qryfin}$ is a finite set. In particular,
$\sattup{\qry}$ is a finite set
if $\qryinf$ does not hold.
Our translation generalizes Hull and Su's case distinction
that restricts bound variables~\cite{DBLP:journals/acta/HullS94}
to restrict all variables.
Moreover, we use Van Gelder and Topor's idea
to replace the active domain
by a smaller set (generator) specific to each
variable~\cite{DBLP:journals/tods/GelderT91}
while further improving the generators.
Unless explicitly noted, in the rest of the article we assume a fixed structure $\str$.

\subsection{Restricting One Variable}\label{sec:var_restr}

Let $\varx$ be a free variable in a query $\qryrb$
with range-restricted bound variables.
This assumption on $\qryrb$
will be established by translating an arbitrary query $\qry$ bottom-up
(\S\ref{sec:bound}).
In this section, we develop
a translation of $\qryrb$
into an equivalent query $\qryrbx$ that satisfies the following:
\begin{itemize}
\item $\qryrbx$ has range-restricted bound variables;
\item $\qryrbx$ is a disjunction; $\varx$ is range restricted in the first disjunct; the remaining disjuncts are all binary conjunctions of a query not containing $\varx$ with a query of a special form containing $\varx$.
The special form, central to our translation,
is either an equality $\varx\approx\vary$
or a query satisfied by infinitely many values of $\varx$
for all values of the remaining free variables.
\end{itemize}

\newcommand{\mystrut}{\vphantom{\qryrb\bigvee_{\oneatom\in\qpreds}}}
\newcommand{\tspace}{6pt}
\def\myskip{3pt}
\looseness=-1
We now restate Hull and Su's~\cite{DBLP:journals/acta/HullS94}
and Van Gelder and Topor's~\cite{DBLP:journals/tods/GelderT91} 
approaches 
using our notation in order to clarify how we generalize both approaches. 
In particular, Hull and Su's approach is already stated in a 
generalized way that restricts a \emph{free} variable.

\paragraph{Hull and Su.}
    Let $\qryrb$ be a query
    with range-restricted bound variables and $\varx \in \fv{\qryrb}$.
    Then\vspace*{\tspace}
\[
\begin{array}{@{}l@{\;}c@{\;}l@{\;}l@{}l@{}}
\mystrut\qryrb&\equiv&\left(\qryrb\land{}
\adomqry{\varx}{\qryrb}\right)\lor
\left(\bigor{\vary\in \fv{\qryrb}\setminus\{\varx\}} (\qryrb[\varx\mapsto \vary]\land\varx\approx \vary)\right)\lor{}\\[1.5\jot]
\mystrut&&\left(\qryrb[\varx/\bot]\land\neg(
\adomqry{\varx}{\qryrb}\lor
\bigor{\vary\in \fv{\qryrb}\setminus\{\varx\}} \varx\approx \vary)\right).
\end{array}
\]
\looseness=-1
Here $\adomqry{\varx}{\qryrb}$ stands for an \RC query
with a single free variable $\varx$
that is satisfied by an assignment $\valsymb$
if and only if $\valsymb(\varx)\in\adom{\qryrb}$.
Hull and Su's translation distinguishes
the following three cases for a fixed assignment $\valsymb$ 
(each corresponding to a top-level disjunct above):
\begin{itemize}
    \item if $\adomqry{\varx}{\qryrb}$ holds (and hence $\valsymb(\varx)\in\adom{\qryrb}$), then we do not alter the query $\qryrb$;
    \item if $\varx\approx \vary$ holds for some free variable $\vary\in\fv{\qryrb}\setminus\{\varx\}$,
    then $\varx$ can be replaced by $\vary$ in $\qryrb$;
    \item \looseness=-1 
    otherwise, $\qryrb$ is equivalent to $\qryrb[\varx/\bot]$. Specifically, all 
    atomic predicates having $\varx$ free
    can be replaced by $\bot$ (as $\valsymb(\varx)\notin\adom{\qryrb}$),
    all equalities $\varx\approx \vary$ and $\vary\approx \varx$ for $\vary\in\fv{\qryrb}\setminus\{\varx\}$
    can be replaced by $\bot$ (as $\valsymb(\varx)\neq\valsymb(\vary)$),
    and all equalities $\varx\approx \varz$ for a bound variable $\varz$
    can be replaced by $\bot$ (as $\valsymb(\varx)\notin\adom{\qryrb}$
    and $\varz$ is range restricted
    in its subquery $\exists \varz.\,\exqryz$, by assumption). In the last case,
    $\gen{\varz}{\exqryz}$ holds and thus, for all $\valsymba$ extending $\valsymb$, we have
    $\valsymba\models\exists \varz.\,\exqryz$ if and only if
    there exists a $\domval\in\adom{\exqryz}\subseteq\adom{\qryrb}$ such that $\valsymba[\varz\mapsto\domval]\models\exqryz$.
\end{itemize}

\begin{example}
Consider the query $\qry\eq \predbrand(\vary)\lor \varx\approx \vary$. Then $\adomqry{\varx}{\qry} = \predbrand(\varx)$ and following Hull and Su we obtain that $\qry$ is equivalent to the disjunction of the following three queries:
\begin{itemize}
\item $(\predbrand(\vary)\lor \varx\approx \vary) \land \predbrand(\varx)$, which is equivalent to $\predbrand(\varx) \land \predbrand(\vary)$;
\item $(\predbrand(\vary)\lor \varx\approx \vary)[\varx\mapsto \vary] \land \varx\approx \vary$, which is syntactically equal to $\varx\approx \vary$ due to constant propagation that is part of the substitution operator;
\item $(\predbrand(\vary)\lor \varx\approx \vary)[\varx/\bot] \land \lnot (\predbrand(\varx) \lor \varx\approx \vary)$, which is equivalent to $\predbrand(\vary) \land \lnot \predbrand(\varx) \land \lnot \varx\approx \vary$.
\end{itemize}
Note that in this example, each disjunct covers a different subset of $\qry$'s satisfying assignments and all three disjuncts are necessary to cover all of $\qry$'s satisfying assignments.
\end{example}

\looseness=-1
\paragraph{Van Gelder and Topor.}
Let $\qryrb$ be an \emph{evaluable} query
with range-restricted bound variables, $\varx \in \fv{\qryrb}$.
Then there exists
a set $\apreds$ of atomic predicates
such that
\[
\begin{array}{@{}l@{\;}c@{\;}l@{\;}l@{}l@{}}
\mystrut\qryrb&\equiv&\left(\qryrb\land
\bigvee_{\oneatom\in\apreds}\exists \fvseq{\oneatom}\setminus
\{\varx\}.\;\oneatom\right)\lor\left(\qryrb[\varx/\bot]\right).
\end{array}
\]
Note that $\exists\fvseq{\qry}\setminus\{\varx\}.\,\qry$ is the query
in which all free variables of $\qry$ except for $\varx$ are existentially quantified.
Van Gelder and Topor restrict their attention to evaluable queries, which do not contain equalities between variables. (They only discuss an incomplete approach to supporting such equalities~\cite[Appendix A]{DBLP:journals/tods/GelderT91}.) Thus, their translation lacks the corresponding disjuncts that Hull and Su have.

To avoid enumerating the entire active domain $\adom{\qryrb}$,
Van Gelder and Topor replace the query $\adomqry{\varx}{\qryrb}$ used by Hull and Su by the query $\bigvee_{\oneatom\in\apreds}\exists \fvseq{\oneatom}\setminus
\{\varx\}.\;\oneatom$ constructed from the atomic predicates from $\apreds$.
Because their translation must yield an equivalent query
(for every finite or infinite domain), $\apreds$ and $\qryrb$ must satisfy, for all $\valsymb$,
\[
\begin{array}{@{}l@{\;}c@{\;}l@{\;\;}l@{\;\;\;\;}l@{}}
\valsymb\models\neg\bigvee_{\oneatom\in\apreds}\exists \fvseq{\oneatom}\setminus
\{\varx\}.\;\oneatom&\Longrightarrow&
(\valsymb\models\qryrb\Longleftrightarrow\valsymb\models \qryrb[\varx/\bot])&\ethz&\text{and}\\
\valsymb\models\qryrb[\varx/\bot]&\Longrightarrow& \valsymb\models\forall \varx.\,\qryrb&\vgtrw.
\end{array}
\]
Note that $\vgtrw{}$
does not hold for the query $\qryrb\eq \neg \predbrand(\varx)$
and thus Van Gelder and Topor
only consider a proper subset of all \RC{} queries, called evaluable. For evaluable queries,
Van Gelder and Topor use the \emph{constrained} relation
$\con{\varx}{\qry}{\apreds}$, defined in Appendix~\ref{sec:eval}, Figure~\ref{fig:gen_con_vgt}, to construct a set of atomic predicates $\apreds$ that satisfies $\ethz$.

\looseness=-1
\paragraph{Our Translation.}
Let $\qryrb$ be a query
with range-restricted bound variables, $\varx \in \fv{\qryrb}$.
Then there exists a set $\apreds$ of atomic predicates and a set of equalities $\eqs$ 
such that
\[
\begin{array}{@{}l@{\;}c@{\;}l@{\;}l@{}l@{}}
    \mystrut\qryrb&\equiv&\left(\qryrb\land
    \bigvee_{\oneatom\in\apreds}\exists \fvseq{\oneatom}\setminus
    \fv{\qryrb}.\;\oneatom\right)\lor
    \left(\bigor{\varx\approx\vary\,\in\,\eqs} (\qryrb[\varx\mapsto \vary]\land\varx\approx \vary)\right)\lor{}\\[1.5\jot]
    \mystrut&&\left(\qryrb[\varx/\bot]\land\neg(
    (\bigvee_{\oneatom\in\apreds}\exists \fvseq{\oneatom}\setminus
    \fv{\qryrb}.\;\oneatom)
    {}\lor
    \bigor{\varx \approx \vary\,\in\,\eqs}\, \varx\approx \vary)\right).
\end{array}
\]
In contrast to Van Gelder and Topor, we only require that $\apreds$ satisfies $\ethz$
in our translation, which also allows us to translate non-evaluable queries, such as $\qryrb := \lnot \predbrand(\varx)$ above.
Note that we also existentially quantify only these variables that are not free in $\qryrb$,
whereas Van Gelder and Topor quantify all variables except $\varx$. For our introductory example $\qsusp$, this modification allows our translation to use the quantified predicate $\exists \varprod.\;\predscore(\varprod, \varuser, \varscore)$ to restrict both $\varuser$ and $\varprod$ simultaneously. In contrast, Van Gelder and Topor's approach restricts them separately using $\exists \varprod, \varuser.\;\predscore(\varprod, \varuser, \varscore)$ and $\exists \varprod, \varscore.\;\predscore(\varprod, \varuser, \varscore)$, so that the Cartesian product of these quantified predicates may need to be computed in their translated queries.

In contrast to Hull and Su, we do not consider the equalities of $\varx$ with all other free 
variables in $\qryrb$, but only such equalities $\eqs$ that occur in $\qryrb$. 
We jointly compute the sets $\apreds$ and $\eqs$ using the
\emph{covered} relation $\tcov{\varx}{\qry}{\cpreds}$ 
(in contrast to 
$\con{\varx}{\qry}{\apreds}$ relation). 
Figure~\ref{fig:cov_extra} shows the definition of
this relation.
The set $\qpreds$ computed by the covered relation contains atomic predicates 
that satisfy $\ethz$ and are already quantified as described above. 
The set also contains the relevant equalities that can be used in our translation.
For every variable $\varx$ and query $\qryrb$
with range-restricted bound variables,
there exists at least one set of quantified predicates and equalities $\cpreds$ such that $\tcov{\varx}{\qryrb}{\cpreds}$
and $\ethz$ holds for the set of atomic predicate subqueries in $\cpreds$ (i.e., 
for $\{\subqry\mid \edb{\subqry} \land \exists \qry\in\qpreds.\; \subqry\sqsubseteq\qry\}$).
As the \emph{cover} set $\cpreds$ in $\tcov{\varx}{\qry}{\cpreds}$ may contain both
quantified predicates and equalities between two variables, we define a function $\colpreds{\cpreds}$ that collects
all \emph{generators}, i.e., quantified predicates
and a function $\coleqs{\varx}{\cpreds}$ that collects
all \emph{variables} $\vary$ distinct from $\varx$ occurring
in equalities of the form $\varx\approx \vary$.
We use $\colpredsqry{\cpreds}$ to denote
the query $\bigor{\projoneatom\in\colpreds{\cpreds}} \projoneatom$.
We state the soundness and completeness of the relation
$\tcov{\varx}{\qryrb}{\cpreds}$
in the next lemma, which follows by induction
on the derivation of $\tcov{\varx}{\qryrb}{\cpreds}$.

\begin{lemC}[{\isabelleqed}]\label{lem:cov_complete}
Let $\qryrb$ be a query
with range-restricted bound variables, $\varx \in \fv{\qryrb}$.
\begin{description}
\item[Completeness] Then there exists a set $\cpreds$
of quantified predicates and equalities
such that
$\tcov{\varx}{\qryrb}{\cpreds}$ holds and,
\item[Soundness] for any $\cpreds$ satisfying $\tcov{\varx}{\qryrb}{\cpreds}$
and all $\valsymb$,
\[
\valsymb\models\neg(\colpredsqry{\cpreds}\lor\bigor{\vary\in\coleqs{\varx}{\cpreds}} \varx\approx \vary)
\Longrightarrow
(\valsymb\models\qryrb \Longleftrightarrow \valsymb\models\qryrb[\varx/\bot]).
\]
\end{description}
\end{lemC}
Finally, to preserve the dependencies between the variable $\varx$
and the remaining free variables of $\qry$
occurring in the quantified predicates from $\colpreds{\cpreds}$,
we do not project $\colpreds{\cpreds}$ on the single variable $\varx$,
i.e., we restrict $\varx$ by $\colpredsqry{\cpreds}$
instead of $\exists\fvseq{\qry}\setminus\{\varx\}.\,\colpreds{\cpreds}$
as by Van Gelder and Topor.
From Lemma~\ref{lem:cov_complete},
we derive our optimized translation characterized by the following lemma.
\begin{lemC}[{\isabelleqed}]\label{lem:rw}
Let $\qryrb$ be a query
with range-restricted bound variables, $\varx\in\fv{\qryrb}$,
and $\cpreds$ be such that
$\tcov{\varx}{\qryrb}{\cpreds}$ holds.
Then $\varx\in\fv{\projoneatom}$ and $\fv{\projoneatom}\subseteq\fv{\qryrb}$,
for every $\projoneatom\in \colpreds{\cpreds}$,
and
\begin{equation}
\begin{array}{@{}l@{\,}l@{}}
\qryrb\equiv&\left(\qryrb\land \colpredsqry{\cpreds}\right)\lor
\left(\bigor{\vary\in\coleqs{\varx}{\cpreds}} (\qryrb[\varx\mapsto \vary]\land \varx\approx \vary)\right)\lor {}\\
&\left(\qryrb[\varx/\bot]\land\neg(\colpredsqry{\cpreds}\lor
\bigor{\vary\in\coleqs{\varx}{\cpreds}} \varx\approx \vary)\right).\tag{\transnew}
\end{array}
\end{equation}
\end{lemC}

\looseness=-1
Note that $\varx$ is only guaranteed to be range restricted
in (\transnew)'s first disjunct.
However, it only occurs in the remaining disjuncts in subqueries of a special form
that are conjoined at the top-level to the disjuncts.
These subqueries of a special form are equalities of the form $\varx\approx\vary$
or negations of a disjunction of quantified predicates with a free occurrence of $\varx$
and equalities of the form $\varx\approx \vary$.
We will show how to handle such occurrences
in \S\ref{sec:bound} and \S\ref{sec:free}.
Moreover, the negation of the disjunction can be omitted
if $\vgtrw$ holds.

\subsection{Restricting Bound Variables}\label{sec:bound}

Let $\varx$ be a free variable in a query $\qryrb$ with range-restricted bound variables.
Suppose that the variable $\varx$ is not range restricted, i.e.,
$\gen{\varx}{\qryrb}$ does not hold.
To translate $\exists \varx.\,\qryrb$ into an \infequivprop{} query
with range-restricted bound variables ($\exists \varx.\,\qryrb$ does not have range-re\-strict\-ed bound variables
precisely because $\varx$ is not range restricted in $\qryrb$),
we first apply (\transnew) to $\qryrb$
and distribute the existential quantifier binding $\varx$ over disjunction. 
Next we observe that
\[
\exists \varx.\,(\qryrb[\varx\mapsto \vary]\land \varx\approx \vary)
\equiv \qryrb[\varx\mapsto \vary]\land\exists \varx.\,(\varx\approx \vary)
\equiv \qryrb[\varx\mapsto \vary],
\]
\noindent where the first equivalence follows because $\varx$ does not occur
free in $\qryrb[\varx\mapsto \vary]$ and the second equivalence follows
from the straightforward validity of $\exists \varx.\,(\varx\approx \vary)$.
Moreover, we observe the following inf-equivalence (recall: an equivalence that holds for infinite domains only): 
\[
\exists \varx.\,(\qryrb[\varx/\bot]\land \neg(\colpredsqry{\cpreds}\lor\bigor{\vary\in\coleqs{\varx}{\cpreds}} \varx\approx \vary))\infequivsign \qryrb[\varx/\bot]
\]
\looseness=-1
because $\varx$ is not free in $\qryrb[\varx/\bot]$
and there exists a value $\domval$ for $\varx$ in the infinite domain $\dom$
such that $\varx\neq \vary$ holds for all finitely many $\vary\in\coleqs{\varx}{\cpreds}$
and $\domval$ is not among the finitely many values interpreting
the quantified predicates in $\colpreds{\cpreds}$.
Altogether, we obtain the following lemma.
\begin{lemC}[{\isabelleqed}]\label{lem:rw_bound}
Let $\qryrb$ be a query
with range-restricted bound variables, $\varx \in \fv{\qryrb}$, and $\cpreds$ be a set of quantified predicates and equalities such that
$\tcov{\varx}{\qryrb}{\cpreds}$ holds. Then
\[
\exists \varx.\,\qryrb\infequivsign\left(\exists \varx.\,\qryrb\land \colpredsqry{\cpreds}\right)\lor\left(\bigor{\vary\in\coleqs{\varx}{\cpreds}} (\qryrb[\varx\mapsto \vary])\right)\lor \left(\qryrb[\varx/\bot]\right).\tag{\transex}
\]
\end{lemC}

\begin{figure}[t]
\begin{minipage}{0.5\linewidth}
\begin{algorithm}[H]
  \SetKwInOut{Input}{input}
  \SetKwInOut{Output}{output}
  \SetKwProg{Fn}{function}{$\,=$}{}
  \SetKwProg{lFn}{auxiliary function}{$\,=$}{}
  \Input{An \RC{} query $\qry$.}
  \Output{A query $\qryrb$ with range-restricted bound variables
  such that $\qry\infequivsign{} \qryrb$.}
  \BlankLine
  \lFn{$\loopbound{\disjs}{\varx}$}{$\loopbounddef{\disjs}{\varx}$}
  \BlankLine
  \Fn{$\allowed{\qry}$}{
  \Switch{$\qry$}{
    \lCase{$\neg \subqry$}{\Return $\neg\allowed{\subqry}$}
    \lCase{$\subqrya\lor \subqryb$}{\Return $\allowed{\subqrya}\lor\allowed{\subqryb}$}
    \lCase{$\subqrya\land \subqryb$}{\Return $\allowed{\subqrya}\land\allowed{\subqryb}$}
    \Case{$\exists \varx.\,\exqryx$}{\label{alg:allowed:exflat}
      $\disjs\eq\fldisj{\allowed{\exqryx}}$\;\label{alg:allowed:rec}
      \While{$\loopbound{\disjs}{\varx}\neq\emptyset$}{
      \label{alg:allowed:loopbeg}
        $\fixqry\leftarrow\loopbound{\disjs}{\varx}$\;
        $\cpreds\gets \{\cpreds\mid\tcov{\varx}{\fixqry}{\cpreds}\}$\;\label{alg:allowed:c}
        $\disjs\eq(\disjs\setminus\{\fixqry\})\cup\allowbreak
        \{\fixqry\land\colpredsqry{\cpreds}\}
        \cup\allowbreak\bigcup_{\vary\in\coleqs{\varx}{\cpreds}} \{\fixqry[\varx\mapsto \vary]\}\cup\allowbreak
        \{\fixqry[\varx/\bot]\}$\;
        \label{alg:allowed:ext}
      }\label{alg:allowed:loopend}
      \Return $\bigor{\qryrb\in\disjs} \proj{\varx}{\qryrb}$\;
    }
    \lOther{\Return $\qry$}
  }}
\caption{Restricting bound variables.}
    \label{alg:allowed}
\end{algorithm}
\end{minipage}%
\hspace*{-0.045\linewidth}
\begin{minipage}{0.576\linewidth}
\begin{algorithm}[H]
  \SetKwInOut{Input}{input}
  \SetKwInOut{Output}{output}
  \SetKwProg{Fn}{function}{$\,=$}{}
  \SetKwProg{lFn}{auxiliary function}{$\,=$}{}
  \Input{An \RC{} query $\qry$.}
  \Output{Safe-range query pair $(\qryfin, \qryinf)$
    for which (\propfvref{}) and (\propfinref{}) hold.}
  \BlankLine
  \lFn{$\freeloopa{\setfin}$}{$\freeloopadef{\setfin}$}
  \lFn{$\freeloopb{\setfin}{\qry}$}{$\freeloopbdef{\setfin}{\qry}$}
  \BlankLine
  \Fn{$\splitq{\qry}$}{
  $\setfin\eq\{(\allowed{\qry}, \emptyset)\}; \setinf\eq\emptyset$\;
  \While{$\freeloopa{\setfin}\neq\emptyset$}{\label{alg:free:loopbeg}
    $(\fixqry, \eqconjqry)\leftarrow \freeloopa{\setfin}$\;
    \label{alg:free:choosequery}
    $\varx\leftarrow \gens{\fixqry}$\;\label{alg:free:choosevar}
    $\cpreds\leftarrow \{\cpreds\mid\tcov{\varx}{\fixqry}{\cpreds}\}$\;\label{alg:free:chooserestr}
    $\setfin\eq(\setfin\setminus\{(\fixqry, \eqconjqry)\})\cup\allowbreak
    \{(\fixqry\land \colpredsqry{\cpreds}, \eqconjqry)\}\cup\allowbreak
    \bigcup_{\vary\in\coleqs{\varx}{\cpreds}} \{(\fixqry[\varx\mapsto \vary], \eqconjqry\cup \{(\varx, \vary)\})\}$\;%
    $\setinf\eq\setinf\cup\{\fixqry[\varx/\bot]\}$\;
  }\label{alg:free:loopend}
  \While{$\freeloopb{\setfin}{\qry}\neq\emptyset$}{\label{alg:free:hangingbeg}
	$(\qryinqfin, \eqconjqry)\leftarrow \freeloopb{\setfin}{\qry}$\;
    \label{alg:free:choosequery2}
    $\setfin\eq\setfin\setminus\{(\qryinqfin, \eqconjqry)\}$\;
    $\setinf\eq\setinf\cup\{\qryinqfin \land (\bigwedge_{\qry \in \approx(\eqconjqry)}.\;\qry)\}$\;
  }\label{alg:free:hangingend}
  \Return $(\bigor{(\qryinqfin, \eqconjqry)\in\setfin} (\sconjtwo{\qryinqfin} {\eqconjqry}),\allowbreak
  \allowed{\bigor{\qryinqinf\in\setinf} \exists \fvseq{\qryinqinf}.\,\qryinqinf})$\;
  }
    \caption{Restricting free variables.}
    \label{alg:free}
\end{algorithm}
\end{minipage}\kern-.026\textwidth
\end{figure}

Our approach for restricting all bound variables recursively applies Lemma~\ref{lem:rw_bound}.
Because the set $\cpreds$ such that $\tcov{\varx}{\qry}{\cpreds}$ holds is not
necessarily unique,
we introduce the following (general) notation.
\nondetchoice{}
We define the recursive function $\allowed{\qry}$ in Figure~\ref{alg:allowed},
where $\allowedname$ stands for $\mathsf{r}$ange-restrict $\mathsf{b}$ound variables.
The function converts an arbitrary \RC query $\qry$
into an \infequivprop{} query with range-restricted bound variables.
We proceed by describing the case $\exists\varx.\,\exqryx$.
First, $\allowed{\exqryx}$ is recursively applied
on Line~\ref{alg:allowed:rec}
to establish the precondition of Lemma~\ref{lem:rw_bound}
that the translated query has range-restricted bound variables.
Because existential quantification distributes over disjunction,
we flatten disjunction in $\allowed{\exqryx}$
and process the individual disjuncts independently.
We apply (\transex) to every disjunct $\fixqry$ in which the variable $\varx$
is not already range restricted.
For every $\fixqryc$ added to $\disjs$
after applying (\transex) to $\fixqry$
the variable $\varx$ is either range restricted or does not occur
in $\fixqryc$, i.e., $\varx\notin\gens{\fixqryc}$.
This entails the termination of the loop
on Lines~\ref{alg:allowed:loopbeg}--\ref{alg:allowed:loopend}.

%

\begin{exampleC}[{\isabelleqed}]\label{ex:rw_bound}
Consider the query
$\qsuspusr \eq \predbrand(\varbrand) \land \exists \varscore.\;\forall \varprod.\;\predprod(\varbrand,\varprod) \longrightarrow \predscore(\varprod, \varuser, \varscore)$
from \S\ref{sec:intro}.
Restricting its bound variables yields the query
\[\allowed{\qsuspusr}=\predbrand(\varbrand)\land\biggl(\Bigl(\exists \varscore.\,\bigl(\neg \exists \varprod.\,
\predprod(\varbrand,\varprod) \land\neg
\predscore(\varprod, \varuser, \varscore)\bigr)\land
\bigl(\exists\varprod.\,\predscore(\varprod, \varuser, \varscore)\bigr)\Bigr)\lor
\Bigl(\neg \exists \varprod.\,\predprod(\varbrand,\varprod)\Bigr)\biggr).\]
The bound variable $\varprod$ is already range restricted in $\qsuspusr$
and thus only $\varscore$ must be restricted.
Applying (\transnew{}) to restrict $\varscore$
in $\neg\exists \varprod.\,\predprod(\varbrand,\varprod) \land\neg
\predscore(\varprod, \varuser, \varscore)$,
then existentially quantifying $\varscore$, 
and distributing the existential quantifier over disjunction
would yield the first disjunct in
$\allowed{\qsuspusr}$ above and
$\exists \varscore.\,(\neg \exists \varprod.\,\predprod(\varbrand,\varprod))
\land\neg (\exists \varprod.\,\predscore(\varprod, \varuser, \varscore))$
as the second disjunct.
Because there exists some value in the infinite domain $\dom$
that does not belong to the finite interpretation of the atomic predicate
$\predscore(\varprod, \varuser, \varscore)$, the query
$\exists\varscore.\,\neg (\exists \varprod.\,
\predscore(\varprod, \varuser, \varscore))$
is a tautology over $\dom$. 
Hence, $\exists \varscore.\,(\neg \exists \varprod.\,\predprod(\varbrand,\varprod))
\land\neg (\exists \varprod.\,\predscore(\varprod, \varuser, \varscore))$
is \infequivprop{} to
$\neg \exists \varprod.\,\predprod(\varbrand,\varprod)$, i.e.,
the second disjunct in $\allowed{\qsuspusr}$.
This reasoning justifies that instead of (\transnew{}) our algorithm applies (\transex{}) to restrict $\varscore$
in $\exists\varscore.\,\neg\exists \varprod.\,\predprod(\varbrand,\varprod)
\land\neg\predscore(\varprod, \varuser, \varscore)$.
\end{exampleC}

\subsection{Restricting Free Variables}\label{sec:free}

Given an arbitrary query $\qry$,
we translate the \infequivprop{} query $\allowed{\qry}$
with range-restricted bound variables
into a pair of safe-range queries $(\qryfin, \qryinf)$ such that
our translation's main properties (\propfvref{}) and (\propfinref{}) hold.
Our translation is based on the following lemma.
\begin{lemC}[{\isabelleqed}]\label{lem:rw_free}
Let $\varx$ be a free variable in a query $\qryrb$
with range-restricted bound variables and
let $\tcov{\varx}{\qryrb}{\cpreds}$ for a set of quantified predicates and equalities $\cpreds$.
If $\qryrb[\varx/\bot]$ is not satisfied by any tuple, then
\begin{equation}
\sattup{\qryrb}=\sattup{\left(\qryrb\land\colpredsqry{\cpreds}\right)
\lor\left(\bigor{\vary\in\coleqs{\varx}{\cpreds}} (\qryrb[\varx\mapsto \vary]\land \varx\approx \vary)\right)}.\tag{\transfree}
\end{equation}
If $\qryrb[\varx/\bot]$ is satisfied by some tuple,
then $\sattup{\smash{\qryrb}}$ is an infinite set.
\end{lemC}
\begin{proof}
If $\qryrb[\varx/\bot]$ is not satisfied by any tuple, then
(\transfree) follows from (\transnew).
If $\qryrb[\varx/\bot]$ is satisfied by some tuple,
then the last disjunct in (\transnew) applied to $\qryrb$
is satisfied by infinitely many tuples obtained
by assigning $\varx$ some value from the infinite domain $\dom$
such that $\varx\neq \vary$ holds for all finitely many
$\vary\in\coleqs{\varx}{\cpreds}$
and $\varx$ does not appear among the finitely many values interpreting
the quantified predicates from $\colpreds{\cpreds}$.
\end{proof}

We remark that $\sattup{\smash{\qryrb}}$ might be an infinite set of tuples
even if $\qryrb[\varx/\bot]$ is never satisfied, for some $\varx$.
This is because $\qryrb[\vary/\bot]$ might be satisfied by some tuple,
for some $\vary$, in which case Lemma~\ref{lem:rw_free} (for $\vary$) implies
that $\sattup{\smash{\qryrb}}$ is an infinite set of tuples.
Still, (\transfree) can be applied to $\qryrb$ for $\varx$
resulting in a query satisfied by the same infinite set of tuples.

\looseness=-1
Our approach is implemented by the function $\splitq{\qry}$
defined in Figure~\ref{alg:free}.
In the following, we describe this function
and justify its correctness,
formalized by the input/output specification.
In $\splitq{\qry}$, we represent the queries $\qryfin$ and $\qryinf$
using a set $\setfin$ of pairs consisting of a query and a relation representing a set of equalities 
and a set $\setinf$ of queries such that
\begin{align*}
\qryfin&\eq\bigor{(\qryinqfin, \eqconjqry)\in\setfin} (\sconjtwo{\qryinqfin}{\eqconjqry}),&
\qryinf&\eq\bigor{\qryinqinf\in\setinf} \exists\fvseq{\qryinqinf}.\,\qryinqinf,
\end{align*}
and, for every $(\qryinqfin, \eqconjqry)\in\setfin$, the relation $\eqconjqry$ represents a set of equalities between variables. Hereby, $\sconjtwo{\qryinqfin}{\eqconjqry}$ is a query that is equivalent to $\bigwedge_{\qry \in \{\qryinqfin\} \cup {{\approx}(\eqconjqry)}}. \;\qry$ where ${\approx}(\eqconjqry)$ abbreviates $\{\varx \approx \vary \mid (\varx,\vary) \in \eqconjqry\}$. However, the $\sconjtwo{\qryinqfin}{\eqconjqry}$ operator carefully assembles the conjunction to ensure that the resulting query is safe range (whenever possible). In particular, the operator must iteratively conjoin the equalities from ${\approx}(\eqconjqry)$ to $\qryinqfin$ in a left-associative fashion and always pick next an equation for which one of the variables is free in $\qryinqfin$ or in the equalities conjoined so far, if such an equation exists. (If no such equation exists, the operator is free to conjoin the remaining equations in an arbitrary order.)

Our algorithm proceeds as follows. As long as there exists some $(\fixqry, \eqconjqry)\in\setfin$
such that $\gens{\fixqry}\neq\emptyset$,
we apply (\transfree) to $\fixqry$
and add the query $\fixqry[\varx/\bot]$ to $\setinf$.
We remark that if we applied (\transfree)
to the entire disjunct $\sconjtwo{\fixqry} {\eqconjqry}$,
the loop on Lines~\ref{alg:free:loopbeg}--\ref{alg:free:loopend} might not terminate.
Note that, for every $(\fixqryc, \eqconjqryc)$ added to $\setfin$
after applying (\transfree) to $\fixqry$,
$\gens{\fixqryc}$ is a proper subset of $\gens{\fixqry}$.
This entails the termination of the loop
on Lines~\ref{alg:free:loopbeg}--\ref{alg:free:loopend}.
Finally, if $\sattup{\fixqry}$ is an infinite set of tuples,
then $\sattup{\sconjtwo{\fixqry}{\eqconjqry}}$ is an infinite set of tuples too.
This is because the equalities in $\eqconjqry$ merely duplicate columns
of the query $\fixqry$.
Hence, it indeed suffices to apply (\transfree) to $\fixqry$
instead of $\sconjtwo{\fixqry}{\eqconjqry}$.

After the loop on Lines~\ref{alg:free:loopbeg}--\ref{alg:free:loopend}
in~Figure~\ref{alg:free} terminates,
for every $(\qryinqfin, \eqconjqry)\in\setfin$, the query $\qryinqfin$ is safe range
and $\eqconjqry$ is a conjunction of equalities such that
$\fv{\qryinqfin}\cup\fv{\eqconjqry}\subseteq\fv{\qry}$.
However, the query $\sconjtwo{\qryinqfin}{\eqconjqry}$ does not have to be safe range, e.g.,
if $\qryinqfin\eq \predbrand(\varx)$ and
$\eqconjqry\eq\{(\varx, \vary), (\varw, \vart)\}$.
Given a relation $\eqconjqry$, let $\eclass{\eqconjqry}$ be the set of
equivalence classes of free variables $\fv{\eqset}$
with respect to the (partial) equivalence closure of $\eqconjqry$, i.e., the smallest symmetric and transitive relation that contains $\eqconjqry$. For instance,
$\eclass{\{(\varx, \vary), (\vary, \varz), (\varw, \vart)\}}=
\{\{\varx, \vary, \varz\}, \{\varw, \vart\}\}$.
Let
$\hanging{\qryinqfin}{\eqconjqry}\eq\bigcup_{\varseta\in\eclass{\eqconjqry}, \varseta\cap\fv{\qryinqfin}=\emptyset} \varseta$
be the set of all variables in equivalence classes from $\eclass{\eqconjqry}$
that are disjoint from $\qryinqfin$'s free variables.
Then, $\sconjtwo{\qryinqfin}{\eqconjqry}$ is safe range if and only if
$\hanging{\qryinqfin}{\eqconjqry}=\emptyset$.\footnote{This statement contained the error we
discovered while formalizing the result presented in our conference paper~\cite{DBLP:conf/icdt/RaszykBKT22}.
There we had wrongly used the naive conjunction $\qryinqfin \land (\bigwedge_{\qry \in
\approx(\eqconjqry)}.\;\qry)$, which will not be safe range whenever $\eqconjqry$ has more than one
element, instead of the more carefully constructed $\sconjtwo{\qryinqfin}{\eqconjqry}$.}

\looseness=-1
Now if $\hanging{\qryinqfin}{\eqconjqry}\neq\emptyset$ and the query $\sconjtwo{\qryinqfin}{\eqconjqry}$ is satisfied by some tuple,
then $\sattup{\sconjtwo{\qryinqfin}{\eqconjqry}}$ is an infinite set of tuples
because all equivalence classes of variables in $\hanging{\qryinqfin}{\eqconjqry}\neq\emptyset$
can be assigned arbitrary values from the infinite domain $\dom$.
In our example with $\qryinqfin\eq \predbrand(\varx)$ and $\eqconjqry\eq\{(\varx, \vary), (\varw, \vart)\}$,
we have $\hanging{\qryinqfin}{\eqconjqry}=\{\varw, \vart\}\neq\emptyset$.
Moreover, if $\fv{\qryinqfin}\cup\fv{\eqconjqry}\neq\fv{\qry}$ and $\sconjtwo{\qryinqfin}{\eqconjqry}$ is satisfied by
some tuple, then this tuple can be extended to infinitely many tuples
over $\fv{\qry}$ by choosing arbitrary values from the infinite domain $\dom$
for the variables in the non-empty set $\fv{\qry}\setminus(\fv{\qryinqfin}\cup\fv{\eqconjqry})$.
Hence, for every $(\qryinqfin, \eqconjqry)\in\setfin$ with $\hanging{\qryinqfin}{\eqconjqry}\neq\emptyset$
or $\fv{\qryinqfin}\cup\fv{\eqconjqry}\neq\fv{\qry}$,
we remove $(\qryinqfin, \eqconjqry)$ from $\setfin$ and
add $\sconjtwo{\qryinqfin}{\eqconjqry}$ to $\setinf$.
Note that we only remove pairs from $\setfin$,
hence the loop on Lines~\ref{alg:free:hangingbeg}--\ref{alg:free:hangingend}
terminates.
Afterwards, the query $\qryfin$ is safe range.
However, the query $\qryinf$ does not have to be safe range.
Indeed, every query $\qryinqinf\in\setinf$ has range-restricted bound variables,
but not all the free variables of $\qryinqinf$ need be range restricted
and thus the query $\exists\fvseq{\qryinqinf}.\,\qryinqinf$ does not have to be safe range.
But the query $\qryinf$ is closed and thus the \infequivprop{} query
$\allowed{\qryinf}$ with range-restricted bound variables
is safe range.

\begin{lemC}[{\isabelleqed}]\label{lem:rw_wf}
Let $\qry$ be an \RC query
and $\splitq{\qry}=(\qryfin, \qryinf)$.
Then the queries $\qryfin$ and $\qryinf$ are safe range;
$\fv{\qryfin}=\fv{\qry}$ unless $\qryfin$ is syntactically equal to $\bot$;
and $\fv{\qryinf}=\emptyset$.
\end{lemC}

\begin{lemC}[{\isabelleqed}]\label{lem:rw_sound}
Let $\qry$ be an \RC query
and $\splitq{\qry}=(\qryfin, \qryinf)$.
If $\models \qryinf$,
then $\sattup{\qry}$ is an infinite set.
Otherwise,
$\sattup{\qry}=\sattup{\qryfin}$ is a finite set.
\end{lemC}

By Lemma~\ref{lem:rw_wf}, $\qryfin$ is a safe-range
(and thus also domain-independent) query.
Hence, for the fixed structure, 
the tuples in $\sattup{\qryfin}$ only contain elements in the active domain
$\adom{\qryfin}$, i.e.,
$\sattup{\qryfin}=\sattup{\qryfin}\cap\adom{\qryfin}^{\card{\fv{\qryfin}}}$.
Our translation does not introduce
new constants in $\qryfin$ and thus $\adom{\qryfin}\subseteq\adom{\qry}$.
Hence, by Lemma~\ref{lem:rw_sound}, if $\not\models\qryinf$,
then $\sattup{\qryfin}$ is equal to the
``output-restricted unlimited interpretation''~\cite{DBLP:journals/acta/HullS94} of $\qry$, i.e.,
$\sattup{\qryfin}=\sattup{\qry}\cap\adom{\qry}^{\card{\fv{\qry}}}$.
In contrast, if $\models\qryinf$, then
$\sattup{\qryfin}=\sattup{\qry}\cap\adom{\qry}^{\card{\fv{\qry}}}$
does not necessarily hold.
For instance, for $\qry\eq\neg\predbrand(\varx)$, our translation yields
$\splitq{\qry}=(\bot, \top)$.
In this case, we have $\qryinf=\top$ and thus $\models\qryinf$
because $\neg\predbrand(\varx)$ is satisfied by infinitely many
tuples over an infinite domain.
However, if $\predbrand(\varx)$ is never satisfied, then
$\sattup{\qryfin}=\emptyset$ is not equal to
$\sattup{\qry}\cap\adom{\qry}^{\card{\fv{\qry}}}$.

\looseness=-1
Next, we demonstrate different aspects of our translation on a few examples. Thereby, we use a
mildly modified algorithm that performs constant propagation after all steps that could introduce
constants $\top$ or $\bot$ in a subquery. This optimization keeps the queries small, but is not
necessary for termination and correctness. (In contrast, the constant propagation that is part of
the substitution operators $\qry[\varx\mapsto \vary]$ and $\qry[\varx/\bot]$ is necessary.) We have
verified in Isabelle that our results hold for the modified algorithm. That is, for all above
theorems, we proved two variants: one with and one without additional constant propagation steps.

\begin{exampleC}[{\isabelleqed}]
Consider the query $\qry\eq\predbrand(\varx)\lor\predprod(\varx, \vary)$.
The variable $\vary$ is not range restricted in $\qry$ and thus
$\splitq{\qry}$ restricts $\vary$ by a conjunction of $\qry$
with $\predprod(\varx, \vary)$.
However, if $\qry[\vary/\bot]=\predbrand(\varx)$ is satisfied
by some tuple, then $\sattup{\qry}$ contains infinitely many
tuples.
Hence, $\splitq{\qry}=((\predbrand(\varx)\lor\predprod(\varx, \vary))\land\predprod(\varx, \vary), \exists\varx.\,\predbrand(\varx))$.
Because $\qryfin=(\predbrand(\varx)\lor\predprod(\varx, \vary))\land\predprod(\varx, \vary)$ is only used if $\not\models\qryinf$, i.e.,
if $\predbrand(\varx)$ is never satisfied, we could simplify
$\qryfin$ to $\predprod(\varx, \vary)$. However,
our translation does not implement such heuristic simplifications.
\end{exampleC}

\begin{exampleC}[{\isabelleqed}]
Consider the query $\qry\eq \predbrand(\varx)\land \varw\approx \vart$.
The variables $\varw$ and $\vart$ are not range restricted in $\qry$
and thus $\splitq{\qry}$ chooses one of these variables
(e.g., $\varw$) and restricts it by splitting
$\qry$ into $\qryinqfin=\predbrand(\varx)$ and $\eqconjqry=\{(\varw,\vart)\}$.
Now, all variables are range restricted in $\qryinqfin$,
but the variables in $\qryinqfin$ and $\eqconjqry$ are disjoint.
Hence, $\sattup{\qry}$ contains infinitely many tuples
whenever $\qryinqfin$ is satisfied by some tuple.
In contrast, $\sattup{\qry}=\emptyset$
if $\qryinqfin$ is never satisfied.
Hence, we have $\splitq{\qry}=(\bot,\exists\varx.\,\predbrand(x))$.
\end{exampleC}

\begin{exampleC}[{\isabelleqed}]\label{ex:rw_free}
Consider the query
$\qsuspusr \eq \predbrand(\varbrand) \land \exists \varscore.\;\forall \varprod.\;\predprod(\varbrand,\varprod) \longrightarrow \predscore(\varprod, \varuser, \varscore)$
from \S\ref{sec:intro}.
Restricting its bound variables yields the query
$\allowed{\qsuspusr}=\predbrand(\varbrand)\land((\exists \varscore.\,(\neg \exists \varprod.\,
\predprod(\varbrand,\varprod) \land\neg
\predscore(\varprod, \varuser, \varscore))\land
(\exists\varprod.\,\predscore(\varprod, \varuser, \varscore)))\lor
(\neg \exists \varprod.\,\predprod(\varbrand,\varprod)))$
derived in Example~\ref{ex:rw_bound}.
Splitting $\qsuspusr$ yields
\[\splitq{\qsuspusr}=\Bigl(\allowed{\qsuspusr}\land
\bigl(\exists\varscore,\varprod.\,\predscore(\varprod, \varuser, \varscore)\bigr),
\exists\varbrand.\,\predbrand(\varbrand)\land\neg \exists \varprod.\,\predprod(\varbrand,\varprod)\Bigr).\]
To understand $\splitq{\qsuspusr}$,
we apply (\transnew) to $\allowed{\qsuspusr}$
for the free variable $\varuser$:
\[
\allowed{\qsuspusr}\equiv \Bigl(\allowed{\qsuspusr}\land
\bigl(\exists\varscore,\varprod.\,\predscore(\varprod, \varuser, \varscore)\bigr)\Bigr)\lor
\Bigl(\predbrand(\varbrand)\land\bigl(\neg \exists \varprod.\,\predprod(\varbrand,\varprod)\bigr)\land\neg
\exists\varscore,\varprod.\,\predscore(\varprod, \varuser, \varscore)\Bigr).
\]
\looseness=-1
If the subquery $\predbrand(\varbrand)\land(\neg \exists \varprod.\,\predprod(\varbrand,\varprod))$
from the second disjunct is satisfied for some $\varbrand$,
then $\qsuspusr$ is satisfied by infinitely many values for $\varuser$
from the infinite domain $\dom$
that do not belong to the finite interpretation of
$\predscore(\varprod, \varuser, \varscore)$ and thus satisfy the subquery
$\neg\exists\varscore,\varprod.\,\predscore(\varprod, \varuser, \varscore)$.
Hence, $\sattupd{\qsuspusr}{\str}=\sattupd{\allowed{\qsuspusr}}{\str}$
is an infinite set of tuples whenever
$\predbrand(\varbrand)\land\neg \exists \varprod.\,\predprod(\varbrand,\varprod)$ is satisfied for some $\varbrand$.
In contrast, if $\predbrand(\varbrand)\land\neg \exists \varprod.\,\predprod(\varbrand,\varprod)$
is not satisfied for any $\varbrand$,
then $\qsuspusr$ is equivalent to $\allowed{\qsuspusr}\land
(\exists\varscore,\varprod.\,\predscore(\varprod, \varuser, \varscore))$
obtained also by applying (\transfree) to $\qsuspusr$
for the free variable $\varuser$.
\end{exampleC}

\section{Complexity Analysis}\label{sec:complex}

\looseness=-1
We analyze the time complexity of capturing $\qry$, i.e.,
checking if $\sattup{\qry}$ is finite and enumerating $\sattup{\qry}$ in this case.
To bound the asymptotic time complexity of capturing a fixed~$\qry$,
we need to apply an additional standard translation step to both queries produced 
by our translation to obtain two RANF queries. Query cost (\S\ref{sec:cost}) can 
then be applied to the resulting two queries to bound computation time based on 
the cardinalities of subquery evaluation results.
\begin{definition}\label{def:rw}
  Let $\qry$ be an \RC{} query and $\splitq{\qry}=(\qryfin, \qryinf)$.
  Let $\safefin\eq\allowtosafeqry{\qryfin}$ and
  $\safeinf\eq\allowtosafeqry{\qryinf}$ be the equivalent \safeprop{} queries.
  We define $\rw{\qry}\eq(\safefin, \safeinf)$.
\end{definition}
Since function $\allowtosafeqry{\cdot}$ is a standard translation step, we present it 
in \S\ref{sec:detail} (see Figure~\ref{alg:allowtosafe}). Note that the proof of 
Lemma~\ref{lem:closeatoms_sub} relies on its algorithmic details.

We ignore the (constant) time complexity of computing
$\rw{\qry}=(\safefin, \safeinf)$ and focus on
the time complexity of evaluating the \safeprop{} queries $\safefin$ and
$\safeinf$, i.e., the query cost of $\safefin$ and $\safeinf$.
Without loss of generality, we assume that the input query
$\qry$ has \pwdist{}
to derive a set of quantified predicates from $\qry$'s atomic predicates
and formulate our time complexity bound.
Still, the \safeprop{} queries $\safefin$ and
$\safeinf$ computed by our translation need not have \pwdist{}.

We define the relation $\varle{\qry}$
on $\av{\qry}$ such that $\varx\varle{\qry}\vary$ iff
the scope of an occurrence of $\varx \in \av{\qry}$ is contained
in the scope of an occurrence of $\vary \in \av{\qry}$.
Formally, we define $\varx\varle{\qry}\vary$ iff $\vary\in\fv{\qry}$
or $\exists \varx.\,\exqryx\sqsubseteq\exists \vary.\,\exqryy\sqsubseteq \qry$
for some $\exqryx$ and $\exqryy$.
Note that $\varle{\qry}$ is a preorder 
on all variables and a partial order on the bound variables
for every query with \pwdist{}.

Let $\colatoms{\qry}$ be the set of all atomic predicates in a query $\qry$.
We denote by $\closeatomseq{\qry}$ the set of quantified predicates
obtained from $\colatoms{\qry}$
by performing the variable substitution $\varx\mapsto \vary$,
where $\varx$ and $\vary$ are related by equalities in $\qry$
and $\varx\varle{\qry}\vary$,
and existentially quantifying from a quantified predicate $\projoneatom$
the innermost bound variable $\varx$ in $\qry$
that is free in $\projoneatom$.
Let $\transeqs{\qry}$ be the transitive closure
of equalities occurring in $\qry$.
Formally, we define $\closeatomseq{\qry}$ by:
\begin{itemize}
\item $\oneatom\in\closeatomseq{\qry}$ if $\oneatom\in\colatoms{\qry}$;
\item $\projoneatom[\varx\mapsto \vary]\in\closeatomseq{\qry}$ if
$\projoneatom\in\closeatomseq{\qry}$,
$(\varx, \vary)\in\transeqs{\qry}$,
and $\varx\varle{\qry}\vary$;
\item $\exists\varx.\,\projoneatom\in\closeatomseq{\qry}$
if $\projoneatom\in\closeatomseq{\qry}$, $\varx\in\fv{\projoneatom}\setminus\fv{\qry}$,
and $\varx\varle{\qry}\vary$ for all $\vary\in\fv{\projoneatom}$.
\end{itemize}

\looseness=-1
We bound the time complexity of capturing $\qry$
by considering subsets $\projatomset$ of quantified predicates
$\closeatomseq{\qry}$ that are \emph{minimal} in the sense
that every quantified predicate in $\projatomset$ contains a unique free variable
that is not free in any other quantified predicate in $\projatomset$.
Formally, we define $\projminimal\eq\projminimaldef$.
Every minimal subset $\projatomset$
of quantified predicates $\closeatomseq{\qry}$
contributes the product of the numbers of tuples satisfying
each quantified predicate $\projoneatom\in\projatomset$
to the overall bound (that product is an upper bound on the
number of tuples satisfying the join over all $\projoneatom\in\projatomset$).
Similarly to~\citeauthorMWJ~\cite{DBLP:journals/sigmod/NgoRR13},
we use the notation $\tildeo{\cdot}$ to hide
logarithmic factors incurred by set operations.

\begin{theorem}\label{lem:data_complexity}
Let $\qry$ be a fixed \RC query with \pwdist{}.
The time complexity of capturing $\qry$, i.e.,
checking if $\sattup{\qry}$ is finite
and enumerating $\sattup{\qry}$ in this case, is in
$\textstyle\tildeo{\sum_{\projatomset\subseteq\closeatomseq{\qry},\projminimal}
\prod_{\projoneatom\in\projatomset} \card{\sattup{\projoneatom}}}.$
\end{theorem}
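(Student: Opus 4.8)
The plan is to establish the bound by tracing the cost of evaluating the two RANF queries $\safefin$ and $\safeinf$ produced by $\rw{\qry}$, and showing that each RANF subquery encountered along the way contributes at most $\tildeo{\prod_{\projoneatom\in\projatomset}\card{\sattup{\projoneatom}}}$ for some minimal $\projatomset\subseteq\closeatomseq{\qry}$. By Lemmas~\ref{lem:rw_wf} and~\ref{lem:rw_sound}, capturing $\qry$ amounts to: evaluate $\safeinf$ (a closed RANF query, so its cost is the sum of intermediate result sizes); if it holds, report ``infinite''; otherwise evaluate $\safefin$ and enumerate $\sattup{\safefin}=\sattup{\qry}$. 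So it suffices to bound $\cost{\safefin}{\str}$ and $\cost{\safeinf}{\str}$, i.e., to bound $\card{\sattupd{\subqry}{\str}}\cdot\card{\fv{\subqry}}$ for each RANF subquery $\subqry$ of $\safefin$ and $\safeinf$; since $\qry$ is fixed, $\card{\fv{\subqry}}$ is a constant and only the number of satisfying tuples matters.

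The key structural step is to identify, for each RANF subquery $\subqry$ arising in $\safefin$ or $\safeinf$, a set of ``generators'' that range-restrict all of $\subqry$'s free variables, and to argue these generators are (up to the substitutions and quantifications permitted in the definition of $\closeatomseq{\qry}$) elements of $\closeatomseq{\qry}$. This is where the algorithmic details matter: every query added to $\disjs$ or $\setfin$ in Figures~\ref{alg:allowed} and~\ref{alg:free} is obtained from $\qry$ by (a) conjoining generators $\colpredsqry{\cpreds}$ coming from the covered relation, (b) variable substitutions $[\varx\mapsto\vary]$ for equalities $\varx\approx\vary$ occurring in $\qryrb$, or (c) substitutions $[\varx/\bot]$ followed by constant propagation; the subsequent $\allowtosafeqry{\cdot}$ step only rewrites using $\rulea$–$\rulec$, which do not introduce new atomic predicates. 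By Lemma~\ref{lem:rw} (and Lemma~\ref{lem:gen_wit}), each generator $\projoneatom\in\colpreds{\cpreds}$ satisfies $\varx\in\fv{\projoneatom}\subseteq\fv{\qryrb}$ and is a quantified predicate built from an atomic predicate of $\qryrb$; chasing this through the recursion (and using that equalities get applied bottom-up respecting the scope preorder $\varle{\qry}$, and that innermost bound variables get quantified) shows that each such generator lies in $\closeatomseq{\qry}$. I expect this provenance argument — in particular matching the substitutions performed by the algorithm with the closure rules defining $\closeatomseq{\qry}$, and showing the relevant bound variables are quantified in the right order — to require a careful induction, referencing Lemma~\ref{lem:closeatoms_sub} (mentioned after Definition~\ref{def:rw}) which relates the atomic predicates of $\safefin,\safeinf$ to $\closeatomseq{\qry}$.

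Given the provenance, the counting step is comparatively routine: for a fixed RANF subquery $\subqry$ with generators $G_1,\dots,G_m\in\closeatomseq{\qry}$ covering $\fv{\subqry}$, the number of satisfying tuples of $\subqry$ is at most $\prod_{i} \card{\sattup{G_i}}$, since every satisfying assignment must satisfy each $G_i$ and $\fv{\subqry}=\bigcup_i \fv{G_i}$. Pruning the $G_i$ to a subset that still covers $\fv{\subqry}$ while every chosen generator owns a private free variable yields a \emph{minimal} $\projatomset$ in the sense of $\projminimal$, with $\card{\sattupd{\subqry}{\str}}\le\prod_{\projoneatom\in\projatomset}\card{\sattup{\projoneatom}}$ (one selects generators greedily, discarding any whose free variables are already covered). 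Summing over the finitely many RANF subqueries of $\safefin$ and $\safeinf$ (a constant number, depending only on the fixed $\qry$), and absorbing the constant $\card{\fv{\subqry}}$ factors, gives a bound of the form $\tildeo{\sum_{\projatomset\subseteq\closeatomseq{\qry},\projminimal}\prod_{\projoneatom\in\projatomset}\card{\sattup{\projoneatom}}}$; the $\tildeo{\cdot}$ notation absorbs the logarithmic factors from the set operations used to realize projections, joins, and anti-joins, exactly as in the cost model of \S\ref{sec:cost}. The main obstacle, as noted, is the bookkeeping connecting the algorithm's output to $\closeatomseq{\qry}$ — everything downstream is essentially the observation that a conjunctive query's output is bounded by the product of the sizes of any covering family of its atoms.
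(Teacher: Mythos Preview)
Your proposal follows the paper's route: bound $\cost{\safefin}{\str}+\cost{\safeinf}{\str}$, invoke Lemma~\ref{lem:closeatoms_sub} for the containment $\closeatoms{\safeqry}\subseteq\closeatomseq{\qry}$, and then bound each RANF subquery's size in terms of minimal families from $\closeatomseq{\qry}$. The paper packages your ``comparatively routine'' counting step into an explicit \emph{guard query} $\guardqry{\safeqry}$ (Lemma~\ref{lem:guard_qry}), proved by induction on the RANF structure, and then derives the per-subquery bound (Lemma~\ref{lem:complex_subquery_subset}) and total cost bound (Lemma~\ref{lem:cost_estimate}) from it.

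One imprecision in your counting step is worth flagging. You claim that for each RANF subquery $\safesubqry$ there is a \emph{single} family $G_1,\dots,G_m$ with $\card{\sattup{\safesubqry}}\le\prod_i\card{\sattup{G_i}}$ because ``every satisfying assignment must satisfy each $G_i$''. This fails for RANF disjunctions $\safeqrya\lor\safeqryb$: a satisfying assignment need only satisfy the generators of one disjunct, so the correct per-subquery bound is a \emph{sum} over several minimal families, not a single product. Relatedly, not every free variable of a RANF subquery is covered by a quantified predicate alone---some arise only via equalities conjoined to a generator (e.g., $\predbrand(\varx)\land\varx\approx\vary$)---which is why the paper's guard query carries an equality component $\eqset$ alongside $\projatomset$. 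Both points are handled by Lemma~\ref{lem:guard_qry}, whose conclusion already has the shape $\sattup{\safesubqry}\subseteq\bigcup_{\projatomset,\eqset}\sattup{\bigwedge_{\projoneatom\in\projatomset}\projoneatom\land\bigwedge_{\eqqry\in\eqset}\eqqry}$; once you adopt that formulation, your final summation over subqueries and absorption of query-dependent constants goes through exactly as you describe.
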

Before we prove Theorem~\ref{lem:data_complexity} we first provide some 
examples to reinforce the intuition behind our claim.
Examples~\ref{ex:vgt} and~\ref{ex:fin-dom} show that the time complexity from
Theorem~\ref{lem:data_complexity} cannot be achieved
by the translation of Van Gelder and Topor~\cite{DBLP:journals/tods/GelderT91}
or over finite domains.
Example~\ref{ex:cart-prod} shows how equalities affect the bound in Theorem~\ref{lem:data_complexity}.

\begin{example}\label{ex:vgt}
Consider the query
$\qry\eq \predbrand(\varbrand) \land \exists \varuser, \varscore.\,\neg\exists \varprod.\,
\predprod(\varbrand, \varprod) \land\neg \predscore(\varprod,\varuser,\varscore)$, equivalent to $\qsusp$ from \S\ref{sec:intro}.
Then $\colatoms{\qry}=\{\predbrand(\varbrand), \predprod(\varbrand, \varprod), \predscore(\varprod,\varuser,\varscore)\}$ and
$\closeatomseq{\qry}=\{\predbrand(\varbrand), \allowbreak\predprod(\varbrand, \varprod),\exists \varprod.\,\predprod(\varbrand, \varprod),
\predscore(\varprod,\varuser,\varscore), \exists \varprod.\,\predscore(\varprod,\varuser,\varscore), \exists \varscore, \varprod.\,\predscore(\varprod,\varuser,\varscore),\allowbreak
\exists \varuser, \varscore, \varprod.\,\predscore(\varprod,\varuser,\varscore)\}$.
The translated query $\qryvgt$ by Van Gelder and Topor~\cite{DBLP:journals/tods/GelderT91}
\[
\begin{array}{@{}l@{}}
\Bigl(\highlight{\bigl(\exists \varscore, \varprod.\,\predscore(\varprod,\varuser,\varscore)\bigr) \land \bigl(\exists \varuser, \varprod.\,\predscore(\varprod,\varuser,\varscore)\bigr)} \land \predbrand(\varbrand)\Bigr) \land{} \\\neg\exists \varprod.\,
\Bigl(\highlight{\bigl(\exists \varscore, \varprod.\,\predscore(\varprod,\varuser,\varscore)\bigr) \land \bigl(\exists \varuser, \varprod.\,\predscore(\varprod,\varuser,\varscore)\bigr)} \land \predprod(\varbrand, \varprod)\Bigr) \land \neg \predscore(\varprod,\varuser,\varscore)
\end{array}
\]
restricts the variables $\varuser$ and $\varscore$ by $\exists \varscore, \varprod.\,\predscore(\varprod,\varuser,\varscore)$ and
$\exists \varuser, \varprod.\,\predscore(\varprod,\varuser,\varscore)$, respectively.
Note that this corresponds to the RA expression shown in \S\ref{sec:intro} with the 
highlighted generators replaced with $\pi_{\username}(\predscore) \times \pi_{\scorename}(\predscore)$.

Consider an interpretation of
$\predbrand$ by $\{(\cstsymba)\mid\cstsymba\in\{1, \ldots, \prodparam\}\}$,
$\predprod$ by $\{(\cstsymba, \cstsymba)\mid\cstsymba\in\{1, \ldots, \prodparam\}\}$,
and $\predscore$ by
$\{(\cstsymb, \cstsymba, \cstsymba)\mid\cstsymb\in\{1, \ldots, \prodparam\}, \cstsymba\in\{1, \ldots, \usrparam\}\}$,
$\prodparam, \usrparam\in\nats$. Computing the join of
$\predprod(\varbrand, \varprod)$,
$\exists \varscore, \varprod.\,\predscore(\varprod,\varuser,\varscore)$, and
 $\exists \varuser, \varprod.\,\predscore(\varprod,\varuser,\varscore)$, which is a Cartesian product, 
results in a time complexity in $\Omega(\prodparam\cdot\usrparam^2)$ for $\qryvgt$.
In contrast, Theorem~\ref{lem:data_complexity} yields
an asymptotically better time complexity in $\tildeo{\prodparam+\usrparam+\prodparam\cdot\usrparam}$
for our translation, more precisely:
\[\tildeo{\card{\sattup{\predbrand(\varbrand)}}+\card{\sattup{\predprod(\varbrand,\varprod)}}+\allowbreak\card{\sattup{\predscore(\varprod,\varuser,\varscore)}}+
(\card{\sattup{\predbrand(\varbrand)}}+\card{\sattup{\predprod(\varbrand,\varprod)}})\cdot\card{\sattup{\predscore(\varprod,\varuser,\varscore)}}},\]
which corresponds to the complexity of evaluating the RA expression shown in \S\ref{sec:intro}.
\end{example}

\begin{example}\label{ex:fin-dom}
The query $\neg \predscore(\varx, \vary, \varz)$ is satisfied by a finite set of tuples
over a finite domain~$\dom$ (as is every query over a finite domain).
For an interpretation of $\predscore$ by
$\{(\cstsymb, \cstsymb, \cstsymb)\mid \cstsymb\in\dom\}$,
the equality $\card{\dom}=\card{\sattup{\predscore(\varx, \vary, \varz)}}$
holds and the number of satisfying tuples is
\[
\card{\sattup{\neg \predscore(\varx, \vary, \varz)}}=\card{\dom}^3-\card{\sattup{\predscore(\varx, \vary, \varz)}}=
\card{\sattup{\predscore(\varx, \vary, \varz)}}^3-\card{\sattup{\predscore(\varx, \vary, \varz)}}\in\Omega(\card{\sattup{\predscore(\varx, \vary, \varz)}}^3),
\]
which exceeds the bound $\tildeo{\card{\sattup{\predscore(\varx,\vary,\varz)}}}$ of Theorem~\ref{lem:data_complexity}. Hence, our infinite domain assumption is crucial for achieving the better complexity bound.
\end{example}

\begin{example}\label{ex:cart-prod}
Consider the following query over the infinite domain $\dom=\nats$ of natural numbers:
\[
\begin{array}{@{}l@{\,}l@{}}
\qry\eq\forall \varw.&(\varw\approx0\lor \varw\approx1\lor \varw\approx2)\longrightarrow\\
&(\exists \vart.\,\predbrand(\vart)\land(\varw\approx0\longrightarrow \varx\approx\vart)
\land(\varw\approx1\longrightarrow \vary\approx\vart)
\land(\varw\approx2\longrightarrow \varz\approx\vart)).
\end{array}
\]
Note that this query is equivalent to $\qry\equiv \predbrand(\varx)\land \predbrand(\vary)\land \predbrand(\varz)$
and thus it is satisfied by a finite set of tuples of size
$\card{\sattup{\predbrand(\varx)}}\cdot\card{\sattup{\predbrand(\vary)}}\cdot\card{\sattup{\predbrand(\varz)}}=\card{\sattup{\predbrand(\varx)}}^3$.
The set of atomic predicates of~$\qry$ is $\colatoms{\qry}=\{\predbrand(\vart)\}$
and it must be closed
under the equalities occurring in $\qry$ to yield a valid bound
in Theorem~\ref{lem:data_complexity}.
In this case, $\closeatomseq{\qry}=\{\predbrand(\vart), \exists\vart.\,\predbrand(\vart), \predbrand(\varx), \predbrand(\vary), \predbrand(\varz)\}$
and the bound in Theorem~\ref{lem:data_complexity} is
$\card{\sattup{\predbrand(\vart)}}\cdot\card{\sattup{\predbrand(\varx)}}\cdot\card{\sattup{\predbrand(\vary)}}\cdot\card{\sattup{\predbrand(\varz)}}=
\card{\sattup{\predbrand(\varx)}}^4$. In particular, this bound is not tight, but it still
reflects the complexity of evaluating the \safeprop{} queries produced
by our translation as it does not derive
the equivalence $\qry\equiv \predbrand(\varx)\land \predbrand(\vary)\land \predbrand(\varz)$.
\end{example}

\looseness=-1
Now, to prove Theorem~\ref{lem:data_complexity}, we need to 
introduce guard queries and the set of quantified predicates of a query.
Given a \safeprop{} query $\safeqry$,
we define a \emph{\guard{}} query $\guardqry{\safeqry}$
that is implied by $\safeqry$, i.e.,
$\guardqry{\safeqry}$ can be used to over-approximate
the set of satisfying tuples for $\safeqry$.
We use this over-approximation in our proof
of Theorem~\ref{lem:data_complexity}.
The \guard{} query $\guardqry{\safeqry}$ has a simple structure:
it is the disjunction of conjunctions of quantified predicates
and equalities.

We now define the set of quantified predicates $\closeatoms{\qry}$
occurring in the \guard{} query $\guardqry{\qry}$.
For an atomic predicate $\oneatom\in\colatoms{\qry}$,
let $\bset{\oneatom}{\qry}$ be the set of sequences of bound variables
for all occurrences of~$\oneatom$ in~$\qry$. For example, consider a query $\qex\eq
((\exists \varz.\,(\exists \vary,\varz.\,\predexa(\varx, \vary, \varz))\land \predexb(\vary, \varz))\land \predexc(\varz))\lor \predexa(\varx, \vary, \varz)$.
Then $\colatoms{\qex}=\{\predexc(\varz), \predexb(\vary, \varz),\allowbreak
\predexa(\varx, \vary, \varz)\}$ and
$\bset{\predexa(\varx, \vary, \varz)}{\qex}=\{[\vary, \varz], \emptyseq\}$,
where $\emptyseq$ denotes the empty sequence
corresponding to the occurrence of $\predexa(\varx, \vary, \varz)$ in $\qex$
for which the variables $\varx, \vary, \varz$ are all free in $\qex$.
Note that the variable $\varz$ in the other occurrence of $\predexa(\varx, \vary, \varz)$
in $\qex$ is bound to the innermost quantifier. Hence,
neither $[\varz, \vary]$ nor $[\varz, \vary, \varz]$
are in $\bset{\predexa(\varx, \vary, \varz)}{\qex}$.
Furthermore, let $\closeatoms{\qry}$ be the set of the quantified predicates
obtained by existentially quantifying sequences of bound variables in
$\bset{\oneatom}{\subqry}$ from the atomic predicates $\oneatom\in \colatoms{\subqry}$
in all subqueries $\subqry$ of $\qry$.
Formally,
$\closeatoms{\qry}\eq\bigcup_{\substack{\subqry\sqsubseteq \qry,
\oneatom\in\colatoms{\subqry}}}
\{\exists\varlist.\,\oneatom\mid\varlist\in\bset{\oneatom}{\subqry}\}$.
For instance,
$\closeatoms{\qex}=\{\predexa(\varx, \vary, \varz), \exists \varz.\,\predexa(\varx, \vary, \varz), \exists \vary, \varz.\,\predexa(\varx, \vary, \varz),
\predexb(\vary, \varz), \exists \varz.\,\predexb(\vary, \varz), \predexc(\varz)\}$.

\looseness=-1
A crucial property of our translation, which is central
for the proof of Theorem~\ref{lem:data_complexity},
is the relationship between the quantified predicates
$\closeatoms{\safeqry}$ for a \safeprop{} query
$\safeqry$ produced by our translation and the original query $\qry$.
The relationship is formalized in the following lemma.

\begin{lemma}\label{lem:closeatoms_sub}
Let $\qry$ be an \RC query with \pwdist{}
and let $\rw{\qry}=(\safefin, \safeinf)$.
Let $\safeqry\in\{\safefin, \safeinf\}$.
Then $\closeatoms{\safeqry}\subseteq\closeatomseq{\qry}$.
\end{lemma}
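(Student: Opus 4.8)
The key structural fact is that $\closeatomseq{\qry}$ is, by the inductive clauses defining it, the least set of quantified predicates that contains every atomic predicate of $\qry$ and is closed under (ii) the substitution $\projoneatom\mapsto\projoneatom[\varx\mapsto\vary]$ whenever $(\varx,\vary)\in\transeqs{\qry}$ and $\varx\varle{\qry}\vary$, and (iii) prefixing a single innermost existential quantifier. So it suffices to show that every member of $\closeatoms{\safeqry}$ admits a derivation from $\colatoms{\qry}$ using clauses (ii)--(iii). The plan is to do this by unwinding the three stages of $\rw{\qry}$, namely $\qry\mapsto\allowed{\qry}\mapsto\splitq{\qry}=(\qryfin,\qryinf)\mapsto(\allowtosafeqry{\qryfin},\allowtosafeqry{\qryinf})$, from the outside in. The $\allowtosafeqry{\cdot}$ stage is handled first and most easily: inspecting its definition (Figure~\ref{alg:allowtosafe}) shows that its only effects on atomic predicates are to duplicate subqueries (as in \rulea{} and \rulec) and to move an existential quantifier $\exists\varx$ outward past a conjunct in which $\varx$ does not occur free (as in \ruleb), and -- crucially -- that it introduces no fresh variable names; none of this alters, for any atomic-predicate occurrence, the list of enclosing quantifiers that actually bind a variable appearing in it. Hence $\closeatoms{\allowtosafeqry{\qry'}}\subseteq\closeatoms{\qry'}$ for every query $\qry'$, and it remains to prove $\closeatoms{\qryfin}\subseteq\closeatomseq{\qry}$ and $\closeatoms{\qryinf}\subseteq\closeatomseq{\qry}$.

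For the other two stages I would maintain the invariant that every query produced along the way -- the disjuncts in $\disjs$ inside $\allowed{\cdot}$, and the queries recorded in $\setfin$ and $\setinf$ inside $\splitq{\cdot}$ -- is \emph{$\qry$-bounded}, i.e.\ satisfies $\closeatoms{\cdot}\subseteq\closeatomseq{\qry}$, and that every equality pair $(\varx,\vary)$ stored in an $\eqconjqry$ lies in $\transeqs{\qry}$ with $\varx\varle{\qry}\vary$. Both $\allowed{\cdot}$ and $\splitq{\cdot}$ act on a chosen disjunct $\fixqry$ (for a chosen variable $\varx$) only through: conjoining $\colpredsqry{\cpreds}$ for some $\cpreds$ with $\tcov{\varx}{\fixqry}{\cpreds}$, forming $\fixqry[\varx\mapsto\vary]$ for $\vary\in\coleqs{\varx}{\cpreds}$, forming $\fixqry[\varx/\bot]$, and (in $\splitq{\cdot}$) the conjunction-assembly $\sconjtwo{\qryinqfin}{\eqconjqry}$ and the final step $\allowed{\bigor{\qryinqinf\in\setinf}\exists\fvseq{\qryinqinf}.\,\qryinqinf}$. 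Deletion $[\varx/\bot]$ only erases atomic predicates; $\sconjtwo{}{}$ only conjoins variable--variable equalities, which are not atomic predicates and add no quantifiers; and the substitution $[\varx\mapsto\vary]$ is exactly closure clause (ii) once one knows $(\varx,\vary)\in\transeqs{\qry}$ and $\varx\varle{\qry}\vary$. The only real work is a subsidiary induction on the derivation of $\tcov{\varx}{\fixqry}{\cpreds}$ (following Figures~\ref{fig:cov_extra} and~\ref{fig:gen_con}) showing that each generator $\projoneatom\in\colpreds{\cpreds}$ is built from an atomic predicate of $\fixqry$ by substituting along equalities of $\fixqry$ and then existentially quantifying bound variables of $\fixqry$ innermost-first -- i.e.\ again by clauses (ii)--(iii) applied to a predicate already in $\closeatomseq{\qry}$ (because $\fixqry$ is $\qry$-bounded) -- and that the variables collected into $\coleqs{\varx}{\cpreds}$ are related to $\varx$ by equalities of $\fixqry$, hence transitively of $\qry$.

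The step I expect to dominate the effort is the side-condition bookkeeping for clauses (ii)--(iii), because these are stated relative to $\transeqs{\qry}$ and $\varle{\qry}$ of the \emph{original} $\qry$, while the translation rearranges, duplicates, and substitutes into subqueries, so the quantifier nesting of $\fixqry$ is not literally that of $\qry$. One must show: no fresh variable is ever introduced (substitution targets are always free variables of the query being substituted into, so capture never arises); every equality driving a substitution descends, possibly through earlier substitutions, from an equality present in $\qry$, hence lands in $\transeqs{\qry}$; and whenever a variable $\varx$ is quantified off a generator, the remaining free variables all contain $\varx$'s scope in $\qry$ (giving $\varx\varle{\qry}\vary$), which holds because $\varx$ is the bound variable currently being processed and the others are free in the enclosing subquery of $\qry$, so by pairwise-distinctness they are either free in $\qry$ or bound by an outer quantifier of $\qry$ -- both cases yield the required orientation; the symmetric claim is needed for those free variables of $\qry$ that $\splitq{\cdot}$'s final $\exists\fvseq{\qryinqinf}$ re-binds. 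A secondary but genuine obstacle, flagged in the text itself, is that this argument cannot rest on the rewrite rules \rulea--\rulec abstractly but needs the concrete recursion of $\allowtosafeqry{\cdot}$, precisely to certify that it never performs a capture-requiring quantifier pull-out that would force a renaming.
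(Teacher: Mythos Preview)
Your proposal is correct and agrees with the paper on the $\allowtosafeqry{\cdot}$ stage: both argue $\closeatoms{\allowtosafeqry{\qry'}}\subseteq\closeatoms{\qry'}$ by checking that the concrete recursion never triggers the capture-avoiding renaming (the paper states this as ``Line~\ref{alg:allowtosafe:capture} is never executed''), so that the enclosing-quantifier lists of atomic predicates are unchanged.

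Where you diverge is in handling $\splitq{\cdot}$. You propose to carry the invariant $\closeatoms{\fixqry}\subseteq\closeatomseq{\qry}$ (plus the equality side-conditions) through every iteration of the while-loops in $\allowed{\cdot}$ and $\splitq{\cdot}$, re-verifying clauses (ii)--(iii) after each substitution and each generator conjunction. The paper instead inserts an intermediate object and factors the inclusion as
\[
\closeatoms{\safeqry}\ \subseteq\ \closeatoms{\qryfin}\ \subseteq\ \closeatomseq{\qryfin}\ \subseteq\ \closeatomseq{\qry}
\]
(and likewise for $\qryinf$). The middle inclusion is the general fact $\closeatoms{\subqry}\subseteq\closeatomseq{\subqry}$, and the last one is obtained not by a loop invariant but by three \emph{global} containments on the data from which $\closeatomseq{\cdot}$ is built: $\colatoms{\qryfin}\subseteq\closeatomseq{\qry}$, $\transeqs{\qryfin}\subseteq\transeqs{\qry}$, and ${\varle{\qryfin}}\subseteq{\varle{\qry}}$. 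Since $\closeatomseq{\cdot}$ is defined as the least set closed under operations parameterised by exactly these three pieces of data, the inclusion $\closeatomseq{\qryfin}\subseteq\closeatomseq{\qry}$ is then immediate. This decomposition localises the ``side-condition bookkeeping'' you anticipate into three independent statements about the output of $\splitq{\cdot}$, rather than threading a compound invariant through each algorithmic step; your route is more operational and makes the provenance of each quantified predicate explicit, but at the cost of the extra per-step verification you yourself flag as the dominant effort.
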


\begin{proof}
Let $\splitq{\qry}=(\qryfin, \qryinf)$.
We observe that
$\colatoms{\qryfin}\subseteq\closeatomseq{\qry}$,
$\transeqs{\qryfin}\subseteq\transeqs{\qry}$,
$\varle{\qryfin}\subseteq\varle{\qry}$,
$\colatoms{\qryinf}\subseteq\closeatomseq{\qry}$,
$\transeqs{\qryinf}\subseteq\transeqs{\qry}$,
and $\varle{\qryinf}\subseteq\varle{\qry}$.
Hence, $\closeatomseq{\qryfin}\subseteq\closeatomseq{\qry}$
and $\closeatomseq{\qryinf}\subseteq\closeatomseq{\qry}$.

Next we observe that $\closeatoms{\subqry}\subseteq\closeatomseq{\subqry}$
for every query $\subqry$.
Finally, we show that $\closeatoms{\safefin}\subseteq\closeatoms{\qryfin}$
and $\closeatoms{\safeinf}\subseteq\closeatoms{\qryinf}$.
We observe that
$\bset{\oneatom}{\cp{\subqry}}\subseteq\bset{\oneatom}{\subqry}$,
$\bset{\oneatom}{\pushnot{\subqry}}\subseteq\bset{\oneatom}{\subqry}$,
and then
$\closeatoms{\cp{\subqry}}\subseteq\closeatoms{\subqry}$,
$\closeatoms{\pushnot{\subqry}}\subseteq\closeatoms{\subqry}$,
for every query $\subqry$.

\looseness=-1
Assume that $\subqry\land\conjrestr$ is a safe-range query
in which no variable occurs both free and bound,
no bound variables shadow each other, i.e.,
there are no subqueries $\exists\varx.\,\exqryx\sqsubseteq\exqryxc$
and $\exists\varx.\,\exqryxc\sqsubseteq\subqry\land\conjrestr$,
and every two subqueries $\exists\varx.\,\exqryx\sqsubseteq\qrya$ and
$\exists\varx.\,\exqryxc\sqsubseteq\qryb$ such that
$\qrya\land\qryb\sqsubseteq\subqry\land\conjrestr$
have the property that $\exists\varx.\,\exqryx$
or $\exists\varx.\,\exqryxc$ is a quantified predicate.
Then the free variables in $\conjrestr$ never clash with the bound variables
in $\subqry$, i.e., Line~\ref{alg:allowtosafe:capture}
in Figure~\ref{alg:allowtosafe} is never executed.
Next we observe that
$\bset{\oneatom}{\allowtosafe{\subqry}{\restr}}\subseteq
\bset{\oneatom}{\subqry\land\conjrestr}$ (this subset relation only holds when considering queries modulo $\alpha$-equivalence, i.e., queries that have the same binding structure but differ in the used bound variable names are considered to be equal) and then
$\closeatoms{\allowtosafe{\subqry}{\restr}}\subseteq
\closeatoms{\subqry\land\conjrestr}$.
Because $\qryfin$, $\qryinf$ have the assumed properties and $\closeatoms{\pushnot{\subqry}}\subseteq\closeatoms{\subqry}$,
for every query $\subqry$,
we get $\closeatoms{\safefin}=\closeatoms{\allowtosafeqry{\qryfin}}\subseteq
\closeatoms{\qryfin}$ and $\closeatoms{\safeinf}=
\closeatoms{\allowtosafeqry{\qryinf}}\subseteq\closeatoms{\qryinf}$.
\end{proof}

Recall Example~\ref{ex:vgt}.
The query $\exists \varuser, \varprod.\,\predscore(\varprod,\varuser,\varscore)$
is in $\closeatoms{\qryvgt}$, but not in $\closeatomseq{\qry}$.
Hence, $\closeatoms{\qryvgt}\subseteq\closeatomseq{\qry}$,
i.e., an analogue of Lemma~\ref{lem:closeatoms_sub} for \citeauthorVanGT's
translation, does not hold.

Every tuple satisfying a \safeprop{} query $\safeqry$
belongs to the set of tuples satisfying
the join over some minimal subset $\projatomset\subseteq\closeatoms{\safeqry}$
of quantified predicates and satisfying
equalities duplicating some of $\projatomset$'s columns.
Hence, we define the \guard{} query $\guardqry{\safeqry}$ as follows:
\[
\guardqry{\safeqry}\eq
\bigvee_{\substack{\projatomset\subseteq\closeatoms{\safeqry}, \projminimal,\\
\eqset\subseteq\copyvars{\fv{\projatomset}}{\fv{\safeqry}},\\
\fv{\projatomset}\cup\fv{\eqset}=\fv{\safeqry}}}
\left(\bigwedge_{\projoneatom\in\projatomset}\projoneatom
\land\bigwedge_{\eqqry\in\eqset} \eqqry
\right).
\]
Note that $\copyvars{\varseta}{\varsetb}$
denotes the set of all equalities $\varx\approx \vary$ between variables
$\varx\in\varseta$ and $\vary\in\varsetb$.
We express the correctness of the \guard{} query in the following lemma.
\begin{lemma}\label{lem:guard_qry}
Let $\safeqry$ be a \safeprop{} query. Then,
for all variable assignments $\valsymb$,
\[\valsymb\models\safeqry\Longrightarrow\valsymb\models\guardqry{\safeqry}.\]
Moreover, $\fv{\guardqry{\safeqry}}=\fv{\safeqry}$
unless $\guardqry{\safeqry}=\bot$.
Hence, $\sattup{\smash{\safeqry}}$ satisfies
\[
\sattup{\safeqry}
\subseteq
\bigcup_{\substack{\projatomset\subseteq\closeatoms{\safeqry}, \projminimal,\\
\eqset\subseteq\copyvars{\fv{\projatomset}}{\fv{\safeqry}},\\
\fv{\projatomset}\cup\fv{\eqset}=\fv{\safeqry}}}
\sattup{\bigwedge_{\projoneatom\in\projatomset}\projoneatom
\land\bigwedge_{\eqqry\in\eqset} \eqqry
}.
\]
\end{lemma}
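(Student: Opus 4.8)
The plan is to prove all three claims by structural induction on the derivation of $\safe{\safeqry}$ (Figure~\ref{fig:safe}), establishing the implication in the strengthened form: \emph{for every $\valsymb\models\safeqry$ there are a minimal $\mathcal{P}\subseteq\closeatoms{\safeqry}$ and a set $\mathcal{E}\subseteq\copyvars{\fv{\mathcal{P}}}{\fv{\safeqry}}$ with $\fv{\mathcal{P}}\cup\fv{\mathcal{E}}=\fv{\safeqry}$ and $\valsymb\models\bigwedge_{\projoneatom\in\mathcal{P}}\projoneatom\land\bigwedge_{\eqqry\in\mathcal{E}}\eqqry$}. This is literally $\valsymb\models\guardqry{\safeqry}$, the conjunction above being the disjunct of $\guardqry{\safeqry}$ indexed by $(\mathcal{P},\mathcal{E})$. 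Before the induction I would record two auxiliary facts: (a)~every finite set of quantified predicates has a minimal subset with the same set of free variables, obtained by repeatedly discarding any member whose removal leaves the free-variable set unchanged; and (b)~$\closeatoms{\cdot}$ is monotone under the subquery order, so $\closeatoms{\subqry}\subseteq\closeatoms{\safeqry}$ whenever $\subqry\sqsubseteq\safeqry$, and it is closed under existentially quantifying a free variable out of one of its members, matching the way $\closeatoms{\cdot}$ is built from the bound-variable-sequence sets $\bset{\cdot}{\cdot}$. Fact~(a) is what restores minimality after combining witnesses; fact~(b) transports a witness from a subquery up to $\safeqry$.

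The base and ``transparent'' inductive cases are routine. For $\bot$ there is nothing to show; for $\top$ and for $\neg\qry$ with $\fv{\qry}=\emptyset$ we have $\fv{\safeqry}=\emptyset$, so the empty pair works; for an atomic predicate $\oneatom$ take $\mathcal{P}=\{\oneatom\}$ and $\mathcal{E}=\emptyset$ (or $\mathcal{P}=\mathcal{E}=\emptyset$ if $\fv{\oneatom}=\emptyset$). For $\qrya\lor\qryb$ (with $\fv{\qrya}=\fv{\qryb}$), for $\qrya\land\neg\qryb$, and for $\qry\land\neg(\varx\approx\vary)$, the free-variable set of $\safeqry$ coincides with that of the relevant RANF child while $\closeatoms{\cdot}$ only grows, so the inductive witness is reused verbatim via fact~(b). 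For $\qry\land(\varx\approx\vary)$ with $\varx\in\fv{\qry}$ (the other case is symmetric): take the witness $(\mathcal{P},\mathcal{E})$ for $\qry$; if $\vary\in\fv{\qry}$ keep it, otherwise pick a $\varz\in\fv{\mathcal{P}}$ equal to $\varx$ under $\mathcal{E}$ (namely $\varz=\varx$ if $\varx\in\fv{\mathcal{P}}$, else the first component of an equality in $\mathcal{E}$ whose second component is $\varx$ --- one exists since $\mathcal{E}\subseteq\copyvars{\fv{\mathcal{P}}}{\fv{\qry}}$ forces every first component into $\fv{\mathcal{P}}$) and enlarge $\mathcal{E}$ by $\varz\approx\vary$. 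For $\qrya\land\qryb$: let $\mathcal{P}$ and $\mathcal{E}$ be the unions of the predicate sets and of the equality sets from the two inductive witnesses, re-minimize $\mathcal{P}$ by fact~(a), and then keep, for each variable of $\fv{\mathcal{E}}$ not in the re-minimized $\fv{\mathcal{P}}$, a single equality of $\mathcal{E}$ witnessing it (its first component necessarily lies in $\fv{\mathcal{P}}$); the conditions $\mathcal{P}\subseteq\closeatoms{\safeqry}$, $\mathcal{E}\subseteq\copyvars{\fv{\mathcal{P}}}{\fv{\safeqry}}$, and $\fv{\mathcal{P}}\cup\fv{\mathcal{E}}=\fv{\qrya}\cup\fv{\qryb}=\fv{\safeqry}$ then all hold, and $\valsymb$ still satisfies the chosen conjunction because both pieces were satisfied by the same $\valsymb$.

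The fiddliest case is $\exists\varx.\exqryx$, which I expect to be the real obstacle. Taking a witness $(\mathcal{P},\mathcal{E})$ for $\exqryx$ with respect to $\valsymb[\varx\mapsto\domval]$, where $\domval$ witnesses $\valsymb\models\exists\varx.\exqryx$, the idea is to (i)~existentially quantify $\varx$ out of every member of $\mathcal{P}$ in which $\varx$ is free, (ii)~drop from $\mathcal{E}$ every equality that mentions $\varx$, and (iii)~re-minimize the resulting predicate set and re-route the surviving equalities exactly as in the $\qrya\land\qryb$ case. Two points need genuine care. First, in step~(i) one must check that $\proj{\varx}{\projoneatom}$ lands in $\closeatoms{\exists\varx.\exqryx}$ for each $\projoneatom\in\mathcal{P}$ with $\varx\in\fv{\projoneatom}$; this follows from the definition of $\closeatoms{\cdot}$ once one is careful about which members of $\closeatoms{\exqryx}$ the inductive witness uses (the $\bset{\cdot}{\cdot}$-suffix structure means quantifying $\varx$ is only immediately legal when $\varx$ sits just outside the already-quantified block), so the argument must arrange for the witness to use suitably ``maximal'' members. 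Second, in step~(ii) one must verify that no free variable of $\exists\varx.\exqryx$ loses its cover: a variable $\vary\neq\varx$ occurring in a dropped equality lies in the equivalence class of $\varx$ under $\mathcal{E}$, and one argues that this class meets $\fv{\mathcal{P}}$ --- hence, after step~(iii), the free-variable set of the re-minimized set --- so $\vary$ stays covered; and if $\varx$ happened to be the private free variable of some $\projoneatom\in\mathcal{P}$, then $\proj{\varx}{\projoneatom}$ becomes redundant and is precisely what step~(iii) removes. Carrying this bookkeeping through --- showing simultaneously that quantification stays inside $\closeatoms{\exists\varx.\exqryx}$, loses no coverage, and restores minimality --- is where I expect the work to concentrate.

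The remaining two claims then follow easily. By construction of the index set, every disjunct of $\guardqry{\safeqry}$ has free-variable set exactly $\fv{\mathcal{P}}\cup\fv{\mathcal{E}}=\fv{\safeqry}$, so $\fv{\guardqry{\safeqry}}=\fv{\safeqry}$ as soon as the index set is non-empty, i.e.\ unless $\guardqry{\safeqry}$ is the empty disjunction $\bot$; and if $\guardqry{\safeqry}=\bot$ then, by the contrapositive of the implication just proved, $\safeqry$ is unsatisfiable, so $\sattup{\safeqry}=\emptyset$ and the displayed inclusion is trivial. Otherwise the displayed inclusion is the semantic reading of the implication: since $\guardqry{\safeqry}$ and its disjuncts all have free-variable set $\fv{\safeqry}$, distributing $\sattup{\cdot}$ over the top-level disjunction rewrites $\sattup{\safeqry}\subseteq\sattup{\guardqry{\safeqry}}$ as the stated union of the sets $\sattup{\bigwedge_{\projoneatom\in\mathcal{P}}\projoneatom\land\bigwedge_{\eqqry\in\mathcal{E}}\eqqry}$.
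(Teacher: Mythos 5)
Your overall strategy --- well-founded induction on the derivation of $\safe{\safeqry}$, strengthened to exhibit a satisfied disjunct $(\projatomset,\eqset)$ of $\guardqry{\safeqry}$ --- is exactly the route the paper takes (its proof consists of precisely that one sentence, with none of your case analysis spelled out), and your base cases and binary cases are fine. The genuine problem is the one you yourself flag in step~(ii) of the $\exists \varx.\,\exqryx$ case, and it does not go away with more bookkeeping. Consider $\safeqry=\exists \varx.\,(\predexc(\varx)\land\varx\approx\vary)$, which is in RANF and equivalent to $\predexc(\vary)$. The only admissible inductive witness for the body is $\projatomset=\{\predexc(\varx)\}$, $\eqset=\{\varx\approx\vary\}$; the equivalence class of $\varx$ under $\eqset$ meets $\fv{\projatomset}$ only in $\varx$ itself, which is exactly the variable you then quantify out of $\predexc(\varx)$, so after your steps~(i)--(iii) the free variable $\vary$ has lost its cover. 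Indeed $\closeatoms{\safeqry}=\{\predexc(\varx),\,\exists \varx.\,\predexc(\varx)\}$ contains no quantified predicate with $\vary$ free, so no index pair with $\fv{\projatomset}\cup\fv{\eqset}=\{\vary\}$ exists, $\guardqry{\safeqry}=\bot$, and yet $\safeqry$ is satisfiable whenever $\predexc$ is nonempty. So the step you describe would fail --- and, read literally for arbitrary RANF queries, the statement itself fails on this example: the set $\closeatoms{\cdot}$ used in the guard is not closed under the equality substitutions that $\closeatomseq{\cdot}$ performs, which is precisely what would be needed to cover $\vary$ by $\predexc(\vary)$ here. The paper's own proof does not address this point either; a correct argument must either restrict the class of RANF queries to which the lemma is applied or enlarge the guard's predicate set along equalities before discharging the existential case.
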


\begin{proof}
The statement follows by well-founded induction
over the definition of $\safe{\safeqry}$.
\end{proof}

We now derive a bound on $\card{\sattup{\smash{\safesubqry}}}$,
for an arbitrary \safeprop{} subquery
$\safesubqry\sqsubseteq \safeqry$, $\safeqry\in\{\safefin, \safeinf\}$.

\begin{lemma}\label{lem:complex_subquery_subset}
Let $\qry$ be an \RC query with \pwdist{}
and let $\rw{\qry}=(\safefin, \safeinf)$.
Let $\safesubqry\sqsubseteq \safeqry$ be a \safeprop{} subquery
of $\safeqry\in\{\safefin, \safeinf\}$.
Then
\[\textstyle
\card{\sattup{\safesubqry}}\leq
\sum_{\substack{\projatomset\subseteq\closeatomseq{\qry}, \projminimal}}
2^{\card{\av{\safeqry}}}\cdot
\prod_{\projoneatom\in\projatomset}\card{\sattup{\projoneatom}}.
\]
\end{lemma}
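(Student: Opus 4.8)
The plan is to derive the bound by combining the guard-query over-approximation of Lemma~\ref{lem:guard_qry} with the containment $\closeatoms{\safeqry}\subseteq\closeatomseq{\qry}$ of Lemma~\ref{lem:closeatoms_sub}, and then estimating each disjunct of the guard query. Since $\safe{\safesubqry}$ holds by assumption, $\safesubqry$ is itself a RANF query, so Lemma~\ref{lem:guard_qry} applies to it: spelling out $\safesubqry$'s guard query, $\sattup{\safesubqry}$ is contained in the union of the sets $\sattup{\bigwedge_{\projoneatom\in\projatomset}\projoneatom\land\bigwedge_{\eqqry\in\eqset}\eqqry}$ over all minimal $\projatomset\subseteq\closeatoms{\safesubqry}$ and all $\eqset\subseteq\copyvars{\fv{\projatomset}}{\fv{\safesubqry}}$ with $\fv{\projatomset}\cup\fv{\eqset}=\fv{\safesubqry}$ (the case $\guardqry{\safesubqry}=\bot$ being trivial). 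Hence $\card{\sattup{\safesubqry}}$ is at most the sum of $\card{\sattup{\bigwedge_{\projoneatom\in\projatomset}\projoneatom\land\bigwedge_{\eqqry\in\eqset}\eqqry}}$ over those pairs $(\projatomset,\eqset)$.

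First I would replace the index set $\closeatoms{\safesubqry}$ by $\closeatomseq{\qry}$. Every subquery of $\safesubqry$ is, by transitivity of $\sqsubseteq$, a subquery of $\safeqry$; since $\closeatoms{\cdot}$ is defined as a union over subqueries and the set $\bset{\oneatom}{\subqry}$ depends only on $\subqry$, this yields $\closeatoms{\safesubqry}\subseteq\closeatoms{\safeqry}$, and Lemma~\ref{lem:closeatoms_sub} gives $\closeatoms{\safeqry}\subseteq\closeatomseq{\qry}$. Moreover the predicate $\projminimal$ is an intrinsic property of $\projatomset$ — each of its elements carries a free variable that is free in no other element — so a set that is minimal in $\closeatoms{\safesubqry}$ is also one of the minimal $\projatomset\subseteq\closeatomseq{\qry}$ appearing in the target bound. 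Thus the sum above is dominated by one ranging only over minimal $\projatomset\subseteq\closeatomseq{\qry}$.

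Next I would bound each summand by $\prod_{\projoneatom\in\projatomset}\card{\sattup{\projoneatom}}$. The conjunction $\bigwedge_{\projoneatom\in\projatomset}\projoneatom$ denotes a natural join, and the number of tuples of a join is at most the product of the numbers of tuples of the joined relations, so $\card{\sattup{\bigwedge_{\projoneatom\in\projatomset}\projoneatom}}\le\prod_{\projoneatom\in\projatomset}\card{\sattup{\projoneatom}}$. Conjoining $\bigwedge_{\eqqry\in\eqset}\eqqry$ cannot increase the count: since $\eqset\subseteq\copyvars{\fv{\projatomset}}{\fv{\safesubqry}}$, every variable in $\fv{\eqset}\setminus\fv{\projatomset}$ is forced to equal some variable of $\fv{\projatomset}$, so restriction to the columns $\fv{\projatomset}$ is injective on $\sattup{\bigwedge_{\projoneatom\in\projatomset}\projoneatom\land\bigwedge_{\eqqry\in\eqset}\eqqry}$. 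Finally, for a fixed $\projatomset$ there are only finitely many valid $\eqset$; using $\fv{\projatomset}\subseteq\fv{\safesubqry}\subseteq\av{\safeqry}$, a routine count of these equality sets bounds their number by the factor $2^{\card{\av{\safeqry}}}$. Combining the three estimates gives the claimed inequality.

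The main obstacle is the careful treatment of the equalities: both the injectivity argument showing that conjoining $\bigwedge_{\eqqry\in\eqset}\eqqry$ does not blow up the tuple count, and the bookkeeping that bounds the number of relevant equality sets per $\projatomset$ by the stated constant factor. A secondary subtlety is the step $\closeatoms{\safesubqry}\subseteq\closeatomseq{\qry}$, which rests on Lemma~\ref{lem:closeatoms_sub} and hence on the algorithmic details of $\allowtosafeqry{\cdot}$; the remaining ingredients — the join bound and the reduction to minimal $\projatomset\subseteq\closeatomseq{\qry}$ — are routine.
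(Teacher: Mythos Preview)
Your proposal is correct and follows essentially the same route as the paper: apply Lemma~\ref{lem:guard_qry} to $\safesubqry$, use $\closeatoms{\safesubqry}\subseteq\closeatoms{\safeqry}\subseteq\closeatomseq{\qry}$ via Lemma~\ref{lem:closeatoms_sub}, bound each disjunct by $\prod_{\projoneatom\in\projatomset}\card{\sattup{\projoneatom}}$ using the join bound and the observation that the equalities only duplicate columns, and finally count the equality subsets. One small remark: the paper's own count of the equality subsets actually yields $2^{\card{\av{\safeqry}}^2}$ (since $\card{\copyvars{\fv{\projatomset}}{\fv{\safesubqry}}}\le\card{\av{\safeqry}}^2$), which is the constant later used in the proof of Theorem~\ref{lem:data_complexity}; your ``routine count'' should land there rather than at $2^{\card{\av{\safeqry}}}$.
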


\begin{proof}
Applying Lemma~\ref{lem:guard_qry}
to the \safeprop{} query $\safesubqry$ yields
\[
\sattup{\safesubqry}
\subseteq
\bigcup_{\substack{\projatomset\subseteq\closeatoms{\safesubqry}, \projminimal,\\
\eqset\subseteq\copyvars{\fv{\projatomset}}{\fv{\safesubqry}},\\
\fv{\projatomset}\cup\fv{\eqset}=\fv{\safesubqry}}}
\sattup{\bigwedge_{\projoneatom\in\projatomset}\projoneatom
\land\bigwedge_{\eqqry\in\eqset} \eqqry
}.
\]
We observe that
$\card{\sattup{\smash{\bigwedge_{\projoneatom\in\projatomset} \projoneatom
\land\bigwedge_{\eqqry\in\eqset} \eqqry}}}\leq
\card{\sattup{\smash{\bigwedge_{\projoneatom\in\projatomset} \projoneatom}}}\leq
\prod_{\projoneatom\in\projatomset} \card{\sattup{\projoneatom}}$
where the first inequality follows from the fact that
equalities $\eqqry\in\eqset$ can only restrict a set of tuples
and duplicate columns.
Because $\safesubqry$ is a subquery of $\safeqry$,
it follows that $\closeatoms{\safesubqry}\subseteq\closeatoms{\safeqry}$. Lemma~\ref{lem:closeatoms_sub} yields
$\closeatoms{\safeqry}\subseteq \closeatomseq{\qry}$.
Hence, we derive $\closeatoms{\safesubqry}\subseteq\closeatomseq{\qry}$.

The number of equalities
in $\copyvars{\fv{\projatomset}}{\fv{\safesubqry}}$
is at most
\[
\card{\fv{\projatomset}}\cdot\card{\fv{\safesubqry}}
\leq \card{\fv{\safesubqry}}^2
\leq \card{\av{\safeqry}}^2.
\]
The first inequality holds because
$\fv{\projatomset}\cup\fv{\eqset}=\fv{\safesubqry}$ and thus
$\fv{\projatomset}\subseteq\fv{\safesubqry}$.
The second inequality holds because
the variables in a subquery $\safesubqry$ of $\safeqry$
are in $\av{\safeqry}$.
Hence, the number of subsets
$\eqset\subseteq\copyvars{\fv{\projatomset}}{\fv{\safesubqry}}$
is at most $2^{\card{\av{\safeqry}}^2}$.
\end{proof}

\looseness=-1
\!We now bound the query cost of a \safeprop{} query
$\safeqry\!\in\!\{\safefin, \safeinf\}$
over the fixed structure $\str$.

\begin{lemma}\label{lem:cost_estimate}
Let $\qry$ be an \RC query with \pwdist{}
and let $\rw{\qry}=(\safefin, \safeinf)$.
Let $\safeqry\in\{\safefin, \safeinf\}$.
Then
\[\textstyle
\cost{\safeqry}{\str}\leq
\cntsub{\safeqry}\cdot\card{\av{\safeqry}}\cdot 2^{\card{\av{\safeqry}}}\cdot
\sum_{\substack{\projatomset\subseteq\closeatomseq{\qry}, \projminimal}}
\prod_{\projoneatom\in\projatomset}\card{\sattup{\projoneatom}}.
\]
\end{lemma}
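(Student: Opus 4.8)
The plan is to unfold the definition of query cost and bound it term by term using Lemma~\ref{lem:complex_subquery_subset}. Recall that $\cost{\safeqry}{\str}=\sum_{\safesubqry\sqsubseteq\safeqry,\ \safe{\safesubqry}}\card{\sattup{\safesubqry}}\cdot\card{\fv{\safesubqry}}$, a sum over the RANF subqueries of $\safeqry$. It therefore suffices to bound (a) the number of summands, (b) the factor $\card{\fv{\safesubqry}}$, and (c) the factor $\card{\sattup{\safesubqry}}$, each by a quantity that does not depend on the particular subquery $\safesubqry$.

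For (a), the RANF subqueries of $\safeqry$ are in particular subqueries of $\safeqry$, so there are at most $\cntsub{\safeqry}$ of them. For (b), every variable occurring in a subquery $\safesubqry\sqsubseteq\safeqry$ also occurs in $\safeqry$, hence $\fv{\safesubqry}\subseteq\av{\safeqry}$ and $\card{\fv{\safesubqry}}\leq\card{\av{\safeqry}}$. For (c), Lemma~\ref{lem:complex_subquery_subset} applies verbatim to every RANF subquery $\safesubqry\sqsubseteq\safeqry$ (with $\safeqry\in\{\safefin,\safeinf\}$) and gives $\card{\sattup{\safesubqry}}\leq 2^{\card{\av{\safeqry}}}\cdot\sum_{\projatomset\subseteq\closeatomseq{\qry},\,\projminimal}\prod_{\projoneatom\in\projatomset}\card{\sattup{\projoneatom}}$; crucially this bound mentions only $\av{\safeqry}$, $\closeatomseq{\qry}$, and the cardinalities $\card{\sattup{\projoneatom}}$, none of which depend on $\safesubqry$.

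Combining (a)--(c), $\cost{\safeqry}{\str}$ is at most $\cntsub{\safeqry}$ copies of the common per-term bound $\card{\av{\safeqry}}\cdot 2^{\card{\av{\safeqry}}}\cdot\sum_{\projatomset\subseteq\closeatomseq{\qry},\,\projminimal}\prod_{\projoneatom\in\projatomset}\card{\sattup{\projoneatom}}$, which after pulling the sum and the constant factor out front is exactly the claimed inequality. This last step is essentially bookkeeping; the real work is already contained in Lemma~\ref{lem:complex_subquery_subset} (itself built on the guard-query over-approximation of Lemma~\ref{lem:guard_qry} and the key containment $\closeatoms{\safeqry}\subseteq\closeatomseq{\qry}$ of Lemma~\ref{lem:closeatoms_sub}). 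The only subtlety worth flagging is that the bound of Lemma~\ref{lem:complex_subquery_subset}, although stated for $\safesubqry$, is phrased entirely in terms of the ambient query $\safeqry$ and the original query $\qry$, so it may legitimately be factored out of the sum over subqueries as a single multiplicative factor rather than re-derived for each term.
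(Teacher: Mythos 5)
Your proposal is correct and follows essentially the same route as the paper's proof: bound the number of RANF subqueries by $\cntsub{\safeqry}$, bound each factor $\card{\fv{\safesubqry}}$ by $\card{\av{\safeqry}}$, and invoke Lemma~\ref{lem:complex_subquery_subset} for the per-subquery bound on $\card{\sattup{\safesubqry}}$, which is independent of the particular subquery and can be factored out.
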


\begin{proof}
Recall that $\cntsub{\smash{\safeqry}}$ denotes the number of subqueries
of the query $\safeqry$ and thus bounds the number of
\safeprop{} subqueries $\safesubqry$ of the query $\safeqry$.
For every subquery $\safesubqry$ of $\safeqry$,
we first use the fact that $\card{\fv{\smash{\safesubqry}}}\leq\card{\av{\smash{\safeqry}}}$ to bound
$\card{\sattup{\smash{\safesubqry}}}\cdot\card{\fv{\smash{\safesubqry}}}\leq
\card{\sattup{\smash{\safesubqry}}}\cdot\card{\av{\smash{\safeqry}}}$.
Then we use the estimation of $\card{\sattup{\smash{\safesubqry}}}$
by Lemma~\ref{lem:complex_subquery_subset}.
\end{proof}

Finally, we prove Theorem~\ref{lem:data_complexity}.

\begin{proof}[Proof of Theorem~\ref{lem:data_complexity}]
We derive Theorem~\ref{lem:data_complexity} from Lemma~\ref{lem:cost_estimate}
and the fact that
the quantities $\cntsub{\safeqry}$, $\card{\av{\safeqry}}$,
and $2^{\card{\av{\safeqry}}^2}$
only depend on the query $\qry$
and thus they do not contribute
to the asymptotic time complexity of capturing a fixed query $\qry$.
\end{proof}

\section{Implementation}
\label{sec:detail}

\looseness=-1
We have implemented our translation \tool{} consisting of roughly $\loc$ lines of 
OCaml code~\cite{artifact}. It consists of multiple translation steps that take
an arbitrary relational calculus (RC) query and produce two SQL queries. 

\looseness=-1
Figure~\ref{fig:implementation} summarizes the order of the translation steps and 
the functions that implement them.
The function $\splitq{\cdot}$ (\S\ref{sec:free}), applied in the first step, is the 
main part of our translation. Recall that it takes an arbitrary RC query and 
returns two safe-range RC queries.
Next, the function $\pushnot{\cdot}$ (\S\ref{sec:srnf-detail}) converts both queries 
to safe-range normal form (SRNF), followed by the function 
$\allowtosafe{\cdot}{\cdot}$ (\S\ref{sec:ranf-detail}) 
that converts SRNF queries into relation algebra normal form (RANF). 
Both normal forms were defined in \S\ref{sec:prelim}.
For simplicity, we define a function $\allowtosafeqry{\cdot}$ that combines the 
previous two  functions and can be applied to any safe-range RC query.
In addition to the worst-case complexity, we further improve our translation's 
average-case complexity by implementing the optimizations inspired 
by~\citeauthorAgg~\cite{DBLP:conf/vldb/ClaussenKMP97}. The function 
$\cnt{\cdot}$ (\S\ref{sec:aggs}) implements these optimizations on the RANF queries.
Finally, to derive SQL queries from the RANF queries
we first obtain equivalent relational algebra (RA) expressions following a (slightly modified) 
standard approach~\cite{DBLP:books/aw/AbiteboulHV95} implemented by the function
$\ranfra{\cdot}$ (\S\ref{sec:ranf2ra}).
To translate the RA expressions into SQL, we reuse a publicly available RA 
interpreter \radb~\cite{radb} (\S\ref{sec:ra2sql}). We name the composition of the 
last two steps $\ranfsql{\cdot}$.

To resolve the nondeterministic choices present in our algorithms (\S\ref{sec:nondet})
we always choose the alternative with the lowest query cost. The query cost is estimated
by using a sample structure of constant size, called a \emph{training database}.
A good training database should preserve the relative ordering of queries by 
their cost over the actual database as much as possible.
Nevertheless, our translation satisfies the correctness
and worst-case complexity claims independently of the choice of the training database.

Overall, the translation is formally defined as 
\begin{center}
$  \tool{}(\qry)\eq(\qryfina,\qryinfa) $
\end{center}
where
$\qryfina \eq \ranfsql{\cnt{\allowtosafeqry{\qryfin}}}$,  
$\qryinfa \eq \ranfsql{\cnt{\allowtosafeqry{\qryinf}}}$, and
$(\qryfin,\qryinf) \eq \splitq{\qry}$.
\pagebreak[2]

\begin{figure}[t]
  \small
  \begin{tikzpicture}
  \tikzset{pt/.style={draw,rounded corners=3pt,inner sep=5pt, fill=white}, link/.style={-latex,line
  width=0.1mm,draw=black}, path/.style=={-latex,line width=0.1mm,draw=black},
  brace/.style={line width=0.1mm,draw=black, decoration={brace},decorate}}
  \node[pt] (rc) at (-3*\z,0.05) {\RC{}};

  \node[pt] (sr') at (-1.8*\z+\w,\h) {\phantom{Safe-range \RC{}}};
  \draw[link] (rc) -- (sr'.180);
  
  \node[pt] (sr) at (-1.8*\z,0) {Safe-range \RC{}};
  \draw[link] (rc) -- node[anchor=south,yshift=4, label=above:{\makecell[l]{$\splitq{\cdot}$}}]{}(sr.180);  

  \node[pt] (srnf') at (-0.7*\z+\w,\h) {\phantom{SRNF}};
  \draw[link] (sr') -- (srnf');
  \node[pt] (srnf) at (-0.7*\z,0) {SRNF};
  \draw[link] (sr) -- node (srrcsrnf) [anchor=south,yshift=4,label=above:{\makecell[l]{$\pushnot{\cdot}$}}]{}(srnf);
  
  \node[pt] (ranf') at (0.35*\z+\w,\h) {\phantom{\safeprop}};
  \draw[link] (srnf') -- (ranf');
  \path (ranf') edge [loop above] (ranf');

  \node[pt] (ranf) at (0.35*\z,0) {\safeprop};
  \draw[link] (srnf) -- node (srnfranf) [anchor=south, yshift=4,align=left,label=above:{\makecell[l]{
  $\allowtosafe{\cdot}{\cdot}$}}]{}(ranf);
  \path (ranf) edge [loop above] node[yshift=6,anchor=north,label=above:{$\cnt{\cdot}$}]{} (ranf);

  \node[pt] (ra') at (1.22*\z+\w,\h) {\phantom{RA}};
  \draw[link] (ranf') -- (ra');

  \node[pt] (ra) at (1.22*\z,0) {RA};
  \draw[link] (ranf) -- node (ranfra) [anchor=south,yshift=4,label=above:{\makecell[l]{$\ranfra{\cdot}$}}]{}(ra);

  \node[pt] (sql') at (1.85*\z+\w,\h) {\phantom{SQL}};
  \draw[link] (ra') --  (sql');

  \node[pt] (sql) at (1.85*\z,0) {SQL};
  
  \draw[link] (ra) -- node (rasql) [anchor=south,yshift=4,label=above:{\makecell[l]{$\rasql{\cdot}$}}]{}
    (sql);
  
  \node at ([yshift=45]srnf) {$\allowtosafeqry{\cdot}$};
  \node at ([yshift=45]ra) {$\ranfsql{\cdot}$};

  \draw [brace,decoration={raise=25}] (ranfra) -- (rasql);

  \draw [brace,decoration={raise=25}] (srrcsrnf) -- (srnfranf);

  \end{tikzpicture}
\vspace{-2ex}
\caption{Overview of the functions used in our implementation.}\label{fig:implementation}
\end{figure}

\subsection{Translation to SRNF}\label{sec:srnf-detail}
\begin{algorithm}[t]
  \SetKwInOut{Input}{input}
  \SetKwInOut{Output}{output}
  \SetKwProg{Fn}{function}{$\,=$}{}
  \Input{An \RC{} query $\qry$.}
  \Output{A SRNF query $\qrysrnf$ such that $\qry\equiv \qrysrnf$,
  $\fv{\qry}=\fv{\qrysrnf}$.}
  \BlankLine
  \Fn{$\pushnot{\qry}$}{
  \Switch{$\qry$}{
    \Case{$\neg \subqry$}{%
    	\Switch{$\subqry$}{%
    	  \lCase{$\neg \subqryc$} {%
    	    \Return $\pushnot{\subqryc}$\label{alg:push:negelim}%
    	  }
      	  \lCase{$\qrya\lor \qryb$}{%
            \Return $\pushnot{(\neg \qrya)\land(\neg \qryb)}$\label{alg:push:pushbegin}%
          }
          \lCase{$\qrya\land \qryb$}{%
            \Return $\pushnot{(\neg \qrya)\lor(\neg \qryb)}$\label{alg:push:pushend}%
          }%
          \Case{$\exists\varlist.\,\exqryvarlist$}{%
            \lIf{$\varlist\cap\fv{\exqryvarlist}=\emptyset$}{\Return $\pushnot{\neg \exqryvarlist}$}\label{alg:push:exelim}
            \Else{%
              \Switch{$\pushnot{\exqryvarlist}$}{%
        \lCase{$\qrya\lor \qryb$}{%
          \Return $\pushnot{(\neg\exists\varlist.\,\qrya)\land(\neg\exists\varlist.\,\qryb)}$%
        }
        \lOther{%
          \Return $\neg\exists\varlist\cap\fv{\exqryvarlist}.\,\pushnot{\exqryvarlist}$%
        }%
            }}
          }
          \lOther{\Return $\neg\pushnot{\subqry}$}
          }
    }
    \lCase{$\qrya\lor \qryb$}{%
      \Return $\pushnot{\qrya}\lor\pushnot{\qryb}$%
    }
    \lCase{$\qrya\land \qryb$}{%
      \Return $\pushnot{\qrya}\land\pushnot{\qryb}$%
    }
    \Case{$\exists\varlist.\,\exqryvarlist$}{
      \Switch{$\pushnot{\exqryvarlist}$}{
        \lCase{$\qrya\lor \qryb$}{%
          \Return $\pushnot{(\exists\varlist.\,\qrya)\lor(\exists\varlist.\,\qryb)}$\label{alg:push:distr}%
        }
        \lOther{%
          \Return $\exists\varlist\cap\fv{\exqryvarlist}.\,\pushnot{\exqryvarlist}$%
        }
      }
    }
    \lOther{\Return $\qry$}
  }
  }
\vspace*{-1ex}
\caption{Translation to SRNF.}
\label{alg:push}
\vspace*{-1ex}
\end{algorithm}

\looseness=-1
Figure~\ref{alg:push} defines the function $\pushnot{\qry}$
that yields a SRNF query equivalent to $\qry$.
The function $\pushnot{\qry}$ 
pushes negations downwards (Lines~\ref{alg:push:pushbegin}--\ref{alg:push:pushend}),
eliminates double negations (Line~\ref{alg:push:negelim}),
drops bound variables that do not occur in the query (Line~\ref{alg:push:exelim}),
and distributes existential quantifiers over disjunction (Line~\ref{alg:push:distr}).
The termination of $\pushnot{\qry}$
follows using the measure $\sz{\qry}$, shown in Figure~\ref{fig:size_measure},
that decreases for proper subqueries, after pushing negations and 
distributing existential quantification over disjunction.

Next we prove a lemma that we use as a precondition for 
translating safe-range queries in SRNF to queries in RANF.
\begin{lemma}\label{lem:srnf-to-ranf}
  Let $\qrysrnf$ be a query in SRNF. Then $\gen{\varx}{\neg \subqry}$
  does not hold for any variable~$\varx$
  and subquery $\neg\subqry$ of $\qrysrnf$.
\end{lemma}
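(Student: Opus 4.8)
The plan is to reduce the claim to a simple rule inversion on the inductive definition of the generated relation $\tgen{\cdot}{\cdot}{\cdot}$ from Figure~\ref{fig:gen_con}. Recall that, by Definition~\ref{def:st}, $\gen{\varx}{\neg\subqry}$ holds iff $\tgen{\varx}{\neg\subqry}{\qpreds}$ is derivable for some set $\qpreds$, so it suffices to show that this judgment has no derivation whenever $\neg\subqry$ occurs as a subquery of a SRNF query $\qrysrnf$. This lemma is then exactly the precondition needed for the subsequent SRNF-to-RANF translation.

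First I would exploit the SRNF assumption: since $\neg\subqry$ is a subquery of $\qrysrnf$, the definition of SRNF forces $\subqry$ to be an atomic predicate (i.e.\ of the form $\predsymb(\termsymb_1,\ldots)$ or $\varx\approx\cstsymb$), an equality $\varx\approx\termsymb$, or an existentially quantified query $\exists\vary.\,\exqryy$. In particular, $\subqry$ is syntactically \emph{not} of any of the three shapes $\neg\qry'$, $\qrya\lor\qryb$, or $\qrya\land\qryb$.

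Next I would inspect which rules of Figure~\ref{fig:gen_con} can have a conclusion of the form $\tgen{\varx}{\neg(\cdot)}{\cdot}$: these are exactly the three rules handling $\neg\neg\qry'$, $\neg(\qrya\lor\qryb)$, and $\neg(\qrya\land\qryb)$; the conclusions of all remaining rules concern queries whose top constructor is $\bot$, an atomic predicate, $\lor$, $\land$, a conjunction with an equality, or $\exists$, none of which is syntactically a negation. Hence any derivation of $\tgen{\varx}{\neg\subqry}{\qpreds}$ would have to end with one of those three negation rules, which is possible only if $\subqry$ is respectively a negation, a disjunction, or a conjunction --- contradicting the previous step. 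Therefore no such derivation exists, so $\gen{\varx}{\neg\subqry}$ does not hold, as required. I expect no genuine obstacle here; the only points needing care are to verify that the enumeration of rules with a negation in the conclusion is exhaustive, and that each of the three admissible SRNF forms of $\subqry$ really does exclude every offending shape (e.g.\ an equality $\varx\approx\termsymb$ is neither a negation nor a binary $\lor$/$\land$). An alternative, slightly more verbose route would be a structural induction on $\qrysrnf$ carrying the statement along for all subqueries, but the rule-inversion argument above is the more direct one.
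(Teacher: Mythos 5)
Your proposal is correct and follows essentially the same argument as the paper: by rule inversion on Figure~\ref{fig:gen_con}, the judgment $\tgen{\varx}{\neg\subqry}{\qpreds}$ can only be derived when $\neg\subqry$ has one of the forms $\neg\neg\qry$, $\neg(\qrya\lor\qryb)$, or $\neg(\qrya\land\qryb)$, and the SRNF condition excludes all three shapes for subqueries of $\qrysrnf$. The extra care you take in checking that the enumeration of negation-headed conclusions is exhaustive is exactly the (implicit) content of the paper's one-line proof.
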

\begin{proof}
  \looseness=-1
  Using Figure~\ref{fig:gen_con}, $\gen{\varx}{\neg \subqry}$
  can only hold if $\neg \subqry$ has the form $\neg\neg \qry$,
  $\neg(\qrya\lor \qryb)$,
  or $\neg(\qrya\land \qryb)$. The SRNF query $\qrysrnf$ cannot have a subquery
  $\neg \subqry$ that has any such form.
\end{proof}
\pagebreak[2]

\begin{figure}
    \small
    \[\begin{array}{l@{\;}c@{\;}l}
        \sz{\bot}=\sz{\top}&=&\sz{\varx\approx \termsymb}=1\\
        \sz{\predsymb(\termsymb_1, \ldots, \termsymb_{\arity(\predsymb)})}&=&1\\
        \sz{\neg \qry}&=&2\cdot\sz{\qry}\\
        \sz{\qrya\lor \qryb}&=& 2\cdot\sz{\qrya}+2\cdot\sz{\qryb}+2\\
        \sz{\qrya\land \qryb}&=& \sz{\qrya}+\sz{\qryb}+1\\
        \sz{\exists\varx.\,\exqryx}&=&2\cdot\sz{\exqryx}
    \end{array}\]
    \vspace{-2ex}
    \caption{Measure on \RC queries.}
    \label{fig:size_measure}
\end{figure}

\subsection{Translation to RANF}\label{sec:ranf-detail}

\begin{algorithm}[t!]
  \SetKwInOut{Input}{input}
  \SetKwInOut{Output}{output}
  \SetKwProg{Fn}{function}{$\,=$}{}
  \Input{A safe-range query $\qry\land\conjrestr$
  such that for all subqueries of the form $\neg \subqry$,
  $\gen{\varx}{\neg \subqry}$ does not hold for any variable $\varx$.}
  \Output{A RANF query $\safeqry$ and a subset of queries
  $\restra\subseteq\restr$ such that
  $\qry\land\conjrestr\equiv\safeqry\land\conjrestr$;
  for all $\str$ and $\valsymb$,
  $(\str, \valsymb)\models\safeqry\Longrightarrow(\str, \valsymb)\models\conjrestra$ holds;
  $\safeqry=\cp{\safeqry}$; and
  $\fv{\qry}\subseteq\fv{\safeqry}\subseteq\fv{\qry}\cup\fv{\restr}$,
  unless $\safeqry=\bot$.}
  \BlankLine
  \Fn{$\allowtosafe{\qry}{\restr}$}{
  \If{$\safe{\qry}$}{\Return $(\cp{\qry}, \emptyset)$}\label{alg:allowtosafe:ranf}
  \Switch{$\qry$}{
    \Case{$\varx\approx \vary$}{%
      \Return $\allowtosafe{\varx\approx \vary\land\conjrestr}{\emptyset}$%
    }%
    \Case{$\neg \subqry$}{%
      $\restra\leftarrow\{\restra\subseteq\restr\mid
      (\neg \subqry)\land\conjrestra\,\text{is safe range}\}$\;
      \If{$\restra=\emptyset$}{%
        $(\safesubqry, \_)\eq\allowtosafe{\subqry}{\emptyset}$\;
        \Return $(\cp{\neg \safesubqry}, \emptyset)$\;
      }%
      \lElse{%
        \Return $\allowtosafe{(\neg \subqry)\land\conjrestra}{\emptyset}$%
      }%
    }%
    \Case{$\qrya\lor \qryb$}{%
	  $\restra\leftarrow\{\restra\subseteq\restr\mid
      \bigvee_{\subqry\in\fldisj{\qry}}(\subqry\land\conjrestra)\,\text{is safe range}\}$\;%
      \lForEach{$\subqry\in\fldisj{\qry}$}{%
        $(\safesubqry, \_)\eq\allowtosafe{\subqry\land\conjrestra}{\emptyset}$%
      }%
      \Return $(\cp{\bigvee_{\subqry\in\fldisj{\qry}} \safesubqry}, \restra)$\;
    }
    \Case{$\qrya\land \qryb$}{
      $\negs\eq\{\subqry\in\flconj{\qry}\cup\restr\mid\isneg{\subqry}\}$;
      $\poss\eq(\flconj{\qry}\cup\restr)\setminus\negs$\;
      $\eqset\eq\{\subqry\in\poss\mid\iseq{\subqry}\}$;
      $\poss\eq\poss\setminus\eqset$\;
      $\neqset\eq\{\neg\subqry\in\negs\mid\iseq{\subqry}\}$;
      $\negs\eq\negs\setminus\neqset$\;
      \lForEach{$\subqry\in\poss$}{%
        $(\safesubqry, \restrf{\subqry})\eq\allowtosafe{\subqry}{(\poss\cup\eqset)\setminus\{\subqry\}}$}%
      \lForEach{$\neg \subqry\in\negs$}{%
        $(\safesubqry, \_)\eq\allowtosafe{\subqry}{\poss\cup\eqset}$}%
      $\restra\leftarrow\{\restra\subseteq\poss\mid
      \poss\subseteq\bigcup_{\subqry\in\restra} (\restrf{\subqry}\cup\{\subqry\})\}$\;
      \Return $(\cp{\sconj{\bigcup_{\subqry\in\restra} \{\safesubqry\}\cup
      \eqset\cup\bigcup_{\neg \subqry\in\negs} \{\neg \safesubqry\}\cup
      \neqset}},\allowbreak
      \bigcup_{\subqry\in\restra} (\restrf{\subqry}\cap\restr))$\;
    }
    \Case{$\exists \varlist.\,\exqryvarlist$}{
    \lIf{$\fv{\restr}\cap\varlist\neq\emptyset$}{%
    $\varlistb\leftarrow\{\varlistb\mid \len{\varlistb}=\len{\varlist}\,\text{and}\,((\fv{\exqryvarlist}\setminus\varlist)\cup\fv{\restr})\cap\varlistb=\emptyset\}$\label{alg:allowtosafe:capture}%
    }\label{alg:allowtosafe:nocapture}%
    \lElse{%
      $\varlistb:=\varlist$%
    }%
    $\exqryvarlistb\eq \exqryvarlist[\varlist\mapsto\varlistb]$\;
    $\restra\leftarrow\{\restra\subseteq\restr\mid
      \exqryvarlistb\land\conjrestra\,\text{is safe range}\}$\;
    $(\exsafeqryvarlistb, \_)\eq\allowtosafe{\exqryvarlistb\land\conjrestra}{\emptyset}$\;
    \Return $(\cp{\exists \varlistb.\,\exsafeqryvarlistb}, \restra)$\;}
    \lOther{\Return $(\cp{\qry}, \emptyset)$}
  }
}
  \caption{Translation of a safe-range query in SRNF to RANF.}
  \label{alg:allowtosafe}
\end{algorithm}

The function $\allowtosafe{\qry}{\restr}=(\safeqry, \restra)$,
defined in Figure~\ref{alg:allowtosafe},
where $\SRTORANFName$ stands for
\emph{safe range to relational algebra normal form},
takes a safe-range query $\qry\land\conjrestr$ in SRNF, or in 
existential normal form (ENF) (see Appendix~\ref{sec:nf})
and returns a \safeprop{} query $\safeqry$
such that
$\qry\land\conjrestr\equiv\safeqry\land\conjrestr$.
To restrict variables in $\qry$, the function $\allowtosafe{\qry}{\restr}$
conjoins a subset of queries $\restra\subseteq\restr$ to $\qry$.
Given a safe-range query $\qry$, we first convert $\qry$ into SRNF
and set $\restr=\emptyset$.
Then we define $\allowtosafeqry{\qry}\eq \safeqry$, where $(\safeqry,\_)\eq\allowtosafe{\pushnot{\qry}}{\emptyset}$,
to be a \safeprop{} query $\safeqry$ equivalent to $\qry$.
The termination of $\allowtosafe{\qry}{\restr}$
follows from the lexicographic measure
$(2\cdot \sz{\qry}+\termphi{\qry}+2\cdot \sumszrestr+2\cdot \card{\restr}, \sz{\qry}+\sumszrestr)$.
Here $\sz{\qry}$ is defined in Figure~\ref{fig:size_measure},
$\termphi{\qry}\eq 1$ if $\qry$ is an equality between
two variables or the negation of a query, and $\termphi{\qry}\eq 0$ otherwise.

Next we describe the definition of $\allowtosafe{\qry}{\restr}$
that follows \cite[Algorithm~5.4.7]{DBLP:books/aw/AbiteboulHV95}.
Note that no constant propagation (Figure~\ref{fig:cp})
is needed in \cite[Algorithm~5.4.7]{DBLP:books/aw/AbiteboulHV95},
because the constants $\bot$ and $\top$ are not
in the query syntax~\cite[\S 5.3]{DBLP:books/aw/AbiteboulHV95}.
Because $\gen{\varx}{\bot}$ holds and $\varx\notin\fv{\bot}$,
we need to perform constant propagation to guarantee
that every disjunct has the same set of free variables
(e.g., the query $\bot\lor\predbrand(\varx)$ must be translated
to $\predbrand(\varx)$ to be in \safeprop{}).
We flatten the disjunction and conjunction
using $\fldisj{\cdot}$ and $\flconj{\cdot}$, respectively.
In the case of a conjunction $\qryconj$,
we first split the queries from $\flconj{\qryconj}$ and $\restr$
into queries $\poss$ that do not have the form of a negation
and queries $\negs$ that do.
Then we take out equalities between two variables and negations
of equalities between two variables from the sets $\poss$ and $\negs$,
respectively.
To partition $\flconj{\qryconj}\cup\restr$ this way, we define the predicates $\isneg{\qry}$
and $\iseq{\qry}$
characterizing equalities between two variables and negations, respectively, i.e.,
$\isneg{\qry}$ is true iff $\qry$ has the form $\neg\subqry$ and
$\iseq{\qry}$ is true iff $\qry$ has the form $\varx\approx \vary$.
Finally, the function $\sconj{\disjs}$
converts a set of queries into a RANF conjunction, 
defined in Figure~\ref{fig:safe},
i.e., a left-associative conjunction in RANF.
Note that the function $\sconj{\disjs}$ must order
the queries $\varx\approx \vary$ so that either $\varx$ or $\vary$
is free in some preceding conjunct,
e.g., $\predbrand(\varx)\land \varx\approx \vary\land \vary\approx \varz$ is in RANF,
but $\predbrand(\varx)\land \vary\approx\varz\land \varx\approx \vary$ is not.
In the case of an existentially quantified query
$\exists\varlist.\,\exqryvarlist$, we rename
the variables $\varlist$ to avoid a clash of the free variables in the set
of queries $\restr$ with the bound variables $\varlist$.

Finally, we resolve the nondeterministic choices in
$\allowtosafe{\qry}{\restr}$ by minimizing the cost of the resulting \safeprop{} query
with respect to a training database (\S\ref{sec:nondet}).

\subsection{Optimization using Count Aggregations}
\label{sec:aggs}

In this section, we introduce count aggregations
and describe a generalization of Clau\ss{}en et al.~\cite{DBLP:conf/vldb/ClaussenKMP97}'s
approach to evaluate RANF queries using count aggregations.
Consider the query
\[
\qryx\land\neg\exists \vary.\,(\qryx\land\qryy\land\neg\qryxy),
\]
where $\fv{\qryx}=\{\varx\}$, $\fv{\qryy}=\{\vary\}$,
and $\fv{\qryxy}=\{\varx, \vary\}$.
This query is obtained by applying our translation to the query
$\qryx\land\forall \vary.\,(\qryy\longrightarrow\qryxy)$.
The cost of the translated query is dominated by
the cost of the Cartesian product $\qryx\land\qryy$.
Consider the subquery
$\subqry\eq\exists \vary.\,(\qryx\land\qryy\land\neg\qryxy)$.
A assignment $\valsymb$ satisfies $\subqry$ iff $\valsymb$ satisfies $\qryx$
and there exists a value $\domval$ such that $\valsymb[\vary\mapsto\domval]$
satisfies $\qryy$, but not $\qryxy$, i.e.,
the number of values $\domval$ such that $\valsymb[\vary\mapsto\domval]$
satisfies $\qryy$ is not equal to the number of values $\domval$
such that $\valsymb[\vary\mapsto\domval]$ satisfies both $\qryy$ and $\qryxy$.
An alternative evaluation of $\subqry$
evaluates the queries $\qryx$, $\qryy$, $\qryy\land\qryxy$
and computes the numbers of values $\domval$
such that $\valsymb[\vary\mapsto\domval]$ satisfies
$\qryy$ and $\qryy\land\qryxy$, respectively,
i.e., computes count aggregations.
These count aggregations are then used to filter
assignments $\valsymb$ satisfying $\qryx$
to get assignments $\valsymb$ satisfying $\subqry$.
The asymptotic time complexity of the alternative evaluation
never exceeds that of the evaluation computing
the Cartesian product $\qryx\land\qryy$ and asymptotically improves it if
$\card{\sattup{\qryx}}+\card{\sattup{\qryy}}+\card{\sattup{\qryxy}}\ll
\card{\sattup{\qryx\land\qryy}}$.
Furthermore, we observe that a assignment $\valsymb$ satisfies
$\qryx\land\neg \subqry$ if $\valsymb$ satisfies $\qryx$, but not $\subqry$,
i.e., the number of values $\domval$ such that
$\valsymb[\vary\mapsto\domval]$ satisfies $\qryy$ is equal to the number
of values $\domval$ such that
$\valsymb[\vary\mapsto\domval]$ satisfies $\qryy\land\qryxy$.

Next we introduce the syntax and semantics of count aggregations.
We extend \RC{}'s syntax by $\aggqry$,
where $\qry$ is a query, $\aggresult$ is a variable
representing the result of the count aggregation,
and $\aggbound$ is a sequence of variables that are bound
by the aggregation operator.
The semantics of the count aggregation is defined as follows:
\[
(\str, \valsymb)\models \aggqry\;\;\text{iff}\;\;
(\agggroup=\emptyset\longrightarrow\fv{\qry}\subseteq\aggbound)\,\text{and}\, \valsymb(\aggresult)=\card{\agggroup},
\]
where $\agggroup=\{\domvallist\in \dom^{\len{\aggbound}}\mid
(\str, \valsymb[\aggbound\mapsto\domvallist])\models \qry\}$.
We use the condition
$\agggroup=\emptyset\longrightarrow\fv{\qry}\subseteq\aggbound$
instead of $\agggroup\neq\emptyset$
to set $\aggresult$ to a zero count if the group $\agggroup$ is empty and
there are no group-by variables (like in SQL).
The set of free variables in a count aggregation is
$\fv{\aggqry}=(\fv{\qry}\setminus\aggbound)\cup\{\aggresult\}$.
Finally, we extend the definition of $\safe{\qry}$
with the case of a count aggregation:
\[
\safe{\aggqry}\;\;\text{iff}\;\;\safe{\qry}
\,\text{and}\,\aggbound\subseteq\fv{\qry}
\,\text{and}\, \aggresult\notin\fv{\qry}.
\]

We formulate translations introducing count aggregations
in the following two lemmas.

\begin{lemma}\label{lem:aggs}
Given $\restr\neq\emptyset$, let $\exists\varlist.\,\exqryvarlist\land\aggrestr$
be a \safeprop{} query.
Let $\aggresult$, $\aggresultc$ be fresh variables that do not occur in
$\fv{\exqryvarlist}$.
Then
\[
\begin{array}{@{}l@{\,}c@{\,}l@{\,}l@{}}
(\exists\varlist.\,\exqryvarlist\land\aggrestr)&\equiv
&\multicolumn{2}{@{}l@{}}{((\exists\varlist.\,\exqryvarlist)
\land\aggrestrdist)\lor{}}\\
&&(\exists \aggresult, \aggresultc.&\aggqry\land{}\\
&&&\aggqryrestr\land\neg (\aggresult=\aggresultc)).
\end{array}
\tag{\lagg}
\]
Moreover, the right-hand side of (\lagg) is in \safeprop{}.
\end{lemma}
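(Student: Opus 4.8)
The plan is to establish the equivalence by a direct semantic argument over a fixed structure $\str$ and assignment $\valsymb$, and then to verify the \safeprop{} claim by propagating the $\safe{\cdot}$ predicate and free-variable sets through the right-hand side using Figure~\ref{fig:safe} together with the aggregation rule. First I would unfold the hypothesis that $\exists\varlist.\,\exqryvarlist\land\aggrestr$ is \safeprop{}: reading $\exists\varlist.\,(\cdots)$ as a chain of single existentials and applying the rules of Figure~\ref{fig:safe} shows that $\varlist$ is a sequence of pairwise distinct variables with $\varlist\subseteq\fv{\exqryvarlist}$, that $\safe{\exqryvarlist}$ holds, and (from the $\qrya\land\neg\qryb$ rule applied to $\aggrestr=\bigwedge_{\qryrestr\in\restr}\neg\qryrestr$) that $\safe{\qryrestr}$ and $\fv{\qryrestr}\subseteq\fv{\exqryvarlist}$ for every $\qryrestr\in\restr$. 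Hence $\fv{\exqryvarlist\land\qryrestr}=\fv{\exqryvarlist}$, so the disjunction $\bigvee_{\qryrestr\in\restr}(\exqryvarlist\land\qryrestr)$ is \safeprop{} too --- here $\restr\neq\emptyset$ is needed, since otherwise this disjunction would be $\bot$ and $\varlist\subseteq\fv{\bot}=\emptyset$ could fail. Since $\exqryvarlist$ and this disjunction are \safeprop{}, hence \di{}, their sets of satisfying tuples over $\str$ are finite; writing $A$ for the set of $\varlist$-tuples $\domvallist$ with $(\str,\valsymb[\varlist\mapsto\domvallist])\models\exqryvarlist$ and $B$ for those with $(\str,\valsymb[\varlist\mapsto\domvallist])\models\bigvee_{\qryrestr\in\restr}(\exqryvarlist\land\qryrestr)$, both $A$ and $B$ are finite and $B\subseteq A$.

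The core of the proof is then a three-way case distinction. Unfolding the semantics, $(\str,\valsymb)$ satisfies the left-hand side iff $A\setminus B\neq\emptyset$, i.e.\ (using $B\subseteq A$ and finiteness) iff $\card{A}\neq\card{B}$. The first right-hand disjunct $(\exists\varlist.\,\exqryvarlist)\land\aggrestrdist$ holds iff $A\neq\emptyset$ and $B=\emptyset$. For the second right-hand disjunct, the aggregation semantics pins $\aggresult$ to $\card{A}$ via $\aggqry$ and $\aggresultc$ to $\card{B}$ via $\aggqryrestr$, subject only to the aggregation's side condition on empty groups; in particular $\aggqry$ is satisfiable only if $A\neq\emptyset$ or $\fv{\exqryvarlist}=\varlist$, so whenever $A=\emptyset$ that disjunct is either unsatisfiable or forces $\aggresult=\aggresultc=0$ and hence false, while whenever $A\neq\emptyset\neq B$ the side conditions are vacuous and the disjunct reduces to $\card{A}\neq\card{B}$. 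Checking the three cases: if $A=\emptyset$ then $B=\emptyset$ and both sides are false; if $A\neq\emptyset$ and $B=\emptyset$ then both sides are true (via the first disjunct); and if $A\neq\emptyset\neq B$ then the first disjunct is false and the second holds iff $\card{A}\neq\card{B}$, matching the left-hand side. This settles the equivalence.

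For the \safeprop{} claim I would propagate $\safe{\cdot}$ and free-variable sets upward through the right-hand side. With $W\eq\fv{\exqryvarlist}\setminus\varlist$, the facts above and the aggregation rule (which for $\aggqry$ demands $\safe{\exqryvarlist}$, $\varlist\subseteq\fv{\exqryvarlist}$, and $\aggresult\notin\fv{\exqryvarlist}$) give that $\aggqry$ and $\aggqryrestr$ are \safeprop{} with free variables $W\cup\{\aggresult\}$ and $W\cup\{\aggresultc\}$ (using freshness of $\aggresult,\aggresultc$ and $\restr\neq\emptyset$); their conjunction followed by $\neg(\aggresult=\aggresultc)$ is \safeprop{} by the $\qry\land\neg(\varx\approx\vary)$ rule, and quantifying out $\aggresult,\aggresultc$ leaves a \safeprop{} query with free variables $W$. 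The first disjunct is \safeprop{} with free variables $W$ by repeated use of the $\qrya\land\neg\qryb$ rule on $\aggrestrdist$, and since the two disjuncts share the free-variable set $W$ their disjunction is \safeprop{}. I expect the main obstacle to be the bookkeeping around the aggregation side condition in the equivalence proof: one must track exactly when $\aggqry$ and $\aggqryrestr$ are satisfiable and confirm that in every case where a group is empty the forced count is $0$, so that the second disjunct collapses to falsity precisely when $A=\emptyset$, which is what consistency with the left-hand side requires.
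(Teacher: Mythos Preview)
The paper states this lemma without proof, so there is no argument to compare against; your proposal supplies exactly the kind of direct semantic case analysis (on whether the sets $A$ and $B$ of $\varlist$-witnesses are empty) together with a routine $\safe{\cdot}$ propagation that one would expect, and it is correct. In particular, your handling of the aggregation side condition when $A=\emptyset$ is right: either $\fv{\exqryvarlist}\subseteq\varlist$ and both counts are forced to $0$, or the first aggregation atom is unsatisfiable; either way the second disjunct collapses to false, matching the left-hand side.
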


\begin{lemma}\label{lem:agggs}
Given $\restr\neq\emptyset$, let $\safeqry\land\neg\exists\varlist.\,\exqryvarlist\land\aggrestr$,
be a \safeprop{} query.
Let $\aggresult$, $\aggresultc$ be fresh variables that do not occur in
$\fv{\safeqry}\cup\fv{\exqryvarlist}$.
Then
\[
\begin{array}{@{}l@{\,}c@{\,}@{\,}l@{}l@{}}
(\safeqry\land\neg\exists\varlist.\,\exqryvarlist\land\aggrestr)&\equiv
&\multicolumn{2}{@{}l@{}}{(\safeqry\land\neg(\exists\varlist.\,\exqryvarlist))\lor{}}\\
&&(\exists \aggresult,\aggresultc.\,\safeqry\land{}&\aggqry\land{}\\
&&&\aggqryrestr\land(\aggresult=\aggresultc)).
\end{array}
\tag{\laggg}
\]
Moreover, the right-hand side of (\laggg) is in \safeprop{}.
\end{lemma}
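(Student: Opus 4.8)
The plan is to establish the equivalence pointwise over all $(\str,\valsymb)$ and then check \safeprop{}-membership separately; the whole argument mirrors that of Lemma~\ref{lem:aggs}, with the top-level negation now sitting on the $\exists\varlist$-subquery. Fix $\str$ and $\valsymb$ and let $\agggroup_1\eq\{\domvallist\mid(\str,\valsymb[\varlist\mapsto\domvallist])\models\exqryvarlist\}$ and $\agggroup_2\eq\{\domvallist\mid(\str,\valsymb[\varlist\mapsto\domvallist])\models\bigvee_{\qryrestr\in\restr}(\exqryvarlist\land\qryrestr)\}$ be the groups evaluated by $\aggqry$ and by $\aggqryrestr$. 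Each disjunct of $\bigvee_{\qryrestr\in\restr}(\exqryvarlist\land\qryrestr)$ implies $\exqryvarlist$, so $\agggroup_2\subseteq\agggroup_1$, and both are finite since $\exqryvarlist$ is safe range. An assignment $\valsymb[\varlist\mapsto\domvallist]$ satisfies $\exqryvarlist\land\aggrestr$ exactly when $\domvallist\in\agggroup_1\setminus\agggroup_2$, hence $\valsymb\models\exists\varlist.\,\exqryvarlist\land\aggrestr$ iff $\card{\agggroup_1}\neq\card{\agggroup_2}$, and therefore the left-hand side of (\laggg) holds at $\valsymb$ iff $\valsymb\models\safeqry$ and $\card{\agggroup_1}=\card{\agggroup_2}$.

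The right-hand side I would then analyze by a case split on whether $\exqryvarlist$ carries group-by variables, i.e., whether $\fv{\exqryvarlist}\subseteq\varlist$ (the reverse inclusion $\varlist\subseteq\fv{\exqryvarlist}$ is already forced by the \safeprop{}-ness of the input), and on whether $\agggroup_1=\emptyset$. If $\fv{\exqryvarlist}\subseteq\varlist$, the zero-count side conditions in the semantics of $\aggqry$ and $\aggqryrestr$ are vacuously met, so the second disjunct reduces to $\safeqry\land\card{\agggroup_1}=\card{\agggroup_2}$ and the first disjunct (which forces $\agggroup_1=\emptyset$, hence $\card{\agggroup_1}=\card{\agggroup_2}=0$) is absorbed into it, matching the left-hand side. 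If $\fv{\exqryvarlist}\not\subseteq\varlist$ and $\agggroup_1=\emptyset$, both sides collapse to $\valsymb\models\safeqry$, with the equality supplied by the first disjunct $\safeqry\land\neg(\exists\varlist.\,\exqryvarlist)$. If $\fv{\exqryvarlist}\not\subseteq\varlist$ and $\agggroup_1\neq\emptyset$, then $\aggqry$ pins $\aggresult$ to $\card{\agggroup_1}$; when $\agggroup_2\neq\emptyset$ the aggregation $\aggqryrestr$ pins $\aggresultc$ to $\card{\agggroup_2}$ and the second disjunct becomes $\safeqry\land\card{\agggroup_1}=\card{\agggroup_2}$; when $\agggroup_2=\emptyset$ the zero-count condition of $\aggqryrestr$ would require $\fv{\exqryvarlist}\subseteq\varlist$ and so fails, making the second disjunct false — which is consistent, since the left-hand side is also false here ($\card{\agggroup_1}\geq1>0=\card{\agggroup_2}$).

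For the \safeprop{} claim I would run through Figure~\ref{fig:safe} plus the rule for $\safe{\aggqry}$. From \safeprop{}-ness of the input we get $\safe{\exqryvarlist}$, $\varlist\subseteq\fv{\exqryvarlist}$, $\fv{\qryrestr}\subseteq\fv{\exqryvarlist}$ for each $\qryrestr\in\restr$, and $\fv{\exqryvarlist}\setminus\varlist\subseteq\fv{\safeqry}$ (the negated conjuncts $\neg\qryrestr$ add no free variables, so $\fv{\exists\varlist.\,\exqryvarlist\land\aggrestr}=\fv{\exqryvarlist}\setminus\varlist$ must be contained in $\fv{\safeqry}$). Hence $\safeqry\land\neg(\exists\varlist.\,\exqryvarlist)$ is \safeprop{} with free variables $\fv{\safeqry}$. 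For the second disjunct: $\aggresult,\aggresultc$ are fresh and $\restr\neq\emptyset$, so $\bigvee_{\qryrestr\in\restr}(\exqryvarlist\land\qryrestr)$ is a non-empty disjunction of \safeprop{} conjunctions each with free variables $\fv{\exqryvarlist}$, whence $\safe{\aggqry}$ and $\safe{\aggqryrestr}$; conjoining $\safeqry$, $\aggqry$, $\aggqryrestr$ and then $\aggresult=\aggresultc$ stays \safeprop{} (the equality's variables are free on the left), and quantifying $\aggresult,\aggresultc$ is permitted as they are free up to that point. The resulting free-variable set is $\fv{\safeqry}\cup(\fv{\exqryvarlist}\setminus\varlist)=\fv{\safeqry}$, equal to that of the first disjunct, so the disjunction is \safeprop{}.

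The main obstacle is the careful handling of the zero-count convention $\agggroup=\emptyset\longrightarrow\fv{\qry}\subseteq\aggbound$ built into the aggregation semantics: it is precisely why the extra first disjunct $\safeqry\land\neg(\exists\varlist.\,\exqryvarlist)$ is needed — namely to cover the situation where $\exqryvarlist$ has group-by variables and an empty group, so that $\aggqry$ is unsatisfiable — and arranging the interplay of the two aggregations' side conditions so that the second disjunct is true iff $\agggroup_1\neq\emptyset$ and $\card{\agggroup_1}=\card{\agggroup_2}$ is the delicate step. The \safeprop{} bookkeeping is routine but requires threading the free-variable constraints through each applicable rule.
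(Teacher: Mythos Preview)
The paper does not give a proof of this lemma; it is stated alongside Lemma~\ref{lem:aggs} and then used to justify the $\cntname$ optimization. Your argument is correct and is essentially the natural one: reduce both sides to the predicate ``$\valsymb\models\safeqry$ and $\card{\agggroup_1}=\card{\agggroup_2}$'' using $\agggroup_2\subseteq\agggroup_1$, and handle the aggregation's zero-count side condition by the case split on $\fv{\exqryvarlist}\subseteq\varlist$ and on $\agggroup_1=\emptyset$. Your observation that $\fv{\qryrestr}\subseteq\fv{\exqryvarlist}$ for each $\qryrestr\in\restr$ (obtained inductively from the $\safe{\qrya\land\neg\qryb}$ rule applied to the left-associated conjunction $\exqryvarlist\land\aggrestr$) is exactly what is needed both for the semantic step (so that $\fv{\bigvee_{\qryrestr}(\exqryvarlist\land\qryrestr)}=\fv{\exqryvarlist}$ and the zero-count test for $\aggqryrestr$ reduces to the one for $\aggqry$) and for the \safeprop{} check (so that all disjuncts in $\bigvee_{\qryrestr}(\exqryvarlist\land\qryrestr)$ share the same free variables). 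The \safeprop{} verification is straightforward once these inclusions are in place, and your free-variable bookkeeping showing both top-level disjuncts have free variables $\fv{\safeqry}$ is correct.
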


Note that the query cost does not decrease after applying the translation
(\lagg) or (\laggg)
because of the subquery $\aggqry$
in which $\exqryvarlist$ is evaluated before the count aggregation is computed.
For the query $\exists \vary.\,((\qryx\land\qryy)\land\neg\qryxy)$ from before,
we would compute $\aggaux{\vary}{\aggresult}{\qryx\land\qryy}$,
i.e., we would not (yet) avoid computing the Cartesian product $\qryx\land\qryy$.
However, we could reduce the scope of the bound variable $\vary$
by further translating
\[
\aggaux{\vary}{\aggresult}{\qryx\land\qryy}\equiv
\qryx\land\aggaux{\vary}{\aggresult}{\qryy}.
\]
\looseness=-1
This technique, called \emph{mini-scoping},
can be applied to a count aggregation $\aggqry$
if the aggregated query $\exqryvarlist$
is a conjunction that can be split into two \safeprop{} conjuncts
and the variables $\aggbound$ do not occur free in one of the conjuncts
(that conjunct can be pulled out of the count aggregation).
Mini-scoping can be analogously applied to queries of the form
$\exists\varlist.\,\exqryvarlist$.

Moreover, we can split a count aggregation over a conjunction
$\qryvarlista\land\qryvarlistb$
into a product of count aggregations if
the conjunction can be split into two \safeprop{} conjuncts
with disjoint sets of bound variables, i.e.,
$\varlist\cap\fv{\qryvarlista}\cap\fv{\qryvarlistb}=\emptyset$:
\[
\aggaux{\varlist}{\aggresult}{\qryvarlista\land\qryvarlistb}\equiv
(\exists\aggresulta, \aggresultb.\,
\aggaux{\varlist\cap\fv{\qryvarlista}}{\aggresulta}{\qryvarlista}\land
\aggaux{\varlist\cap\fv{\qryvarlistb}}{\aggresultb}{\qryvarlistb}\land
\aggresult=\aggresulta\cdot\aggresultb).
\]
Here $\aggresulta$ and $\aggresultb$ are fresh variables that do not occur
in $\fv{\qryvarlista}\cup\fv{\qryvarlistb}\cup\{\aggresult\}$.
Note that mini-scoping is only a heuristic and it can both improve
and harm the time complexity of query evaluation.
We leave the application of other more general optimization algorithms~\cite{KhamisNR16, DBLP:journals/sigmod/OlteanuS16}) as future work.

We implement the translations from Lemmas~\ref{lem:aggs} and~\ref{lem:agggs}
and mini-scoping in the function $\cnt{\cdot}$.
Given a RANF query $\safeqry$,
$\cnt{\safeqry}$ is an equivalent RANF query after introducing count aggregations
and performing mini-scoping.
The function $\cnt{\safeqry}$ uses a training database to decide
how to apply the translations from Lemmas~\ref{lem:aggs} and~\ref{lem:agggs}
and mini-scoping.
More specifically, the function $\cnt{\safeqry}$
tries several possibilities and chooses one that minimizes
the query cost of the resulting RANF query.

\begin{example}\label{ex:aggs}
We show how to introduce count aggregations into the \safeprop{} query
\[
\safeqry\eq \qryx\land\neg\exists \vary.\,(\qryx\land\qryy\land\neg\qryxy).
\]
After applying the translation (\laggg) and mini-scoping to this query,
we obtain the following equivalent \safeprop{} query:
\[
\begin{array}{@{}l@{\,}c@{\,}l@{\,}l@{}}
\cnt{\safeqry}&\eq&\multicolumn{2}{@{}l@{}}{(\qryx\land\neg (\qryx\land\exists \vary.\,\qryy))\lor{}}\\
&&(\exists \aggresult, \aggresultc.\,\qryx\land{}&\aggaux{\vary}{\aggresult}{\qryy}\land
\aggaux{\vary}{\aggresultc}{\qryy\land\qryxy}\land(\aggresult=\aggresultc)).
\end{array}
\]
\end{example}

\subsection{Translating \safeprop{} to RA}\label{sec:ranf2ra}

Our translation of a RANF query into SQL has two steps: 
we first translate the query to an equivalent RA 
expression, which we then translate to SQL using
a publicly available RA interpreter \radb~\cite{radb}.

We define the function
$\ranfra{\safeqry}$ translating \safeprop{} queries $\safeqry$
into equivalent RA expressions $\ranfra{\safeqry}$.
The translation is based on 
Algorithm~5.4.8 by~\citeauthorAlice~\cite{DBLP:books/aw/AbiteboulHV95}, which we 
modify as follows.
We adjust the way closed \RC queries are handled.
\citeauthorIeee~\cite{DBLP:journals/tkde/ChomickiT95} observed that closed \RC queries 
cannot be handled by SQL, since SQL allows neither empty projections
nor 0-ary relations.
They propose to use a unary auxiliary predicate $\auxpredsymb\in \predsymbs$
whose interpretation $\auxpredsymb^{\str}=\{\auxpredtuple\}$
always contains exactly one tuple $\auxpredtuple$.
Every closed query $\exists \varx.\,\exqryx$ is then translated
into $\exists \varx.\,\auxpredsymb(\auxvar)\land \exqryx$
with an auxiliary free variable $\auxvar$.
Every other closed query $\safeqry$ is translated
into $\auxpredsymb(\auxvar)\land \safeqry$, e.g.,
$\predbrand(42)$ is translated into $\auxpredsymb(\auxvar)\land \predbrand(42)$.
We also use the auxiliary predicate $\auxpredsymb$ to translate queries
of the form $\varx\approx \cstsymb$ and $\cstsymb\approx \varx$ because
the single tuple $(\auxpredtuple)$ in $\auxpredsymb^{\str}$
can be mapped to any constant $\cstsymb$.
Finally, we extend \cite[Algorithm~5.4.8]{DBLP:books/aw/AbiteboulHV95}
with queries of the form $\aggqry$.

\subsection{Translating RA to SQL}\label{sec:ra2sql}

The \radb{} interpreter, abbreviated here by the function $\rasql{\cdot}$,
translates an RA expression into SQL
by simply mapping the RA connectives into their SQL counterparts.
The function $\rasql{\cdot}$ is primitive recursive on RA expressions.
We modify \radb{} to further improve performance of the query evaluation 
as follows.

A \safeprop{} query $\qrya\land\neg\qryb$, where $\safe{\qrya}$,
$\safe{\qryb}$, and $\fv{\qryb}\subseteq\fv{\qrya}$
is translated into RA expression $\gdiff{\ranfra{\qrya}}{\ranfra{\qryb}}$,
where $\gdiffop$ denotes the anti-join operator
and $\ranfra{\qrya}$, $\ranfra{\qryb}$ are the equivalent relational algebra
expressions for $\qrya$, $\qryb$, respectively.
The \radb{} interpreter only supports the anti-join operator
$\gdiff{\ranfra{\qrya}}{\ranfra{\qryb}}$ expressed as
$\ranfra{\qrya} - (\ranfra{\qrya} \bowtie \ranfra{\qryb})$,
where $-$ denotes the set difference operator
and $\bowtie$ denotes the natural join.
Alternatively, the anti-join operator
can be directly mapped to \verb|LEFT JOIN| in SQL.
We generalize \radb{} to use \verb|LEFT JOIN|
since it performs better in our empirical evaluation~\cite{artifact}.

The \radb{} interpreter introduces a separate SQL subquery 
in a \verb|WITH| clause for every subexpression in the RA expression.
We extend \radb{} to additionally perform common subquery elimination, i.e.,
to merge syntactically equal subqueries.
Common subquery elimination is also assumed
in our query cost (\S\ref{sec:cost}).

Finally, the function $\ranfsql{\safeqry}$ is defined as
$\ranfsql{\safeqry}\eq\rasql{\ranfra{\safeqry}}$, 
i.e., as a composition of the two translations from \safeprop{}
to RA and from RA to SQL.

\subsection{Resolving Nondeterministic Choices}\label{sec:nondet}

To resolve the nondeterministic choices in our algorithms,
we suppose that the algorithms have access
to a \emph{training database} $\strtrain$ of constant size.
The training database is used to compare the cost
of queries over the actual database and thus it should preserve
the relative ordering of queries by their cost
over the actual database as much as possible.
Still, our translation satisfies the correctness
and worst-case complexity claims (\S\ref{sec:free}
and~\ref{sec:complex}) for every choice of the training database.
The training databases used in our empirical evaluation
are obtained using the function $\dgname$ (\S\ref{sec:golf})
with $\card{\posset{}} = \card{\negset{}}=\traindg$.
Because of its constant size,
the complexity of evaluating a query over
the training database is constant and does not impact
the asymptotic time complexity of evaluating the query
over the actual database using our translation.
There are two types of nondeterministic choices in our algorithms:
\begin{itemize}
\item  \looseness=-1 Choosing some $\nondetobj\leftarrow\nondetset$ in a while-loop.
As the while-loops always update $\nondetset$
with $\nondetset\eq(\nondetset\setminus\{\nondetobj\})\cup \nondetfun(\nondetobj)$
for some $\nondetfun$, the order in which the elements of $\nondetset$
are chosen does not matter.
\item Choosing a subset of queries $\restra\subseteq\restr$
in the function $\allowtosafe{\qry}{\restr}$.
Because $\allowtosafe{\qry}{\restr}$ yields a RANF query,
we enumerate all \emph{minimal} subsets
(a subset $\restra\subseteq\restr$ is minimal
if there exists no proper subset
$\restra'\subsetneq\restra$ that could be used instead of $\restra$)
and choose one that minimizes the query cost of the RANF query.
\item \looseness=-1 Choosing a variable $\varx\in \varseta$ and a set $\cpreds$ such that
$\tcov{\varx}{\qryrb}{\cpreds}$, where $\qryrb$ is a query with range-restricted bound variables
and $\varseta\subseteq\fv{\qryrb}$.
Observe that the measure $\sz{\qry}$ on queries, defined in Figure~\ref{fig:size_measure},
decreases for the queries in the premises of the rules for
$\tgen{\varx}{\qryrb}{\qpreds}$ and $\tcov{\varx}{\qryrb}{\cpreds}$,
defined in Figures~\ref{fig:gen_con} and~\ref{fig:cov_extra}.
Hence, deriving $\tgen{\varx}{\qryrb}{\qpreds}$ and $\tcov{\varx}{\qryrb}{\cpreds}$
either succeeds or gets stuck after at most $\sz{\qryrb}$ steps.
In particular, we can enumerate all sets $\cpreds$ such that $\tcov{\varx}{\qryrb}{\cpreds}$
holds.
Because we derive one additional query $\qryrb[\varx\mapsto \vary]$
for every $\vary\in\coleqs{\varx}{\cpreds}$
and a single query $\qryrb\land\colpredsqry{\cpreds}$,
we choose $\varx\in \varseta$ and $\cpreds$
minimizing $\card{\coleqs{\varx}{\cpreds}}$ as the first objective
and $\sum_{\projoneatom\in\colpreds{\cpreds}}\cost{\projoneatom}{\strtrain}$
as the second objective.
Our particular choice of $\cpreds$ with $\tcov{\varx}{\qryrb}{\cpreds}$ is merely a heuristic and does not provide
any additional guarantees compared to every other choice
of $\cpreds$ with $\tcov{\varx}{\qryrb}{\cpreds}$.
\end{itemize}
This process can be further improved by adopting query plan search heuristics as is done by most of the
existing database management systems.

\section{Data Golf Benchmark}\label{sec:golf}

\begin{algorithm}[t]
  \SetKwInOut{Input}{input}
  \SetKwInOut{Output}{output}
  \SetKwProg{Fn}{function}{$\,=$}{}
  \Input{An \RC{} query $\qry$ satisfying
  \propcon, \propcst, \proppred, \proprep,
    $\dgstrat\in\{0, 1\}$.}
  \Output{Two sets of variables $\poseqs$ and $\negeqs$ whose values
must be equal in every tuple in $\posset{\varseq}$ and $\negset{\varseq}$
when computing a Data Golf structure.}
  \BlankLine
    \Fn{$\dgeqs{\qry}{\dgstrat}$}{%
    \Switch{$\qry$}{%
      \lCase{$\predsymb(\termsymb_1, \ldots, \termsymb_{\arity(\predsymb)})$}{%
        \Return $(\emptyset, \emptyset)$%
      }%
      \lCase{$\varx\approx\vary$}{%
        \Return $(\{\varx, \vary\}, \emptyset)$%
      }%
      \Case{$\neg\subqry$}{%
         $(\poseqs, \negeqs)\eq\dgeqs{\subqry}{\dgstrat}$\;
        \Return $(\negeqs, \poseqs)$\;
      }%
      \Case{$\subqrya\lor \subqryb$ or $\subqrya\land \subqryb$}{%
        $(\poseqsa, \negeqsa)\eq\dgeqs{\subqrya}{\dgstrat}$\;
        $(\poseqsb, \negeqsb)\eq\dgeqs{\subqryb}{\dgstrat}$\;
        \lIf{$\dgstrat=0$}{%
          \Return $(\poseqsa\cup\poseqsb, \negeqsa\cup\negeqsb)$%
        }%
        \lElseIf{$\qry=\subqrya\lor\subqryb$}{%
          \Return $(\poseqsa\cup\negeqsb, \negeqsa\cup\negeqsb)$%
        }%
        \lElseIf{$\qry=\subqrya\land\subqryb$}{%
          \Return $(\poseqsa\cup\poseqsb, \poseqsa\cup\negeqsb)$%
        }%
      }%
      \lCase{$\exists \vary.\,\exqryy$}{%
        \Return $\dgeqs{\exqryy}{\dgstrat}$%
      }%
    }%
  }%
  \caption{Computing sets of variables $\poseqs$ and $\negeqs$
  to reflect equalities in a query when computing a Data Golf structure.}
  \label{alg:golfeqs}
\end{algorithm}

\begin{algorithm}[t]
  \SetKwInOut{Input}{input}
  \SetKwInOut{Output}{output}
  \SetKwProg{Fn}{function}{$\,=$}{}
  \Input{An \RC{} query $\qry$ satisfying
  \propcon, \propcst, \proppred, \proprep,
  a sequence of pairwise distinct variables $\varseq$,
  $\av{\qry}\subseteq\varseq$,
  sets of tuples $\posset{\varseq}$ and $\negset{\varseq}$ over $\varseq$
  such that all values of variables from $\av{\qry}$
  in these tuples are pairwise distinct (also across tuples) except that,
  in every tuple in $\posset{\varseq}$ ($\negset{\varseq}$),
  the variables in $\poseqs$ ($\negeqs$)
  have the same value (which is different across tuples), where
  $\dgeqs{\qry}{\dgstrat}=(\poseqs, \negeqs)$,
  $\dgstrat\in\{0, 1\}$.}
  \Output{A structure $\str$ such that
  $\evaltup{\posset{\varseq}}{\fvseq{\qry}}\subseteq\sattup{\qry}$,
  $\evaltup{\negset{\varseq}}{\fvseq{\qry}}\cap\sattup{\qry}=\emptyset$,
  and $\sattup{\subqry}$ and $\sattup{\neg \subqry}$
  contain at least $\min\{\card{\posset{\varseq}}, \card{\negset{\varseq}}\}$
  tuples, for every $\subqry\sqsubseteq \qry$.}
  \BlankLine
  \Fn{$\dg{\qry}{\varseq}{\posset{\varseq}}{\negset{\varseq}}{\dgstrat}$}{%
  \Switch{$\qry$}{%
    \lCase{$\predsymb(\termsymb_1, \ldots, \termsymb_{\arity(\predsymb)})$}{%
      \Return $\{\predsymb^\str\mapsto
        \evaltup{\posset{\varseq}}{\termsymb_1, \ldots,
        \termsymb_{\arity(\predsymb)}}\}$%
      }%
    \lCase{$\varx\approx \vary$}{%
       \Return $\emptyset$%
    }%
    \lCase{$\neg \subqry$}{%
      \Return $\dg{\subqry}{\varseq}{\negset{\varseq}}
      {\posset{\varseq}}{\dgstrat}$%
    }%
    \Case{$\subqrya\lor \subqryb$ or $\subqrya\land \subqryb$}{%
        $(\poseqsa, \negeqsa)\eq\dgeqs{\subqrya}{\dgstrat}$;
        $(\poseqsb, \negeqsb)\eq\dgeqs{\subqryb}{\dgstrat}$\;
        \lIf{$\dgstrat=0$}{%
          $(\vareqsa, \vareqsb)\eq(\poseqsa\cup\negeqsb, \negeqsa\cup\poseqsb)$%
        }%
        \lElseIf{$\qry=\subqrya\land\subqryb$}{%
          $(\vareqsa, \vareqsb)\eq(\negeqsa\cup\negeqsb, \negeqsa\cup\poseqsb)$%
        }%
        \lElseIf{$\qry=\subqrya\lor\subqryb$}{%
          $(\vareqsa, \vareqsb)\eq(\poseqsa\cup\poseqsb, \negeqsa\cup\poseqsb)$%
        }%
      $(\auxseta{\varseq}, \auxsetb{\varseq})\leftarrow
      \{(\auxseta{\varseq}, \auxsetb{\varseq})\mid
        \card{\auxseta{\varseq}}=\card{\auxsetb{\varseq}}=
        \min\{\card{\posset{\varseq}}, \card{\negset{\varseq}}\},\ 
        \text{all values in tuples in}\allowbreak
        \posset{\varseq}, \negset{\varseq},
        \auxseta{\varseq}, \auxsetb{\varseq}
        \text{ are chosen to be pairwise distinct (also across tuples)
        except that,}\allowbreak
        \text{in every tuple in }
        \auxseta{\varseq}\ (\auxsetb{\varseq}),
        \text{ the variables in } \vareqsa\ (\vareqsb)
        \text{ have the same value}
        \allowbreak
        \text{(which is different across tuples)}\}$\;%
    \lIf{$\dgstrat=0$}{%
      \Return $\dg{\subqrya}{\varseq}{\posset{\varseq}\cup \auxseta{\varseq}}
      {\negset{\varseq}\cup \auxsetb{\varseq}}{\dgstrat}\cup
      \dg{\subqryb}{\varseq}{\posset{\varseq}\cup \auxsetb{\varseq}}
      {\negset{\varseq}\cup \auxseta{\varseq}}{\dgstrat}$%
    }%
    \lElseIf{$\qry=\subqrya\lor \subqryb$}{%
      \Return $\dg{\subqrya}{\varseq}{\posset{\varseq}\cup \auxseta{\varseq}}
      {\negset{\varseq}\cup \auxsetb{\varseq}}{\dgstrat}\cup
      \dg{\subqryb}{\varseq}{\auxseta{\varseq}\cup \auxsetb{\varseq}}
      {\negset{\varseq}\cup\posset{\varseq}}{\dgstrat}$%
    }%
    \lElseIf{$\qry=\subqrya\land \subqryb$}{%
      \Return $\dg{\subqrya}{\varseq}{\posset{\varseq}\cup \negset{\varseq}}
      {\auxseta{\varseq}\cup \auxsetb{\varseq}}{\dgstrat}\cup
      \dg{\subqryb}{\varseq}{\posset{\varseq}\cup \auxsetb{\varseq}}
      {\negset{\varseq}\cup \auxseta{\varseq}}{\dgstrat}$%
    }%
    }%
    \lCase{$\exists \vary.\,\exqryy$}{%
      \Return $\dg{\exqryy}{\varseq}{\posset{\varseq}}
      {\negset{\varseq}}{\dgstrat}$%
    }%
  }%
  }
  \caption{Computing the Data Golf structure.}
  \label{alg:golf}
\end{algorithm}

\looseness=-1
In this section, we describe the \emph{Data Golf} benchmark, which we use to generate  
structures (i.e., database instances) for our empirical evaluation.
The technical description of this benchmark is only needed to fully understand the setup of our
empirical evaluation (\S\ref{sec:eeval}), but its details are independent of our query translation
(\S\ref{sec:translation}--\ref{sec:detail}).

Given an RC query, we seek a structure that yields a nontrivial evaluation result for the overall
query and for all its subqueries. Intuitively, the structure makes query evaluation
potentially more challenging compared to the case where some
subquery evaluates to a trivial (e.g., empty) result.
More specifically, Data Golf has two objectives.
The first resembles the \emph{regex golf} game's objective~\cite{regexgolf} (hence the name) and aims to find a structure on which the result of a given query contains a given \emph{positive} set of tuples and does not contain any tuples from another given \emph{negative} set. The second objective is to ensure that all the query's subqueries evaluate to a non-trivial result.

\looseness=-1
Formally, given a query $\qry$ and two sets of tuples $\posset{}$ and
$\negset{}$ over a fixed domain $\dom$,
representing assignments of $\av{\qry}$
and satisfying further assumptions on their values, Data Golf produces a
structure $\str$ (represented as a partial mapping from predicate symbols
to their interpretations) such that
the projections of tuples in $\posset{}$ ($\negset{}$, respectively)
to $\fvseq{\qry}$ are in $\sattup{\qry}$ (disjoint from $\sattup{\qry}$, respectively)
and $\card{\sattup{\subqry}}$ and $\card{\sattup{\neg \subqry}}$
are at least $\min\{\card{\posset{}}, \card{\negset{}}\}$, for every $\subqry\sqsubseteq \qry$.
To be able to produce such a structure $\str$,
we make the following assumptions on $\qry$:
\begin{enumerate}[(\proppred)]
\item[(\propcon)]
for every subquery $\exists \vary.\,\exqryy$ of $\qry$ we have $\con{\vary}{\exqryy}{\apreds}$
(Figure~\ref{fig:gen_con_vgt})
for some set of atomic predicates $\apreds$ and, moreover,
$\{\vary\}\subsetneq\fv{\oneatom}$ holds for every $\oneatom\in\apreds$;
these conditions prevent subqueries like $\exists \vary.\,\neg \predexb(\varx, \vary)$
and $\exists \vary.\,(\predexb(\varx, \vary)\lor \predexc(\vary))$, respectively;
\item[(\propcst)] $\qry$ contains no subquery of the form $\varx\approx\cstsymb$,
which is satisfied by exactly one tuple;
\item[(\proppred)] $\qry$ contains no closed subqueries, e.g.,
$\predexc(42)$, because a closed subquery is either satisfied
by all possible tuples or no tuple at all; and
\item[(\proprep)] $\qry$ contains no repeated predicate symbols
and no equalities $\varx\approx\vary$ in $\qry$
share a variable;
this avoids subqueries like $\predexc(\varx)\land\neg \predexc(\varx)$
and $\varx\approx\vary\land\neg\varx\approx\vary$.
\end{enumerate}

Given a sequence of pairwise distinct variables $\varseq$
and a tuple $\domvallist$ of the same length,
we may interpret the tuple $\domvallist$ as a
\emph{tuple over} $\varseq$, denoted as $\tupover{\domvallist}{\varseq}$.
Given a sequence $\termsymb_1, \ldots, \termsymb_\listlength\in
\varseq\cup \consts$ of terms, we denote by
$\evaltup{\tupover{\domvallist}{\varseq}}{\termsymb_1, \ldots, \termsymb_\listlength}$
the tuple obtained by evaluating the terms
$\termsymb_1, \ldots, \termsymb_\listlength$
over $\tupover{\domvallist}{\varseq}$.
Formally, we define
$\evaltup{\tupover{\domvallist}{\varseq}}{\termsymb_1, \ldots, \termsymb_\listlength}\eq
(\domvalc_\iter)_{\iter=1}^\listlength$, where
$\domvalc_\iter=\domvallist_\idx$ if $\termsymb_\iter=\varseq_\idx$
and $\domvalc_\iter=\termsymb_\iter$ if $\termsymb_\iter\in \consts$.
We lift this notion to sets of tuples over $\varseq$
in the standard way.

Data Golf is formalized by the function
$\dg{\qry}{\varseq}{\posset{\varseq}}{\negset{\varseq}}{\dgstrat}$,
defined in Figure~\ref{alg:golf},
where $\varseq$ is a sequence of pairwise distinct variables
containing all variables in $\qry$, i.e., $\av{\qry}\subseteq\varseq$,
$\posset{\varseq}$ and $\negset{\varseq}$ are sets of tuples
over~$\varseq$, and $\dgstrat\in\{0, 1\}$ is a \emph{strategy}.
To reflect $\qry$'s equalities in the sets $\posset{\varseq}$
and $\negset{\varseq}$, given a strategy $\dgstrat$,
we define the function $\dgeqs{\qry}{\dgstrat}=(\poseqs, \negeqs)$
(Figure~\ref{alg:golfeqs})
that computes two sets of variables $\poseqs$ and $\negeqs$ whose values
must be equal in every tuple in $\posset{\varseq}$
and $\negset{\varseq}$, respectively.
The values of the remaining variables ($\varseq\setminus\poseqs$
and $\varseq\setminus\negeqs$, respectively) must be pairwise distinct
and also different from the value of the variables in $\poseqs$
and $\negeqs$, respectively.
In the case of a conjunction or a disjunction,
we add disjoint sets $\auxseta{\varseq}$, $\auxsetb{\varseq}$
of tuples over $\varseq$ to $\posset{\varseq}$, $\negset{\varseq}$
so that the intermediate results for the subqueries are neither equal
nor disjoint.
We implement two strategies (parameter $\dgstrat$)
to choose these sets $\auxseta{\varseq}$, $\auxsetb{\varseq}$.

Let $\str$ be a Data Golf structure
computed by $\dg{\qry}{\varseq}{\posset{\varseq}}{\negset{\varseq}}{\dgstrat}$. We justify why $\str$
satisfies $\evaltup{\posset{\varseq}}{\fvseq{\qry}}\subseteq\sattup{\qry}$
and $\evaltup{\negset{\varseq}}{\fvseq{\qry}}\cap\sattup{\qry}=\emptyset$.
We proceed by induction on the query $\qry$.
Because of (\proprep), the Data Golf structures for the subqueries
$\qrya$, $\qryb$ of a binary query $\qrya\lor\qryb$ or $\qrya\land\qryb$
can be combined using the union operator.
The only case that does not follow immediately is that
$\evaltup{\negset{\varseq}}{\fvseq{\qry}}\cap\sattup{\qry}=\emptyset$
for a query $\qry$ of the form $\exists\vary.\,\exqryy$.
We prove this case by contradiction.
Without loss of generality we assume that
$\fvseq{\exqryy}=\fvseq{\qry}\cdot \vary$.
Suppose that $\domvallist\in\evaltup{\negset{\varseq}}{\fvseq{\qry}}$
and $\domvallist\in\sattup{\qry}$.
Because $\domvallist\in\evaltup{\negset{\varseq}}{\fvseq{\qry}}$,
there exists some $\domval$ such that
$\domvallist\cdot\domval\in\evaltup{\negset{\varseq}}{\fvseq{\exqryy}}$.
Because $\domvallist\in\sattup{\qry}$, there exists some $\domvalc$ such that
$\domvallist\cdot\domvalc\in\sattup{\qryy}$.
By the induction hypothesis,
$\domvallist\cdot\domval\notin\sattup{\exqryy}$ and
$\domvallist\cdot\domvalc\notin\evaltup{\negset{\varseq}}{\fvseq{\exqryy}}$.
Because $\con{\vary}{\exqryy}{\apreds}$ holds for some $\apreds$
satisfying (\propcon), the query $\exqryy$ is equivalent to
$
(\exqryy\land\bigor{\oneatom\in\apreds}\;\oneatom)\lor
\exqryy[\vary/\bot]
$.
We have $\domvallist\cdot\domvalc\in\sattup{\exqryy}$.
If the tuple $\domvallist\cdot\domvalc$ satisfies
$\exqryy[\vary/\bot]$,
then $\domvallist\cdot\domval\in\sattup{\exqryy}$ (contradiction)
because the variable $\vary$ does not occur in the
query $\exqryy[\vary/\bot]$
and thus its assignment in $\domvallist\cdot\domvalc$ can be arbitrarily changed.
Otherwise, the tuple $\domvallist\cdot\domvalc$ satisfies some
atomic predicate $\oneatom\in\apreds$
and (\propcon) implies $\{\vary\}\subsetneq\fv{\oneatom}$.
Hence, the tuples $\domvallist\cdot\domval$
and $\domvallist\cdot\domvalc$ agree on the assignment
of a variable $\varx\in\fv{\oneatom}\setminus\{\vary\}$.
Let $\possetc{\varseq}$ and $\negsetc{\varseq}$
be the sets in the recursive call of $\dgname$ on the atomic predicate
from $\oneatom$.
Because
$\domvallist\cdot\domval\in\evaltup{\negset{\varseq}}{\fvseq{\exqryy}}$
and $\evaltup{\negset{\varseq}}{\fvseq{\exqryy}}\subseteq
\possetc{\varseq}[\fvseq{\exqryy}]\cup\negsetc{\varseq}[\fvseq{\exqryy}]$,
the tuple $\domvallist\cdot\domval$ is in $\possetc{\varseq}[\fvseq{\exqryy}]\cup\negsetc{\varseq}[\fvseq{\exqryy}]$.
Because $\domvallist\cdot\domvalc$ satisfies
the quantified predicate $\oneatom$,
the tuple $\domvallist\cdot\domvalc$ is in $\possetc{\varseq}[\fvseq{\exqryy}]$.
Next we observe that the assignments of every variable
(in particular, $\varx$)
in the tuples from the sets $\possetc{\varseq}$, $\negsetc{\varseq}$
are pairwise distinct (there can only be equal values of variables
within a single tuple).
Because the tuples $\domvallist\cdot\domval$
and $\domvallist\cdot\domvalc$ agree on the assignment of $\varx$,
they must be equal, i.e.,
$\domvallist\cdot\domval=\domvallist\cdot\domvalc$ (contradiction).

The sets $\posset{\varseq}$, $\negset{\varseq}$ only grow
in $\dgname$'s recursion and
the properties (\propcon), (\propcst), (\proppred), and (\proprep{})
imply that $\qry$ has no closed subquery. Hence,
$\evaltup{\posset{\varseq}}{\fvseq{\qry}}\subseteq\sattup{\qry}$
and $\evaltup{\negset{\varseq}}{\fvseq{\qry}}\cap\sattup{\qry}=\emptyset$ imply that
$\card{\sattup{\subqry}}$ and $\card{\sattup{\neg \subqry}}$
contain at least $\min\{\card{\posset{\varseq}}, \card{\negset{\varseq}}\}$
tuples, for every $\subqry\sqsubseteq \qry$.

\begin{example}
Consider the query
$\qry\eq \neg\exists\vary.\,\predexb(\varx, \vary)\land\neg\predexa(\varx, \vary, \varz)$.
This query $\qry$ satisfies
\text{\normalfont(\propcon)}, \text{\normalfont(\propcst)}, \text{\normalfont(\proppred)}, and \text{\normalfont(\proprep)}. In particular,
$\con{\vary}{\predexb(\varx, \vary)\land\neg\predexa(\varx, \vary, \varz)}{\apreds}$
holds for $\apreds=\{\predexb(\varx, \vary)\}$ with
$\{\vary\}\subsetneq\fv{\predexb(\varx, \vary)}$.
We choose $\varseq=(\varx, \varz, \vary)$,
$\posset{\varseq}=\{(0, 4, 8),\allowbreak (2, 6, 10)\}$, and
$\negset{\varseq}=\{(12, 16, 20), (14, 18, 22)\}$. The function
$\dg{\qry}{\varseq}{\posset{\varseq}}{\negset{\varseq}}{\dgstrat}$ first
flips $\posset{\varseq}$ and $\negset{\varseq}$
because $\qry$'s main connective is negation.

For conjunction (a binary operator),
two additional sets of tuples are computed:
$\auxseta{\varseq}=\{(24, 28, 32), (26, 30, 34)\}$
and
$\auxsetb{\varseq}=\{(36, 40, 44), (38, 42, 46)\}$.
Depending on the strategy ($\dgstrat=0$ or $\dgstrat=1$),
one of the following structures is computed:
$\str_0=\{\predexb\mapsto\{(12, 20), (14, 22),\allowbreak (24, 32), (26, 34)\},
\predexa\mapsto\posset{\varx\vary\varz}\}$, or
$\str_1=\{\predexb\mapsto\{(12, 20), (14, 22), (0, 8), (2, 10)\},
\predexa\mapsto\posset{\varx\vary\varz}\}$,
where $\posset{\varx\vary\varz}=\{(0, 8, 4),\allowbreak (2, 10, 6), (24, 32, 28), (26, 34, 30)\}$.

The query $\predexc(\varx)\land\qry$ is satisfied
by the finite set of tuples $\posset{\varseq}[\varx, \varz]$ under the structure
$\str_1\cup\{\predexc\mapsto\{(0), (2)\}\}$
obtained by extending $\str_1$ ($\dgstrat=1$).
In contrast, the same query $\predexc(\varx)\land\qry$ is satisfied
by an infinite set of tuples including $\posset{\varseq}[\varx, \varz]$
and disjoint from $\negset{\varseq}[\varx, \varz]$ under the structure
$\str_0\cup\{\predexc\mapsto\{(0), (2)\}\}$
obtained by extending $\str_0$ ($\dgstrat=0$).
\end{example}

\section{Empirical Evaluation}
\label{sec:eeval}

We empirically validate the evaluation performance of the queries output by \tool{}.
We also assess \tool{}'s translation time, the average-case time complexity of query evaluation, scalability to large
databases,
and DBMS interoperability.
To this end, we answer the following research questions:
\begin{enumerate}[RQ1]
\item\label{rq:compare} How does \tool{}'s query evaluation perform compared to the state-of-the-art tools on both
domain-independent and domain-dependent queries?
\item\label{rq:scale} How does \tool{}'s query evaluation scale on large synthetic databases?
\item\label{rq:real} How does \tool{}'s query evaluation perform on real-world databases?
\item\label{rq:cnt} How does the count aggregation optimization impact \tool{}'s performance?
\item\label{rq:dbms} Can \tool{} use different DBMSs for query evaluation?
\item\label{rq:translate} How long does \tool{} take to translate different queries (without query evaluation)?
\end{enumerate}

We organize our evaluation into five experiments. Four experiments (\smallexp{}, \mediumexp{},
\largeexp{}, and \realexp) focus on the type and size of the structures we use for query evaluation. The fifth experiment (\infexp{}) focuses on the
evaluation of non-evaluable (i.e., domain-dependent)
queries that may potentially produce infinite evaluation results.

To answer~\ref{rq:compare}, we compare our tool with 
the translation-based approach by~\citeauthorVanGT~\cite{DBLP:journals/tods/GelderT91} (\vgtool{}), 
the structure reduction approach by~\citeauthorAil~\cite{ail}, and 
the 
  \ddd{}~\cite{DBLP:conf/csl/MollerLAH99, DBLP:conf/cade/Moller02}, 
  \ldd{}~\cite{DBLP:conf/fmcad/ChakiGS09}, and
  \mpreg{}~\cite{DBLP:journals/jacm/BasinKMZ15} tools 
that evaluate \RC{} queries directly using infinite relations encoded as binary decision diagrams.
We could not find a publicly available implementation of \citeauthorVanGT's translation. Therefore,
the tool \vgtool{} for evaluable \RC queries is derived from our implementation by modifying the
function $\allowed{\cdot}$ in Figure~\ref{alg:allowed} to use the 
relation
$\con{\varx}{\qry}{\apreds}$ (Appendix~\ref{sec:eval}, Figure~\ref{fig:gen_con_vgt})
instead of $\tcov{\varx}{\qry}{\cpreds}$ (Figure~\ref{fig:cov_extra}) 
and to use the generator $\bigor{\oneatom\in\apreds}\exists\fvseq{\qry}\setminus\{\varx\}.\,\oneatom$ instead of $\colpredsqry{\cpreds}$.
Evaluable queries $\qry$ are always translated into $(\qryfin, \bot)$ by $\rw{\cdot}$ because all of
$\qry$'s free variables are range restricted.
We exclude \vgtool{} from the comparison on non-evaluable queries (experiment \infexp{}).
Similarly, the implementation of
~\citeauthorAil's approach 
was not available; 
hence we used our formally-verified implementation~\cite{DBLP:journals/afp/EvalFO}.
The implementations of the remaining tools were publicly available. 

We use Data Golf structures of growing size (experiments \smallexp{}, \mediumexp{}, 
and \largeexp{}) to answer~\ref{rq:scale}. In contrast, to answer~\ref{rq:real}, we use 
real-world structures obtained from the Amazon review dataset~\cite{DBLP:conf/emnlp/NiLM19}
(experiment \realexp{}).

To answer~\ref{rq:cnt}, we also consider variants of the translation-based approaches without the
step that uses count aggregation optimization $\cnt{\cdot}$, superscripted with a minus ($^-$).

SQL queries computed by the translations are evaluated using the \psql{} and \msql{}
DBMS (\ref{rq:dbms}). We superscript the tool names with \psqlsub{} and \msqlsub{} accordingly. 
In the \largeexp{} experiment, we only use \psql{} because it 
consistently performed better
than \msql{} in the \mediumexp{} experiment.
In all our experiments, the translation-based tools used a Data Golf structure with
$\card{\posset{}} = \card{\negset{}}=\traindg$
as the training database.
We run our experiments on an AMD Ryzen 7 PRO 4750U computer with 32 GB RAM.
The relations in \psql{} and \msql{} are recreated before each
invocation to prevent optimizations based on caching recent query evaluation
results. We measure the query evaluation times 
of all the tools and the translation time of our \tool{} tool (\ref{rq:translate}).
We provide all our experiments
in an easily reproducible and publicly available artifact~\cite{artifact}.

\begin{table}
  \setlength{\tabcolsep}{10pt}

\begin{tabular}{r c c c c c}
  \toprule
  Experiments: & \smallexp & \mediumexp & \largeexp & \infexp & \realexp \\

  \midrule

  \tool & \checkmark & \checkmark & \checkmark & \checkmark$^*$ & \checkmark \\
  \vgtool & \checkmark & \checkmark & TO & N/A & \checkmark \\
  \ddd & \checkmark & TO & TO & \checkmark & \checkmark \\
  \ldd & \checkmark & TO & TO & \checkmark & \checkmark\\
  \mpreg & \checkmark & TO & TO & \checkmark & \checkmark \\
  Ailamazyan et al. & TO & TO & TO & TO & TO \\
  \bottomrule
  \multicolumn{6}{l}{$*$ Only states that the result is infinite.}
\end{tabular}
\caption{Applicability and performance of all the tools on all the experiments. TO = Timeout of
\tosusp{} on all experiment runs, N/A = Not applicable}
\label{fig:tools}
\end{table}

In the \smallexp{}, \mediumexp{}, and \largeexp{} experiments, 
we generate ten pseudorandom queries (denoted as $Q_i, 1\leq i \leq 10$, see Appendix~\ref{sec:queries}) with a fixed size $\evalsz$
and  Data Golf structures $\str$ (strategy $\dgstrat=1$).
The queries satisfy the Data Golf assumptions along with a few additional ones: the queries are not safe range, every bound variable actually occurs in its scope,
disjunction only appears at the top-level, and only pairwise distinct variables appear as terms in predicates.
The queries have $\evalnfv$ free variables and every subquery has at most
$\evalmaxn$ free variables.
We control the size of the Data Golf structure $\str$ in our experiments
using a parameter $\eparam = \card{\posset{}} = \card{\negset{}}$.
Because the sets $\posset{}$ and $\negset{}$ grow in the recursion on subqueries,
relations in a Data Golf structure typically have
more than $\eparam$ tuples. 
The values of the parameter $\eparam$ for Data Golf structures are summarized
in Figure~\ref{fig:golfexps}.

The \infexp{} experiment consists of five pseudorandom queries $\qry^I_i, 1\leq i \leq 5$
(Appendix~\ref{sec:queries}) that are 
\emph{not} evaluable and $\rw{\qry^I_i}=(\qry_{i,fin}, \qry_{i,inf})$, 
where $\qry_{i,inf}\neq\bot$. Specifically, the queries are of the form
$\cpreda \land \forall \varx, \vary.\; \cpredb \longrightarrow \cpredc$, where $\cpreda, \cpredb$, and $\cpredc$
are either atomic predicates or equalities.
We choose the queries so that the number of their satisfying tuples
is not too high,
e.g., quadratic in the parameter $\eparam$,
because no tool can possibly enumerate so many tuples within the timeout.
For each $1\leq i \leq 5$, we compare the performance
of our tool to tools that directly evaluate $\qry^I_i$ on  
structures generated by the two Data Golf strategies (parameter $\dgstrat$), which trigger infinite or finite evaluation results on the considered queries.
For infinite results, our tool outputs this fact (by evaluating $\qry_{i,inf}$), whereas the other tools also output 
a finite representation of the infinite result. For finite results, all tools produce the same output.

\begin{figure}
\small
\def\expspace{8pt}
\begin{tabular}{@{}l@{\hspace*{2em}}r@{\cspace}r@{\cspace}r@{\cspace}r@{\cspace}r@{\cspace}r@{\cspace}r@{\cspace}r@{\cspace}r@{\cspace}r@{}l}
\toprule
\multicolumn{1}{@{}r@{\cspace}@{\cspace}}{Query}&$Q_1$&$Q_2$&$Q_3$&$Q_4$&\multicolumn{1}{r@{\cspace}@{\cspace}}{$Q_5$}&$Q_6$&$Q_7$&$Q_8$&$Q_9$&$Q_{10}$\\
\midrule
\multicolumn{12}{@{}c@{}}{Experiment \smallexp{}, Evaluable pseudorandom queries 
$\qry$, $\cntsub{\qry}=14$, $\eparam=500$:}\\
\midrule
\multicolumn{1}{@{}r@{\cspace}@{\cspace}}{\trtime}&0.6&0.0&0.1&0.0&\multicolumn{1}{r@{\cspace}@{\cspace}}{0.0}&0.0&0.0&0.1&0.0&0.0\\
\midrule
\tool\psqlsub&\textbf{0.2}&0.2&0.3&0.2&\multicolumn{1}{r@{\cspace}@{\cspace}}{0.2}&\textbf{0.2}&0.2&\textbf{0.1}&\textbf{0.2}&\textbf{0.1}\\
\tool\msqlsub&0.3&0.2&0.3&0.2&\multicolumn{1}{r@{\cspace}@{\cspace}}{0.2}&\textbf{0.2}&\textbf{0.1}&0.2&0.3&0.2\\
\toolnonopt\psqlsub&\textbf{0.2}&\textbf{0.1}&\textbf{0.2}&0.2&\multicolumn{1}{r@{\cspace}@{\cspace}}{\textbf{0.1}}&\textbf{0.2}&\textbf{0.1}&\textbf{0.1}&\textbf{0.2}&\textbf{0.1}\\
\toolnonopt\msqlsub&\textbf{0.2}&\textbf{0.1}&0.3&0.2&\multicolumn{1}{r@{\cspace}@{\cspace}}{\textbf{0.1}}&\textbf{0.2}&\textbf{0.1}&\textbf{0.1}&0.4&0.3\\
\midrule
\vgtool\psqlsub&1.5&0.2&1.8&1.3&\multicolumn{1}{r@{\cspace}@{\cspace}}{1.9}&8.6&2.5&1.5&15.6&4.8\\
\vgtool\msqlsub&0.3&0.2&0.3&\textbf{0.1}&\multicolumn{1}{r@{\cspace}@{\cspace}}{0.2}&56.3&6.1&0.2&155.7&13.5\\
\vgtoolnonopt\psqlsub&16.4&6.0&10.7&6.3&\multicolumn{1}{r@{\cspace}@{\cspace}}{TO}&6.2&3.1&2.3&48.0&8.8\\
\vgtoolnonopt\msqlsub&129.1&73.7&97.3&66.8&\multicolumn{1}{r@{\cspace}@{\cspace}}{TO}&52.1&19.1&12.8&TO&61.1\\
\midrule
\ddd&3.8&2.4&5.5&RE&\multicolumn{1}{r@{\cspace}@{\cspace}}{1.2}&3.7&4.7&2.2&17.9&5.3\\
\ldd&35.9&16.0&46.2&15.6&\multicolumn{1}{r@{\cspace}@{\cspace}}{9.0}&28.3&12.9&17.4&206.0&32.6\\
\mpreg&31.3&11.2&29.7&10.5&\multicolumn{1}{r@{\cspace}@{\cspace}}{10.2}&31.2&10.7&21.0&103.3&24.0\\
\noalign{\vspace{\expspace}}
\midrule
\multicolumn{12}{@{}c@{}}{\smallskip Experiment \mediumexp{}, Evaluable pseudorandom queries $\qry$,
$\cntsub{\qry}=14$, $\eparam=20000$:}\\
\midrule
\tool\psqlsub&2.1&1.1&2.2&1.2&\multicolumn{1}{r@{\cspace}@{\cspace}}{1.1}&1.2&\textbf{0.5}&0.9&1.8&\textbf{1.0}\\
\tool\msqlsub&6.2&2.6&6.6&2.7&\multicolumn{1}{r@{\cspace}@{\cspace}}{2.4}&4.4&1.6&2.7&9.0&3.3\\
\toolnonopt\psqlsub&\textbf{1.4}&\textbf{0.8}&\textbf{1.4}&\textbf{0.8}&\multicolumn{1}{r@{\cspace}@{\cspace}}{\textbf{0.8}}&\textbf{1.0}&\textbf{0.5}&\textbf{0.6}&\textbf{1.7}&\textbf{1.0}\\
\toolnonopt\msqlsub&4.3&1.9&5.0&1.8&\multicolumn{1}{r@{\cspace}@{\cspace}}{1.9}&3.3&1.6&2.2&8.3&3.2\\
\midrule
\vgtool\psqlsub&2.9&1.0&3.0&2.2&\multicolumn{1}{r@{\cspace}@{\cspace}}{2.8}&TO&TO&2.3&TO&TO\\
\vgtool\msqlsub&4.6&2.4&5.2&2.6&\multicolumn{1}{r@{\cspace}@{\cspace}}{2.7}&TO&TO&3.0&TO&TO\\
\vgtoolnonopt\psqlsub&TO&TO&TO&TO&\multicolumn{1}{r@{\cspace}@{\cspace}}{TO}&TO&TO&TO&TO&TO\\
\vgtoolnonopt\msqlsub&TO&TO&TO&TO&\multicolumn{1}{r@{\cspace}@{\cspace}}{TO}&TO&TO&TO&TO&TO\\
\noalign{\vspace{\expspace}}
\midrule
\multicolumn{12}{@{}c@{}}{\smallskip Experiment \largeexp{}, Evaluable pseudorandom queries $\qry$,
$\cntsub{\qry}=14$, tool = \tool\psqlsub:}\\
\midrule
$\eparam=40000$&4.5&2.3&4.4&2.3&\multicolumn{1}{r@{\cspace}@{\cspace}}{2.2}&2.5&1.0&1.7&3.7&1.8\\
$\eparam=80000$&8.8&4.5&8.7&4.8&\multicolumn{1}{r@{\cspace}@{\cspace}}{4.5}&5.0&1.8&3.3&7.2&3.7\\
$\eparam=120000$&14.1&6.8&12.8&7.2&\multicolumn{1}{r@{\cspace}@{\cspace}}{7.0}&7.1&2.8&5.1&10.8&4.9\\
\noalign{\vspace{\expspace}}
\midrule
\multicolumn{1}{@{}r@{\cspace}@{\cspace}}{Query}&$Q^I_1$&$Q^I_2$&$Q^I_3$&$Q^I_4$&\multicolumn{1}{r@{\cspace}@{\cspace}}{$Q^I_5$}&$Q^I_1$&$Q^I_2$&$Q^I_3$&$Q^I_4$&$Q^I_5$\\
\midrule
\multicolumn{12}{@{}c@{}}{\smallskip Experiment \infexp{}, Non-evaluable pseudorandom queries $\qry$,
$\cntsub{\qry}=7$, $\eparam=4000$:}\\
\midrule
&\multicolumn{5}{c@{\cspace}@{\cspace}}{$\strategya$}&\multicolumn{5}{c}{$\strategyb$}\\
\multicolumn{1}{@{}r@{\cspace}@{\cspace}}{\trtime}&0.0&0.0&0.0&0.0&\multicolumn{1}{r@{\cspace}@{\cspace}}{0.0}&0.0&0.0&0.0&0.0&0.0\\
\midrule
\tool\psqlsub&0.5&0.5&0.5&0.5&\multicolumn{1}{r@{\cspace}@{\cspace}}{0.5}&0.5&1.5&0.7&0.7&1.4\\
\tool\msqlsub&\textbf{0.3}&0.5&\textbf{0.3}&0.5&\multicolumn{1}{r@{\cspace}@{\cspace}}{0.5}&\textbf{0.4}&\textbf{0.7}&0.4&0.6&\textbf{0.6}\\
\toolnonopt\psqlsub&\textbf{0.3}&\textbf{0.3}&\textbf{0.3}&\textbf{0.3}&\multicolumn{1}{r@{\cspace}@{\cspace}}{\textbf{0.3}}&\textbf{0.4}&TO&\textbf{0.3}&\textbf{0.5}&TO\\
\toolnonopt\msqlsub&0.4&1.0&0.5&0.7&\multicolumn{1}{r@{\cspace}@{\cspace}}{0.9}&0.6&TO&0.6&0.6&TO\\
\midrule
\ddd&32.4&81.0&32.7&60.9&\multicolumn{1}{r@{\cspace}@{\cspace}}{81.6}&32.1&68.6&31.9&59.4&68.2\\
\ldd&TO&TO&TO&TO&\multicolumn{1}{r@{\cspace}@{\cspace}}{TO}&288.5&TO&TO&TO&TO\\
\mpreg&175.0&TO&175.4&TO&\multicolumn{1}{r@{\cspace}@{\cspace}}{TO}&160.3&299.1&160.0&TO&TO\\
\bottomrule\\
\end{tabular}
\vspace{-3ex}
\caption{Experiments \smallexp{}, \mediumexp{}, \largeexp{}, and \infexp{}.
We use the following abbreviations: TO = Timeout of \togolf{}, RE = Runtime Error. Reported
translation time for \tool.}
\label{fig:golfexps}
\end{figure}

Figure~\ref{fig:golfexps} shows the empirical evaluation results for the experiments \smallexp{}, \mediumexp{}, \largeexp{}, and \infexp{}.
All entries are execution times in seconds,
TO is a timeout, and RE is a runtime error.
In the experiments \smallexp{}, \mediumexp{}, and \largeexp{},
the columns correspond to ten unique pseudorandom queries
(the same queries are used in all the three experiments).
In the \infexp{} experiment, we use five
unique pseudorandom queries and two Data Golf strategies.
The time it takes for our translation \tool{} to translate
each query is shown in the first line
for the experiments \smallexp{} and \infexp{}
because the queries in the experiments \mediumexp{} and \largeexp{}
are the same as in \smallexp{}.
The remaining lines show evaluation times
with the lowest time for a query typeset in bold.
We omit the rows for tools that time out or crash on all queries of an experiment, e.g.,
~\citeauthorAil~\cite{ail}. Table~\ref{fig:tools} summarizes which tools were used in which
experiment. 
We conclude that our translation~\tool{} significantly outperforms
all other tools on all queries (\ref{rq:compare})
(except \vgtool{} on the fourth query, but on the smallest structure)
and scales well to higher values of $\eparam$,
i.e., larger relations in the Data Golf structures, on all queries (\ref{rq:scale}).
The count aggregation optimization (\ref{rq:cnt}) provides no conclusive benefit 
for our translation, while it consistently improves \vgtool{}'s performance, especially
in the \mediumexp{} experiment.

\begin{figure}[t]
\small
\begin{tabular}{@{}l@{\hspace*{2em}}rr@{\cspace}rr@{\cspace}rr@{\cspace}l@{\;\;\;}l@{\cspace}rr@{\cspace}rr@{\cspace}rr@{}}
\toprule
\multicolumn{1}{r}{Query}&\multicolumn{2}{c}{$\qsusp$}&\multicolumn{2}{c}{$\qsuspusr$}&\multicolumn{2}{c}{$\qsuspa$}&\multicolumn{1}{r@{\cspace}}{}&\multicolumn{2}{c}{$\qsusp$}&\multicolumn{2}{c}{$\qsuspusr$}&\multicolumn{2}{c}{$\qsuspa$}\\
\multicolumn{1}{r}{Param. $\eparam$}&$10^3$&$10^4$&$10^3$&$10^4$&$10^3$&$10^4$&\multicolumn{1}{r@{\cspace}}{Dataset}&GC&MI&GC&MI&GC&MI\\
\midrule
\multicolumn{1}{@{}r@{\cspace}@{\cspace}}{\trtime}&\multicolumn{2}{c@{\cspace}}{0.0}&\multicolumn{2}{c@{\cspace}}{0.0}&\multicolumn{2}{c}{0.3}&&\multicolumn{2}{c@{\cspace}}{0.0}&\multicolumn{2}{c@{\cspace}}{0.0}&\multicolumn{2}{c}{0.3}\\
\midrule
\tool\psqlsub&1.2&1.4&1.8&2.2&3.5&4.1&&\textbf{1.6}&\textbf{8.4}&2.5&\textbf{11.9}&4.9&\textbf{51.3}\\
\tool\msqlsub&\textbf{0.3}&\textbf{1.1}&\textbf{0.3}&\textbf{1.4}&\textbf{0.5}&\textbf{2.6}&&1.9&54.2&\textbf{2.4}&71.0&\textbf{4.5}&TO\\
\toolnonopt\psqlsub&28.1&TO&28.6&TO&228.6&TO&&132.9&TO&131.8&TO&TO&TO\\
\toolnonopt\msqlsub&TO&TO&TO&TO&TO&TO&&TO&TO&TO&TO&TO&TO\\
\midrule
\vgtool\psqlsub&1.5&1.7&\vgtna&\vgtna&204.3&TO&&1.9&10.0&\vgtna&\vgtna&TO&TO\\
\vgtool\msqlsub&\textbf{0.3}&1.4&\vgtna&\vgtna&TO&TO&&1.7&60.9&\vgtna&\vgtna&TO&TO\\
\vgtoolnonopt\psqlsub&TO&TO&\vgtna&\vgtna&TO&TO&&TO&TO&\vgtna&\vgtna&TO&TO\\
\vgtoolnonopt\msqlsub&TO&TO&\vgtna&\vgtna&TO&TO&&TO&TO&\vgtna&\vgtna&TO&TO\\
\midrule
\ddd&4.0&TO&4.1&TO&18.6&TO&&61.3&TO&61.0&TO&126.7&TO\\
\ldd&22.3&TO&22.2&TO&148.5&TO&&TO&TO&TO&TO&TO&TO\\
\mpreg&22.4&TO&22.6&TO&84.1&TO&&TO&TO&TO&TO&TO&TO\\
\bottomrule\\
\end{tabular}
\vspace{-3ex}
\caption{\looseness=-1 Experiment \realexp{} with the queries $\qsusp$, $\qsuspusr$, $\qsuspa$.
We use the following abbreviations: GC = Gift Cards, MI = Musical Instruments, TO = Timeout of
\tosusp{}. Reported translation time for \tool.}
\label{fig:suspexps}
\end{figure}

We also evaluate the tools on the queries
$\qsusp$ and $\qsuspusr$
from the introduction and on the more challenging query
$\qsuspa\eq \predbrand(\varbrand) \land \exists \varuser, \varscore, \vartext.\;\forall \varprod.\;\predprod(\varbrand,\varprod) \longrightarrow \predscore(\varprod, \varuser, \varscore)\lor\predtext(\varprod, \varuser, \vartext)$
with an additional relation~$\predtext$
that relates user's review text (variable $\vartext$) to a product.
The query $\qsuspa$ computes all brands for which
there is a user, a score, and a review text such that all the
brand's products were reviewed by that user with that score
or by that user with that text.
We use both  Data Golf structures
(strategy $\dgstrat=1$) and real-world structures obtained from
the Amazon review dataset~\cite{DBLP:conf/emnlp/NiLM19}.
The real-world relations $\predprod$, $\predscore$, and $\predtext$ are obtained
by projecting the respective tables from the Amazon review dataset
for two chosen product categories: gift cards (abbreviated GC) consisting of 147\,194 reviews of 1\,548 products and musical instruments (MI)  consisting of 1\,512\,530  reviews of 120\,400 products. The relation $\predbrand$ contains 
all brands from $\predprod$ that have at least three products.
Because the tool by \citeauthorAil{}, \ddd{}, \ldd{}, and \mpreg{} only support
integer data, we injectively remap the string
and floating-point values from the Amazon review dataset to integers.

\looseness=-1
Figure~\ref{fig:suspexps} shows the empirical evaluation results:
the time it takes for our translation \tool{} to translate
each query is shown in the first line
and the execution times on Data Golf structures (left)
and on structures derived from the real-world dataset
for two specific product categories (right)
are shown in the remaining lines.
We remark that \vgtool{} cannot handle the query $\qsuspusr$
as it is not evaluable~\cite{DBLP:journals/tods/GelderT91}, hence we
mark the correspond cells in Figure~\ref{fig:suspexps} with $-$.
Our translation \tool{}
significantly outperforms all other tools
(except \vgtool{} on $\qsusp$, where \tool{} and \vgtool{}
have similar performance)
on both Data Golf and real-world structures (\ref{rq:real}).
\vgtoolnonopt{} translates $\qsusp$ into a \safeprop{} query
with a higher query cost than \toolnonopt{}.
However, the optimization $\cnt{\cdot}$ manages to rectify this inefficiency (\ref{rq:cnt})
and thus \vgtool{} exhibits a comparable performance as \tool{}.
Specifically,
the factor of $80\times$ in query cost between \vgtoolnonopt{} and \toolnonopt{}
improves to $1.1\times$ in query cost between \vgtool{} and \tool{}
on a Data Golf structure with $\eparam = 20$~\cite{artifact}.
Nevertheless, \vgtool{} does not finish evaluating
the query $\qsuspa$ on GC and MI datasets within 5 minutes, unlike \tool{}.
Finally, \tool{}'s translation took less than 1 second on all the queries (\ref{rq:translate}).

\section{Conclusion}
\looseness=-1
We presented a translation-based approach to evaluating arbitrary relational calculus queries over an infinite domain with improved time complexity over existing approaches. This contribution is an important milestone towards making the relational calculus a viable query  language for practical databases. In future work, we plan to integrate into our base language features that database practitioners love,
such as inequalities, bag semantics, and aggregations.

\def\ackname{\relax Acknowledgments} {\paragraph{\ackname}\relax This research
    has been supported by the Swiss National Science Foundation grants ``Big Data
    Monitoring`` (167162) and ``Model-driven Security \& Privacy'' (204796) as well
    as by a Novo Nordisk Fonden start package grant (NNF20OC0063462). }

\bibliographystyle{alphaurl}
\bibliography{main}

\clearpage
\appendix

\section{Evaluable Queries}\label{sec:eval}

The classes of \emph{evaluable} queries~\cite[Definition~5.2]{DBLP:journals/tods/GelderT91}
and \emph{allowed} queries~\cite[Definition~5.3]{DBLP:journals/tods/GelderT91}
are decidable subsets of \di{} \RC queries.
The evaluable queries characterize
exactly the \di{} queries with no repeated predicate
symbols~\cite[Theorem~10.5]{DBLP:journals/tods/GelderT91}.
Every evaluable query can be translated
to an equivalent allowed query~\cite[Theorem 8.6]{DBLP:journals/tods/GelderT91}
and every allowed query can be translated to an equivalent
\safeprop{} query~\cite[Theorem 9.6]{DBLP:journals/tods/GelderT91}.

\begin{figure}[t]
  \small
  \[
  \begin{array}{@{}l@{\;\;}c@{\;\;}l@{}}
  \ruleif{\vgen{\varx}{\qry}{\{\qry\}}}{\edb{\qry}\,\text{and}\,\free{\varx}{\qry}}\\
  \ruleif{\vgen{\varx}{\neg\neg \qry}{\apreds}}{\vgen{\varx}{\qry}{\apreds}}\\
  \ruleif{\vgen{\varx}{\neg (\qrya\lor \qryb)}{\apreds}}{\vgen{\varx}{(\neg \qrya)\land(\neg \qryb)}{\apreds}}\\
  \ruleif{\vgen{\varx}{\neg (\qrya\land \qryb)}{\apreds}}{\vgen{\varx}{(\neg \qrya)\lor(\neg \qryb)}{\apreds}}\\
  \ruleif{\vgen{\varx}{\neg\exists \vary.\,\exqryy}{\apreds}}{\varx\neq \vary\,\text{and}\,\vgen{\varx}{\neg \exqryy}{\apreds}}\\
  \ruleif{\vgen{\varx}{\qrya\lor \qryb}{\apredsa\cup \apredsb}}{\vgen{\varx}{\qrya}{\apredsa}\,\text{and}\,\vgen{\varx}{\qryb}{\apredsb}}\\
  \ruleif{\vgen{\varx}{\qrya\land \qryb}{\apreds}}{\vgen{\varx}{\qrya}{\apreds}}\\
  \ruleif{\vgen{\varx}{\qrya\land \qryb}{\apreds}}{\vgen{\varx}{\qryb}{\apreds}}\\
  \ruleif{\vgen{\varx}{\exists \vary.\,\exqryy}{\apreds}}{\varx\neq \vary\,\text{and}\,\vgen{\varx}{\exqryy}{\apreds}}\\[10pt]
  \ruleif{\con{\varx}{\qry}{\emptyset}}{\absent{\varx}{\qry}}\\
  \ruleif{\con{\varx}{\qry}{\{\qry\}}}{\edb{\qry}\,\text{and}\,\free{\varx}{\qry}}\\
  \ruleif{\con{\varx}{\neg\neg \qry}{\apreds}}{\con{\varx}{\qry}{\apreds}}\\
  \ruleif{\con{\varx}{\neg (\qrya\lor \qryb)}{\apreds}}{\con{\varx}{(\neg \qrya)\,\text{and}\,(\neg \qryb)}{\apreds}}\\
  \ruleif{\con{\varx}{\neg (\qrya\land \qryb)}{\apreds}}{\con{\varx}{(\neg \qrya)\lor(\neg \qryb)}{\apreds}}\\
  \ruleif{\con{\varx}{\neg\exists \vary.\,\exqryy}{\apreds}}{\varx\neq \vary\,\text{and}\,\con{\varx}{\neg \exqryy}{\apreds}}\\
  \ruleif{\con{\varx}{\qrya\lor \qryb}{\apredsa \cup \apredsb}}{\con{\varx}{\qrya}{\apredsa}\,\text{and}\,\con{\varx}{\qryb}{\apredsb}}\\
  \ruleif{\con{\varx}{\qrya\land \qryb}{\apreds}}{\vgen{\varx}{\qrya}{\apreds}}\\
  \ruleif{\con{\varx}{\qrya\land \qryb}{\apreds}}{\vgen{\varx}{\qryb}{\apreds}}\\
  \ruleif{\con{\varx}{\qrya\land \qryb}{\apredsa \cup \apredsb}}{\con{\varx}{\qrya}{\apredsa}\,\text{and}\,\con{\varx}{\qryb}{\apredsb}}\\
  \ruleifdot{\con{\varx}{\exists \vary.\,\exqryy}{\apreds}}{\varx\neq \vary\,\text{and}\,\con{\varx}{\exqryy}{\apreds}}\\
  \end{array}
  \]
  \caption{The relations $\vgen{\varx}{\qry}{\apreds}$ and $\con{\varx}{\qry}{\apreds}$~\cite{DBLP:journals/tods/GelderT91}.}\label{fig:gen_con_vgt}
  \end{figure}

\begin{definition}\label{def:eval}
A query $\qry$ is called \emph{evaluable} if
\begin{itemize}
\item every variable $\varx\in\fv{\qry}$ satisfies
$\xgen{\varx}{\qry}$ and
\item the bound variable $\vary$ in every subquery $\exists\vary.\,\exqryy$
of $\qry$ satisfies $\xcon{\vary}{\exqryy}$.
\end{itemize}
A query $\qry$ is called \emph{allowed} if
\begin{itemize}
\item every variable $\varx\in\fv{\qry}$ satisfies
$\xgen{\varx}{\qry}$ and
\item the bound variable $\vary$ in every subquery $\exists \vary.\,\exqryy$
of $\qry$ satisfies $\xgen{\vary}{\exqryy}$,
\end{itemize}
where the relation $\xgen{\varx}{\qry}$ is defined to hold iff
there exists a set of atomic predicates $\apreds$ such that $\vgen{\varx}{\qry}{\apreds}$
and the relation $\xcon{\varx}{\qry}$ is defined to hold iff
there exists a set of atomic predicates $\apreds$ such that $\con{\varx}{\qry}{\apreds}$,
respectively.
The relations $\vgen{\varx}{\qry}{\apreds}$
and $\con{\varx}{\qry}{\apreds}$ are defined in Figure~\ref{fig:gen_con_vgt}.
\end{definition}

The termination of the rules
in Figure~\ref{fig:gen_con_vgt}
follow using the measure $\sz{\qry}$ (Figure~\ref{fig:size_measure}).
We now relate the definitions from Figure~\ref{fig:gen_con} and Figure~\ref{fig:gen_con_vgt} with the following lemmas.
\begin{lemma}\label{lem:vgen-neg-pos}
Let $\varx$ and $\vary$ be free variables in a query $\qry$ such that
$\xgen{\varx}{\neg \qry}$ and $\xgen{\vary}{\qry}$ hold. Then we get a contradiction.
\end{lemma}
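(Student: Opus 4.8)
The plan is to prove the lemma by structural induction on $\qry$, after observing that the statement is essentially self-dual, so it is cleanest to establish, for every query $\chi$, the property $P(\chi)$: there are no two variables $a,b$ with both $\xgen{a}{\neg\chi}$ and $\xgen{b}{\chi}$ derivable. The lemma is then $P(\qry)$ instantiated with $a=\varx$ and $b=\vary$; the hypotheses $\varx,\vary\in\fv{\qry}$ are not actually needed, since they follow from the premises $\xgen{a}{\neg\chi}$ and $\xgen{b}{\chi}$. In each inductive step the appeal is to $P$ at a proper subquery of $\chi$, so the induction is well-founded (equivalently one may induct on the measure $\sz{\qry}$ of Figure~\ref{fig:size_measure}, which strictly decreases).

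For the base cases, if $\chi$ is an atomic predicate, an equality $\termsymb_1\approx\termsymb_2$, or $\bot$ or $\top$, then $\neg\chi$ matches the conclusion of no rule in Figure~\ref{fig:gen_con_vgt}: the negation rules there fire only on $\neg\neg\psi$, $\neg(\qrya\lor\qryb)$, $\neg(\qrya\land\qryb)$, and $\neg\exists\vary.\,\exqryy$, and there is in particular no rule for a bare negated atom, a negated equality, or $\bot$. Hence $\xgen{a}{\neg\chi}$ is never derivable and $P(\chi)$ holds vacuously.

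For the inductive cases I read off, for each top-level shape of $\chi$, which sub-derivations the two premises force. If $\chi=\neg\psi$, the only applicable rule for $\neg\neg\psi$ is the double-negation rule, so $\xgen{a}{\neg\neg\psi}$ is equivalent to $\xgen{a}{\psi}$, and $P(\neg\psi)$ is literally $P(\psi)$ with the two variables interchanged, closed by the induction hypothesis for $\psi$. If $\chi=\qrya\lor\qryb$, then $\xgen{b}{\qrya\lor\qryb}$ forces, via the unique $\lor$-rule, both $\xgen{b}{\qrya}$ and $\xgen{b}{\qryb}$, while $\xgen{a}{\neg(\qrya\lor\qryb)}$ reduces to $\xgen{a}{(\neg\qrya)\land(\neg\qryb)}$ and hence, by the $\land$-rules, to $\xgen{a}{\neg\qrya}$ or $\xgen{a}{\neg\qryb}$; whichever holds contradicts $P(\qrya)$ or $P(\qryb)$. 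The case $\chi=\qrya\land\qryb$ is dual: $\xgen{b}{\qrya\land\qryb}$ now gives $\xgen{b}{\qrya}$ or $\xgen{b}{\qryb}$, whereas $\xgen{a}{\neg(\qrya\land\qryb)}$ reduces to $\xgen{a}{(\neg\qrya)\lor(\neg\qryb)}$ and so to both $\xgen{a}{\neg\qrya}$ and $\xgen{a}{\neg\qryb}$, so the conjunct chosen by $b$ again contradicts the induction hypothesis. Finally, for $\chi=\exists\varz.\,\exqryz$ the rules give $\xgen{b}{\exqryz}$ from $\xgen{b}{\exists\varz.\,\exqryz}$ and $\xgen{a}{\neg\exqryz}$ from $\xgen{a}{\neg\exists\varz.\,\exqryz}$, contradicting $P(\exqryz)$.

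I expect the main obstacle to be bookkeeping rather than anything conceptual: in the $\lor$ and $\land$ cases one must carefully track which premise yields a conjunction of sub-claims and which yields a disjunction, and then recurse into the correctly matched subquery so that the induction hypothesis applies; the double-negation case and all base cases are immediate once the rule set of Figure~\ref{fig:gen_con_vgt} is inspected.
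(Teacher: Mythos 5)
Your proof is correct and takes the same route as the paper: the paper's own argument is a one-line induction on $\qry$ (via the measure $\sz{\qry}$ of Figure~\ref{fig:size_measure}), and your case analysis simply spells out the details that the paper leaves implicit. In particular, your handling of the $\lor$/$\land$ duality (one premise forcing both subderivations, the other forcing at least one, via the intermediate de Morgan forms) is exactly the bookkeeping the paper's induction relies on.
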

\begin{proof}
This is proved by induction on the query $\qry$
using the measure $\sz{\qry}$ on queries defined in Figure~\ref{fig:size_measure},
which decreases in every case of the definition in Figure~\ref{fig:gen_con_vgt}.
\end{proof}
\begin{lemma}\label{lem:vgen-gen-one}
Let $\qry$ be a query
such that $\xgen{\vary}{\exqryy}$ holds for the bound variable $\vary$
in every subquery $\exists \vary.\,\exqryy$ of $\qry$.
Suppose that $\xgen{\varx}{\qry}$ holds for a free variable $\varx\in\fv{\qry}$.
Then $\gen{\varx}{\qry}$ holds.
\end{lemma}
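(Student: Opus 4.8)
\textbf{Proof plan for Lemma~\ref{lem:vgen-gen-one}.}
The plan is to prove the statement by induction on the query $\qry$ using the measure $\sz{\qry}$ from Figure~\ref{fig:size_measure}, mirroring the case analysis in the definition of $\vgen{\varx}{\qry}{\apreds}$ in Figure~\ref{fig:gen_con_vgt}. Since $\xgen{\varx}{\qry}$ holds, there is a set of atomic predicates $\apreds$ with $\vgen{\varx}{\qry}{\apreds}$; I would perform the induction on the derivation of $\vgen{\varx}{\qry}{\apreds}$, which is essentially an induction on $\sz{\qry}$ because every premise in Figure~\ref{fig:gen_con_vgt} mentions a query of strictly smaller size. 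The goal is to exhibit, in each case, a set of \emph{quantified} predicates $\qpreds$ with $\tgen{\varx}{\qry}{\qpreds}$, from which $\gen{\varx}{\qry}$ follows by Definition~\ref{def:st}.

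The key cases proceed as follows. For the atomic predicate base case $\edb{\qry}$ with $\free{\varx}{\qry}$, both definitions agree, so $\tgen{\varx}{\qry}{\{\qry\}}$ holds directly. The cases for $\neg\neg\qry$, $\neg(\qrya\lor\qryb)$, $\neg(\qrya\land\qryb)$, $\qrya\lor\qryb$, and $\qrya\land\qryb$ match structurally between the two figures: in each, I apply the induction hypothesis to the smaller subqueries (using that $\xgen{\cdot}{\cdot}$ still holds for them, which is immediate from the $\vgen{\cdot}{\cdot}{\cdot}$ derivation), and combine the resulting $\tgen{}$ facts with the corresponding rule of Figure~\ref{fig:gen_con}. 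The interesting case is $\exists\vary.\,\exqryy$: from $\vgen{\varx}{\exists\vary.\,\exqryy}{\apreds}$ we have $\varx\neq\vary$ and $\vgen{\varx}{\exqryy}{\apreds}$, hence $\xgen{\varx}{\exqryy}$; the induction hypothesis applied to $\exqryy$ (whose bound variables are all of $\qry$'s nested bound variables and thus still satisfy the hypothesis of the lemma) yields $\tgen{\varx}{\exqryy}{\qpreds}$ for some $\qpreds$, and then the last rule of Figure~\ref{fig:gen_con} gives $\tgen{\varx}{\exists\vary.\,\exqryy}{\proj{\vary}{\qpreds}}$, so $\gen{\varx}{\exists\vary.\,\exqryy}$ holds. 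The case $\neg\exists\vary.\,\exqryy$ is where the hypothesis on $\qry$'s bound variables becomes essential: from $\vgen{\varx}{\neg\exists\vary.\,\exqryy}{\apreds}$ we get $\vgen{\varx}{\neg\exqryy}{\apreds}$, i.e.\ $\xgen{\varx}{\neg\exqryy}$, so $\xgen{\varx}{\neg\exqryy}$ holds; but by assumption $\xgen{\vary}{\exqryy}$ holds for this subquery's bound variable, and Lemma~\ref{lem:vgen-neg-pos} then produces a contradiction (taking the variable $\vary$ there to be this bound variable $\vary$ and the query there to be $\exqryy$, noting $\varx\in\fv{\neg\exqryy}$ and $\vary\in\fv{\exqryy}$ since $\vary$ occurs as a bound variable would require care — one must first check $\vary\in\fv{\exqryy}$, which holds because $\xgen{\vary}{\exqryy}$ entails $\free{\vary}{\exqryy}$). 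Hence this case is vacuous and no $\tgen{}$ witness is needed.

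I would also need to handle the equality-related rules that appear in Figure~\ref{fig:gen_con} but not in Figure~\ref{fig:gen_con_vgt} (the $\qry\land\varx\approx\vary$ and $\qry\land\vary\approx\varx$ rules) — but these are only extra ways of deriving $\tgen{}$, so they create no obligation; conversely Figure~\ref{fig:gen_con_vgt} has no equality rules, so the induction never enters such a case. The main obstacle I anticipate is the bookkeeping in the $\neg\exists\vary.\,\exqryy$ case: I must carefully verify that the hypothesis ``$\xgen{\vary}{\exqryy}$ holds for the bound variable $\vary$ in every subquery $\exists\vary.\,\exqryy$ of $\qry$'' is correctly invoked for the outermost such subquery, that $\vary$ is genuinely free in $\exqryy$ so that Lemma~\ref{lem:vgen-neg-pos} applies, and that the monotonicity of the bound-variable hypothesis under taking subqueries is used consistently when recursing. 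The remaining cases are routine structural rule-matching.
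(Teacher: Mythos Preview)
Your proposal is correct and follows essentially the same approach as the paper's proof: induction on $\sz{\qry}$ along the derivation of $\vgen{\varx}{\qry}{\apreds}$, with the case $\neg\exists\vary.\,\exqryy$ dispatched as vacuous via Lemma~\ref{lem:vgen-neg-pos} combined with the bound-variable hypothesis, and all remaining cases matched structurally between Figures~\ref{fig:gen_con_vgt} and~\ref{fig:gen_con}. Your write-up is in fact more explicit than the paper's about the freeness side conditions needed to invoke Lemma~\ref{lem:vgen-neg-pos}, which is a genuine (if minor) bookkeeping point the paper glosses over.
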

\begin{proof}
This is proved by induction on the query $\qry$
using the measure $\sz{\qry}$ on queries defined in Figure~\ref{fig:size_measure},
which decreases in every case of the definition in Figure~\ref{fig:gen_con_vgt}.

Lemma~\ref{lem:vgen-neg-pos} and
the assumption that $\xgen{\vary}{\exqryy}$ holds for the bound variable $\vary$
in every subquery $\exists \vary.\,\exqryy$ of $\qry$ imply
that $\xgen{\varx}{\qry}$ cannot be derived
using the rule $\xgen{\varx}{\neg\exists \vary.\,\exqryy}$, i.e.,
$\qry$ cannot be of the form $\neg\exists \vary.\,\exqryy$.
Every other case in the definition of $\xgen{\varx}{\qry}$
has a corresponding case in the definition of $\gen{\varx}{\qry}$.
\end{proof}
\begin{lemma}\label{lem:vgen-gen}
Let $\qry$ be an \emph{allowed} query, i.e.,
$\xgen{\varx}{\qry}$ holds for every free variable $\varx\in\fv{\qry}$ and
$\xgen{\vary}{\exqryy}$ holds for the bound variable $\vary$ in every subquery
$\exists \vary.\,\exqryy$ of $\qry$.
Then $\qry$ is a safe-range query, i.e.,
$\gen{\varx}{\qry}$ holds for every free variable $\varx\in\fv{\qry}$ and
$\gen{\vary}{\exqryy}$ holds for the bound variable $\vary$ in every subquery
$\exists \vary.\,\exqryy$ of $\qry$.
\end{lemma}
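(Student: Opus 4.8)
\textbf{Proof proposal for Lemma~\ref{lem:vgen-gen}.}

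The plan is to derive both halves of the conclusion directly from Lemma~\ref{lem:vgen-gen-one}, which already does the real work of transferring a VGT-style generation derivation $\xgen{\varx}{\qry}$ into the generation relation $\gen{\varx}{\qry}$ of Figure~\ref{fig:gen_con}, provided the bound-variable side conditions hold. So the argument is essentially bookkeeping: we need to check that the hypotheses of Lemma~\ref{lem:vgen-gen-one} are met for each of the subqueries we care about.

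First, for the free variables: let $\varx\in\fv{\qry}$ be arbitrary. Since $\qry$ is allowed, $\xgen{\varx}{\qry}$ holds, and moreover $\xgen{\vary}{\exqryy}$ holds for the bound variable $\vary$ in every subquery $\exists\vary.\,\exqryy$ of $\qry$. These are exactly the two hypotheses of Lemma~\ref{lem:vgen-gen-one}, so we conclude $\gen{\varx}{\qry}$, hence $\qry$ has range-restricted free variables.

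Second, for the bound variables: consider any subquery $\exists\vary.\,\exqryy\sqsubseteq\qry$. We want $\gen{\vary}{\exqryy}$. We would like to apply Lemma~\ref{lem:vgen-gen-one} with $\exqryy$ in place of $\qry$ and $\vary$ in place of $\varx$. For this we need (i) $\vary\in\fv{\exqryy}$, (ii) $\xgen{\vary}{\exqryy}$, and (iii) $\xgen{\varz}{\exqryz}$ for the bound variable $\varz$ in every subquery $\exists\varz.\,\exqryz$ of $\exqryy$. Part (ii) is part of the allowedness hypothesis applied to the subquery $\exists\vary.\,\exqryy$ of $\qry$; note (ii) also entails (i), since $\xgen{\vary}{\exqryy}$ can only be derived when $\vary$ is free in $\exqryy$ (every rule in Figure~\ref{fig:gen_con_vgt} producing $\xgen$ either requires $\free{\varx}{\qry}$ at the base case or propagates freeness upward). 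Part (iii) follows because every subquery of $\exqryy$ is also a subquery of $\qry$, so allowedness of $\qry$ supplies $\xgen{\varz}{\exqryz}$ for each such inner existential. Thus Lemma~\ref{lem:vgen-gen-one} yields $\gen{\vary}{\exqryy}$, as required. Combining the two parts, $\qry$ is safe range by Definition~\ref{def:st}.

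I do not anticipate a genuine obstacle here: the content has been absorbed into Lemma~\ref{lem:vgen-gen-one}, and this proof is purely a matter of feeding it the right instances and tracking the ``for every subquery'' quantifier carefully (in particular noticing that subqueries of $\exqryy$ are subqueries of $\qry$, so the allowedness side condition is inherited). The only mild subtlety worth stating explicitly is the implication (ii)$\Rightarrow$(i) — that $\xgen{\vary}{\exqryy}$ forces $\vary\in\fv{\exqryy}$ — which one could also just read off directly from Definition~\ref{def:eval}'s phrasing if one prefers not to re-derive it.
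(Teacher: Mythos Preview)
Your proposal is correct and follows exactly the paper's approach: the paper's proof simply states that one applies Lemma~\ref{lem:vgen-gen-one} to every free variable of $\qry$ and to the bound variable $\vary$ in every subquery $\exists\vary.\,\exqryy$ of $\qry$. Your write-up spells out the hypothesis-checking (in particular the inheritance of the bound-variable side condition to subqueries and the fact that $\xgen{\vary}{\exqryy}$ forces $\vary\in\fv{\exqryy}$) that the paper leaves implicit, but the argument is the same.
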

\begin{proof}
The lemma is proved by applying Lemma~\ref{lem:vgen-gen-one}
to every free variable of $\qry$
and to the bound variable $\vary$ in every subquery of $\qry$
of the form $\exists \vary.\,\exqryy$.
\end{proof}
Lemma~\ref{lem:vgen-gen} shows that every allowed query is safe range.
But there exist safe-range queries that are not allowed,
e.g., $\predbrand(\varx)\land \varx\approx \vary$.

\section{Existential Normal Form}\label{sec:nf}

Recall that our translation uses a standard approach to obtain 
RANF queries from safe-range queries via SRNF~\cite{DBLP:books/aw/AbiteboulHV95}.
In this section we introduce existential normal form (ENF), an alternative 
normal form to SRNF that can also be used to translate safe-range queries
to RANF queries, and we discuss why we opt for using SRNF instead.

\begin{figure}[t]
\begin{tikzpicture}
\tikzset{pt/.style={draw,rounded corners=3pt,inner sep=5pt}, link/.style={-latex,line width=0.1mm,draw=black} }
\node[pt] (sr) at (0,2*\x) {Safe-range \RC{}};
\node[pt] (srnf) at (-1*\x,1*\x) {SRNF};
\node[pt] (enf) at (1*\x,1*\x) {ENF};
\node[pt] (ranf) at (0,0) {\safeprop};

\draw[link] (sr)-- node[label=left:{$\pushnot{\qry}$}]{} (srnf);
\draw[link] (srnf)-- node[label=left:{$\allowtosafe{\qrysrnf}{\emptyset}$}]{} (ranf);
\draw[link] (sr)-- node[label=right:{$\enf{\qry}$}]{} (enf);
\draw[link] (enf)-- node[label=right:{$\allowtosafe{\qryenf}{\emptyset}$}]{} (ranf);

\path [link,dashed] (sr.185) .. controls ($(sr.185)-(35mm,10mm)$) and ($(ranf.185)-(35mm,0mm)$) .. ([yshift=0mm]ranf.185);
   \node at ([yshift=-3mm, xshift=-15mm]ranf.180) {
    $\allowtosafeqry{\qry}$};
\end{tikzpicture}
\caption{Alternative translation from safe-range RC to RANF query via ENF.}\label{fig:nf}
\end{figure}

Figure~\ref{fig:nf} shows an overview of the \RC fragments and query 
normal forms (nodes) and the functions we use to translate between 
them (edges).
The dashed edge shows the translation of a safe-range query to \safeprop{}
we opt for in this article. It is the composition of the two translations
from safe-range \RC{} to SRNF and from SRNF to \safeprop{}, respectively.
In the rest of this section we introduce ENF and the corresponding translations 
to RANF.

ENF was introduced by~Van Gelder and Topor~\cite{DBLP:journals/tods/GelderT91}
to translate an allowed query~\cite{DBLP:journals/tods/GelderT91} into an equivalent RANF query.
Given a safe-range query in ENF,
the rules $\rulea\text{--}\rulec$ from \S\ref{sec:ranf} can be applied
to obtain an equivalent RANF query~\cite[Lemma~7.8]{DBLP:conf/pods/Escobar-MolanoHJ93}.
We remark that the rules $\rulea\text{--}\rulec$
are not sufficient to yield an equivalent RANF query
for the original definition of ENF~\cite{DBLP:journals/tods/GelderT91}.
This issue has been identified and fixed by
\citeauthorEMHJ~\cite{DBLP:conf/pods/Escobar-MolanoHJ93}.
Unlike SRNF, a query in ENF can have a subquery of the form
$\neg(\qrya\land\qryb)$, but no subquery of the form
$\neg\qrya\lor\qryb$ or $\qrya\lor\neg\qryb$.
A function $\enf{\qry}$ that yields an ENF query
equivalent to $\qry$ can be defined in terms of subquery rewriting
using the rules in~\cite[Figure~2]{DBLP:conf/pods/Escobar-MolanoHJ93}.

Analogously to \cite[Lemma~7.4]{DBLP:conf/pods/Escobar-MolanoHJ93},
if a query $\qry$ is safe range, then $\enf{\qry}$ is also safe range.
Next we prove the following lemma that we could use as a precondition for
translating safe-range queries in ENF to queries in RANF.
\begin{lemma}\label{lem:enf-to-ranf}
  Let $\qryenf$ be a query in ENF.
  Then $\gen{\varx}{\neg \subqry}$ does not hold for any variable~$\varx$
  and subquery $\neg\subqry$ of $\qryenf$.
\end{lemma}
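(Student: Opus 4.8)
\textbf{Proof plan for Lemma~\ref{lem:enf-to-ranf}.}
The statement is the exact ENF analogue of Lemma~\ref{lem:srnf-to-ranf}, so the plan is to mimic that proof. First I would recall from Figure~\ref{fig:gen_con} that $\gen{\varx}{\neg \subqry}$ can hold only when the three $\mathsf{gen}$-rules whose left-hand side is a negation apply, i.e.\ when $\neg\subqry$ has one of the syntactic shapes $\neg\neg \qry$, $\neg(\qrya\lor \qryb)$, or $\neg(\qrya\land \qryb)$: every other rule for the generated relation has a non-negated query on the left, and there is no rule producing $\gen{\varx}{\neg\subqry}$ when $\subqry$ is an atomic predicate, an equality, or an existentially quantified query. (Here I use that $\mathsf{gen}$ and $\mathsf{con}$ in Figure~\ref{fig:gen_con} are mutually recursive only through those specific cases, so the reachability of $\gen{\varx}{\neg\subqry}$ really is confined to those three head shapes, just as in the SRNF argument.)

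Second I would invoke the definition of ENF given just above the lemma: a query in ENF may have a subquery of the form $\neg(\qrya\land\qryb)$, but \emph{no} subquery of the form $\neg\qrya\lor\qryb$ or $\qrya\lor\neg\qryb$, and — reading off the role ENF plays, being a normal form in which negations have been pushed inward except over conjunctions — no subquery of the form $\neg\neg\qry$ (double negations are eliminated) and no subquery of the form $\neg(\qrya\lor\qryb)$ (which would be rewritten to $\neg\qrya\land\neg\qryb$). Thus of the three dangerous shapes, $\neg\neg\qry$ and $\neg(\qrya\lor\qryb)$ simply cannot occur in $\qryenf$.

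The one shape that \emph{can} occur in an ENF query is $\neg(\qrya\land\qryb)$, so the crux — and the only real difference from the SRNF case, where all three shapes are excluded outright — is to rule out $\gen{\varx}{\neg(\qrya\land\qryb)}$ for such subqueries. For this I would unfold the relevant rule: $\tgen{\varx}{\neg(\qrya\land\qryb)}{\qpreds}$ holds only if $\tgen{\varx}{(\neg\qrya)\lor(\neg\qryb)}{\qpreds}$, which by the disjunction rule requires $\tgen{\varx}{\neg\qrya}{\qpredsa}$ \emph{and} $\tgen{\varx}{\neg\qryb}{\qpredsb}$ — that is, $\gen{\varx}{\neg\qrya}$ and $\gen{\varx}{\neg\qryb}$ must both hold. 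Now I would argue by structural induction on the query (or well-founded recursion on $\sz{\cdot}$, Figure~\ref{fig:size_measure}, which decreases through these rules): it suffices to show that for \emph{every} subquery $\subqryc$ occurring strictly inside an ENF query, $\gen{\varx}{\neg\subqryc}$ fails. Given the head-shape analysis, $\gen{\varx}{\neg\subqryc}$ could only hold if $\neg\subqryc$ is $\neg\neg\qry$, $\neg(\qrya\lor\qryb)$, or $\neg(\qrya\land\qryb)$; the first two are excluded by ENF's shape constraints applied to the subquery $\neg\subqryc$ as well (ENF is defined by a global condition on all subqueries), and the third reduces, as above, to $\gen{\varx}{\neg\qrya}$, whose negand $\qrya$ is a strictly smaller subquery of the ENF query — contradicting the induction hypothesis. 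Hence no subquery $\neg\subqry$ of $\qryenf$ satisfies $\gen{\varx}{\neg\subqry}$. The main obstacle is simply getting the induction set up cleanly so that the $\neg(\qrya\land\qryb)$ case hands back a strictly smaller instance of the \emph{same} claim; once that is arranged the proof is as short as the SRNF one.
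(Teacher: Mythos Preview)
Your strengthened induction hypothesis --- ``for every subquery $\subqryc$ of the ENF query, $\gen{\varx}{\neg\subqryc}$ fails'' --- is \emph{false}, and this breaks the argument. Take $\qryenf = \predsymb(\varx)\land\neg\predsymb(\varx)$, which is in ENF. The subquery $\subqryc = \neg\predsymb(\varx)$ gives $\gen{\varx}{\neg\subqryc}=\gen{\varx}{\neg\neg\predsymb(\varx)}$, which \emph{does} hold via the double-negation rule in Figure~\ref{fig:gen_con}. The error creeps in where you write ``the first two are excluded by ENF's shape constraints applied to the subquery $\neg\subqryc$'': only $\subqryc$ is a subquery of $\qryenf$, not $\neg\subqryc$, so ENF's prohibitions on $\neg\neg\qry$ and $\neg(\qrya\lor\qryb)$ tell you nothing about the shape of $\neg\subqryc$. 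Concretely, after unfolding $\gen{\varx}{\neg(\qrya\land\qryb)}$ to $\gen{\varx}{\neg\qrya}$, the conjunct $\qrya$ may well be a negation or a disjunction --- neither is excluded just because $\qrya$ is a subquery of an ENF query.

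The paper's proof avoids strengthening and instead inducts on $\sz{\qryenf}$ over ENF queries. The missing idea is a \emph{two-step} drill-down that uses ENF constraints at each level to control the shape: starting from the negated conjunction $\neg\subqry$, each flattened conjunct $\qrya\in\flconj{\subqry}$ is not a negation (this uses an ENF property about conjuncts under a negated conjunction, beyond the partial description in the text) and not a conjunction (by flattening), so it must be a disjunction; then each flattened disjunct $\qryb\in\fldisj{\qrya}$ is not a negation (ENF forbids negated disjuncts) and not a disjunction (by flattening), so it must be a conjunction. Now the crucial observation is that $\neg\qryb$ is itself an ENF query (because $\qryb$ is a conjunction sitting inside a disjunction inside $\qryenf$), and $\sz{\neg\qryb}<\sz{\qryenf}$, so the original lemma --- not a strengthened version --- applies to $\neg\qryb$ with subquery $\neg\qryb$, contradicting $\gen{\varx}{\neg\qryb}$. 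Your one-step reduction does not reach a point where the induction hypothesis can be invoked.
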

\begin{proof}
    \looseness=-1
  Assume that $\gen{\varx}{\neg \subqry}$ holds for a variable $\varx$
  in a subquery $\neg \subqry$ of $\qryenf$.
  We derive a contradiction by induction on $\sz{\qryenf}$.
  According to Figure~\ref{fig:gen_con} and by definition of ENF,
  $\gen{\varx}{\neg \subqry}$
  can only hold if $\subqry$ is a conjunction.
  Then $\gen{\varx}{\neg\subqry}$ implies $\gen{\varx}{\neg \qrya}$
  for some query $\qrya\in\flconj{\subqry}$
  that is not a negation (by definition of ENF)
  or conjunction (by definition of $\flconj{\cdot}$), i.e.,
  $\qrya$ is a disjunction (according to Figure~\ref{fig:gen_con}).
  Then $\gen{\varx}{\neg \qrya}$ implies $\gen{\varx}{\neg \qryb}$
  for some query $\qryb\in\fldisj{\qrya}$
  that is not a negation (by definition of ENF)
  or disjunction (by definition of $\fldisj{\cdot}$), i.e.,
  $\qryb$ is a conjunction (according to Figure~\ref{fig:gen_con}).
  Next we observe that $\neg\qryb$ is in ENF
  because $\qryb$ is a subquery of the ENF query $\qryenf$,
  $\qryb$ is a conjunction, and $\qryb$ is a subquery
  of a disjunction ($\qrya$) in $\qryenf$. Moreover,
  $\sz{\neg\qryb}<\sz{\qrya}<\sz{\subqry}<\sz{\qryenf}$.
  This allows us to apply the induction hypothesis to the ENF query $\neg\qryb$
  and its subquery $\neg\qryb$ (note that a query is a subquery of itself)
  and derive that $\gen{\varx}{\neg\qryb}$ does not hold, which is a contradiction.
\end{proof}

Although applying the rules $\rulea\text{--}\rulec$
to $\enf{\qry}$ instead of $\pushnot{\qry}$
may result in a RANF query with fewer subqueries,
the query cost, i.e., the time complexity of query evaluation,
can be arbitrarily larger.
We illustrate this in the following example
that is also included in our artifact~\cite{artifact}.
We thus opt for using SRNF instead of ENF for translating 
safe-range queries into RANF.

\begin{example}\label{ex:enf}
The safe-range query $\qryenf\eq \predexb(\varx, \vary)\land\neg (\predexc(\varx)\land \predexc(\vary))$ is in ENF and RANF,
but not SRNF.
Applying the rule $\rulea$ to $\pushnot{\qryenf}$ yields
the RANF query $\qrysrnf\eq (\predexb(\varx, \vary)\land\neg \predexc(\varx))\lor(\predexb(\varx, \vary)\land\neg \predexc(\vary))$
that is equivalent to $\qryenf$.
The costs of the two queries over a structure $\str$ are 
$ \cost{\qryenf}{\str}=2\cdot\card{\sattup{\predexb(\varx, \vary)}} + \allowbreak
           \card{\sattup{\predexc(\varx)}} + \allowbreak
           \card{\sattup{\predexc(\vary)}} + \allowbreak
           2\cdot\card{\sattup{\predexc(\varx)\land \predexc(\vary)}} + \allowbreak
           2\cdot\card{\sattup{\qryenf}}$  
and
$
\cost{\qrysrnf}{\str}=2\cdot\card{\sattup{\predexb(\varx, \vary)}} + \allowbreak
          \card{\sattup{\predexc(\varx)}} + \allowbreak
		  2\cdot\card{\sattup{\predexb(\varx, \vary)}} + \allowbreak
          \card{\sattup{\predexc(\vary)}} + \allowbreak
          2\cdot\card{\sattup{\predexb(\varx, \vary)\land\neg \predexc(\varx)}} + \allowbreak
          2\cdot\card{\sattup{\predexb(\varx, \vary)\land \neg \predexc(\vary)}} + \allowbreak
          2\cdot\card{\sattup{\qrysrnf}}$, respectively.
Note that the cost of $\qryenf$ can be arbitrarily
larger if $\predexc(\varx)\land \predexc(\vary)$ evaluates to a large intermediate result,
i.e., $\card{\sattup{\predexc(\varx)\land \predexc(\vary)}}\gg\card{\sattup{\predexb(\varx, \vary)}}$.
In contrast, the cost of $\qrysrnf$
can only be larger by a constant factor.
\end{example}

\section{Queries used in the evaluation}\label{sec:queries}

\begin{figure}
{
\footnotesize
\begin{tabular}{@{}r@{}l}
  $Q_{1}\eq\,$ & $ \neg (\exists x_2.\, \neg (\exists x_3.\, ((\mathsf{P}_3(x_1, x_0, x_3)) \land (\neg (\exists x_4.\, \mathsf{P}_4(x_1, x_3, x_4, x_2)))) \land (\exists x_4.\, \mathsf{P}_2(x_1, x_4)))) \lor (\mathsf{Q}_2(x_1, x_0))$ \\
  $Q_{2}\eq\,$ & $ \neg (\exists x_2.\, \neg (\exists x_3.\, (\neg (\mathsf{P}_4(x_1, x_0, x_2, x_3))) \land (\mathsf{P}_3(x_1, x_3, x_0)))) \land (\exists x_2.\, \exists x_3.\, (x_1 = x_2) \land (\mathsf{Q}_3(x_0, x_2, x_3)))$ \\
  $Q_{3}\eq\,$ & $ (\exists x_2.\, \neg (\exists x_3.\, ((\mathsf{P}_3(x_1, x_0, x_3)) {\land} (\mathsf{P}_1(x_0))) {\land} ((x_0 = x_1) {\land} (\neg (\mathsf{P}_4(x_1, x_2, x_3, x_0)))))){\land}(\exists x_2.\, \mathsf{Q}_3(x_1, x_2, x_0))$ \\
  $Q_{4}\eq\,$ & $ (\exists x_2.\, \mathsf{P}_3(x_1, x_2, x_0)) \land (\neg (x_0 = x_1)) \land (\neg (\exists x_2.\, \neg (\exists x_3.\, (\mathsf{P}_2(x_0, x_3)) \land (\neg (\mathsf{P}_4(x_1, x_3, x_2, x_0))))))$ \\
  $Q_{5}\eq\,$ & $ (\exists x_2.\, \exists x_3.\, \neg (\exists x_4.\, (\exists x_5.\, \mathsf{P}_3(x_0, x_5, x_4)) \land (\neg (\mathsf{P}_4(x_0, x_4, x_2, x_3))))) \land ((\neg (x_0 = x_1)) \land (\mathsf{P}_2(x_0, x_1)))$ \\
  $Q_{6}\eq\,$ & $ (\exists x_2.\, \neg (\exists x_3.\, ((\neg (\exists x_4.\, \mathsf{P}_3(x_0, x_3, x_4))) \land (\mathsf{Q}_3(x_0, x_1, x_3))) \land (\neg (\mathsf{R}_3(x_0, x_2, x_1))))) \land (\exists x_2.\, \mathsf{S}_3(x_0, x_1, x_2))$ \\
  $Q_{7}\eq\,$ & $ (\exists x_2.\, (\mathsf{P}_3(x_0, x_2, x_1)) \land (x_0 = x_2)) \land (\exists x_2.\, \neg (\exists x_3.\, (\neg (\exists x_4.\, \exists x_5.\, \mathsf{P}_4(x_0, x_2, x_5, x_4))) \land (\mathsf{P}_2(x_1, x_3))))$ \\
  $Q_{8}\eq\,$ & $ \neg (\exists x_2.\, \neg (\exists x_3.\, (\exists x_4.\, (\mathsf{P}_3(x_0, x_3, x_1)) {\land} ((\mathsf{P}_4(x_0, x_3, x_1, x_4)) {\land} (x_3 = x_4))) {\land} (\neg (\exists x_4.\, \mathsf{Q}_4(x_0, x_4, x_3, x_2)))))$ \\
  $Q_{9}\eq\,$ & $ \mathsf{P}_2(x_1, x_0) \land (\exists x_2.\, \neg (\exists x_3.\, ((\neg (\mathsf{P}_4(x_1, x_3, x_2, x_0))) \land (\mathsf{P}_3(x_0, x_3, x_1))) \land (\neg (\mathsf{Q}_2(x_0, x_3))))) \land (x_0 = x_1)$ \\
  $Q_{10}\eq\,$ & $ (\exists x_2.\, \mathsf{P}_3(x_1, x_2, x_0)) \lor (\neg (\exists x_2.\, \neg (\exists x_3.\, (\neg (\mathsf{P}_2(x_1, x_2))) \land (\exists x_4.\, (\mathsf{P}_1(x_0)) \land (\mathsf{P}_4(x_0, x_3, x_1, x_4))))))$ \\
  $Q^I_1\eq\,$ & $\mathsf{P}_{1}(x_0) \land \neg (\exists x_2.\, \exists x_3.\, (\mathsf{P}_{3}(x_0, x_2, x_3) \land \neg (x_0 = x_1))) $\\
  $Q^I_2\eq\,$ & $\mathsf{P}_{1}(x_0) \land \neg (\exists x_2.\, \exists x_3.\, (\mathsf{P}_{3}(x_0, x_2, x_3) \land \neg \mathsf{P}_{2}(x_1, x_3)))  $\\
  $Q^I_3\eq\,$ & $\mathsf{P}_{1}(x_0) \land \neg (\exists x_2.\, \exists x_3.\,
  (\mathsf{P}_{3}(x_0, x_3, x_2) \land \neg (x_1 = x_2))) $\\
  $Q^I_4\eq\,$ & $\mathsf{P}_{1}(x_1) \land \neg (\exists x_2.\, \exists x_3.\, (\mathsf{P}_{3}(x_1, x_3, x_2) \land \neg \mathsf{P}_{4}(x_1, x_0, x_3, x_2))) $\\
  $Q^I_5\eq\,$ & $\mathsf{P}_{1}(x_0) \land \neg (\exists x_2.\, \exists x_3.\, (\mathsf{P}_{3}(x_0, x_2, x_3) \land \neg \mathsf{Q}_{3}(x_1, x_2, x_3))) $
\end{tabular}
}
\caption{Randomly generated queries.}
\label{fig:queries}
\end{figure}

Figure~\ref{fig:queries} shows the randomly generated queries that we use in our evaluation. The index of each predicate indicates the predicate's arity.

\vspace*{4cm}
~

\end{document}